\documentclass[preprint,12pt,nopreprintline]{elsarticle}

\usepackage{amssymb}
\usepackage{amsmath}
\usepackage{amsthm}
\usepackage{mathtools}
\usepackage{mathrsfs}
\usepackage{tcolorbox}
\usepackage[cm]{fullpage}
\usepackage{hyperref}
\usepackage{orcidlink}
\newtheorem{theorem}{Theorem}

\begin{document}

\begin{frontmatter}

\title{Entropy and Holography through Adjunctions: A Bicategorical Perspective on Landauer's Principle}

\author[1]{Petr Vlachopulos\corref{cor1}\,\orcidlink{0009-0001-6410-7275}}
\ead{pvlachopulos@gmail.com}

\cortext[cor1]{Corresponding author}

\ead[url]{https://www.researchgate.net/profile/Petr-Vlachopulos?ev=hdr_xprf}

\affiliation[1]{organization={Department of Mathematics and Statistics, Masaryk University, Faculty of Science},
                addressline={Kotlarska 2}, 
                city={Brno},
                postcode={611 37}, 
                country={Czech Republic}}

\begin{abstract}

We develop a bicategorical formulation of entropy and Landauer's principle in which deterministic monotone maps between logical and thermodynamic state spaces are embedded into a broader theory of open, many-to-many feasibility interfaces. Entropy-ordered state spaces form the objects of a locally posetal bicategory, Boolean profunctors encode feasible implementations, and refinements define 2-morphisms. Although this bicategory admits genuinely non-representable interfaces, every exact Landauer adjunction lies in its representable sector and is generated by a unique monotone implementation map. Such an adjunction also induces an idempotent boundary monad, together with a dual idempotent bulk comonad. Our main result identifies the Eilenberg-Moore object of the boundary monad with a universal boundary-stable quotient determined by the bulk-mediated round-trip relation. This provides a canonical reconstruction principle: the quotient captures exactly those boundary distinctions that remain observable after implementation in the bulk and subsequent readout. In the representable case, two boundary states are identified precisely when they have the same image under the implementation map. Therefore, the result turns irreversible information loss into an explicit universal object and provides a compositional framework for studying reconstruction, coarse-graining, and bulk-boundary visibility in open thermodynamic systems. Finally, we outline a potential cost-enriched extension in which interfaces carry dissipation costs and composition minimizes accumulated dissipation over admissible intermediate realizations, opening a route from Boolean feasibility to quantitative thermodynamic implementation.

\end{abstract}

\begin{keyword}

Entropy \sep Thermodynamic Entropy \sep Bicategories \sep Categorification of Entropy \sep Landauer's Principle \sep Landauer Adjunction \sep Holographic Principle \sep Eilenberg-Moore Construction
\end{keyword}

\end{frontmatter}

\newcommand\restr[2]{{
  \left.\kern-\nulldelimiterspace 
  \littletaller
  \right|_{#2} 
  }}

\section{Introduction}

The operational content of the second law of thermodynamics can often be expressed in the form of a constraint on which transformations of thermodynamic states are feasible and on what must be dissipated when information is manipulated.

Landauer's principle made this interconnection explicit within the setup of irreversible computation, i.e. erasing information has inevitable thermodynamic consequences in the form of thermodynamic costs. Such costs are typically described as heat dissipation to the surrounding environment \cite{Landauer1961}.   
In Carathéodory-type formulations, under suitable regularity and integrability assumptions, the second law leads locally to entropy as a state function whose level sets describe the adiabatic leaves of the thermodynamic state space \cite{boyling1972axiomatic}, \cite{frankel2004geometry}. A natural requirement of Carathéodory's approach to the second law is a continuous structure on the phase space \cite{boyling1972axiomatic, lieb2004guide}. This leads us to the central axiomatic observation that entropy can be used as a comparison measure on equilibrium states by transferring an accessibility relation on phase space to the real line \cite{lieb1999physics}.  

In the original paper \cite{kycia2018landauer}, the author adopts this perspective and uses it as the entry point for the categorification process of thermodynamic entropy. Given a state space \(\Gamma\) and an entropy function \(S:\Gamma\longrightarrow\mathbb{R}\), the entropy values determine a total preorder
\begin{equation}
X\preceq_S Y\iff S\left(X\right)\leq S\left(Y\right).
\end{equation}
Distinct states with equal entropy may satisfy both \(X\preceq_SY\) and \(Y\preceq_SX\), so antisymmetry need not hold on \(\Gamma\) itself. Passing to the equivalence relation
\begin{equation}
X\sim_S Y\iff X\preceq_S Y\land Y\preceq_S X,
\end{equation}
produces the poset reflection \(\Gamma/_{\sim S}\). When the Comparison hypothesis \cite{lieb1999physics, lieb2004guide} allows the physical accessibility relation to be completely represented by entropy, this quotient order is the corresponding entropy order.

Additionally, $S$ satisfies additivity and extensivity \cite{kycia2018landauer}. Thus, the entropy function induces a total preorder on the phase space \(\Gamma\), and after quotienting by entropy equivalence, the resulting poset inherits the corresponding total order represented in \(\mathbb{R}\).

This is where the language of category theory comes into play. The conceptual shift from numerical values (collection of states) to order-theoretic structures transforms entropy into morphisms between specific categories, and reversibility/irreversibility becomes a property of order-preserving maps acting on the phase space of thermodynamic states. 

In the original approach from \cite{kycia2018landauer}, we can see that category theory is not invoked to re-prove Landauer's principle via abstraction alone, rather, it isolates the universal properties of the principle and offers a more conceptual and general treatment of thermodynamic entropy. In this spirit, the paper \cite{kycia2018landauer} frames the classical logic-thermodynamics table underlying Landauer's principle as evidence of an adjoint relationship. Therefore, the author formulates the bridge theorem (Theorem 4 in \cite{kycia2018landauer}), interconnecting logical and thermodynamic (ir)reversibility properties, resulting in the Galois connection.

The key mathematical structure in \cite{kycia2018landauer} is the Landauer connection, realized as minimal entropy-preserving coupling. Let two entropy systems $G_1:\left(\Gamma_1,\preceq_1\right)$ and $G_2:=\left(\Gamma_2,\preceq_2\right)$ be related by functors (monotone maps) $F:\Gamma_1\longrightarrow\Gamma_2$ and $G:\Gamma_2\longrightarrow\Gamma_1$ ($F\dashv G$), satisfying the Galois condition
\begin{equation}\label{eq.2}
F\left(c\right)\preceq_2d\iff c\preceq_1 G\left(d\right),
\end{equation}
for all $c\in G_1$ and $d\in G_2$.

In \cite{kycia2018landauer}, this is interpreted as the most general relation that preserves the entropy-induced ordering. From a physical standpoint, one immediately observes that the composite $GF$ acts as a closure operator on $\Gamma_1$. Hence, it sends a state to a "higher or equal" state in the entropy order, encoding the fact that a round-trip through the implementing system cannot increase recoverable information on the original side of the system. 

Considering the canonical example (Figure 1 in \cite{kycia2018landauer}), irreversible logical operations induce irreversible physical processes accompanied by heat emission, with the quantitative lower bound depending on physical specifics, while the categorical content corresponds to the existence of the adjunction itself. 

The contribution of the present paper is not the construction of a new abstract profunctor bicategory, since the bicategory depiction of the Boolean profunctors between posets is standard. Rather, the contribution is the identification of this standard bicategorical structure as the natural compositional semantics of entropy-constrained implementation.

More precisely, we make three contributions. First, we enlarge the ambient morphism theory from deterministic monotone maps to Boolean profunctors, thereby allowing genuinely open and non-representable many-to-many feasibility interfaces. Second, we identify exact Landauer adjunctions with the representable/Cauchy sector of the Boolean profunctor bicategory. Categorically, the underlying representability phenomenon is a standard consequence of the Cauchy completeness of posets in the Boolean distributor setup. Nonetheless, our contribution is to isolate its consequences for Landauer implementation and to show how it constrains the ostensibly open thermodynamic interface theory.    

Third, starting from the boundary round-trip interface \(T:=\Psi\circ\Phi\), we identify the preorder generated by \(T\) and construct its poset reflection \(B_T:=B/{\sim_T}\). Then, we prove that this explicitly constructed quotient realizes the Eilenberg-Moore object of the idempotent Landauer monad \(T\). While the abstract theory of idempotent monads, their splittings, and Eilenberg-Moore objects is classical (see \cite{Street1972,stubbe2005categorical}), the contribution here is the identification of the resulting universal object with a concrete Landauer-induced quotient of the boundary information. This provides a thermodynamic interpretation for the universal construction as \(B_T\) records precisely those distinctions of the boundary description that remain visible after bulk-mediated implementation and readout. 

\section{Main results}

While the approach in \cite{kycia2018landauer} already captures a deep universality in the categorification of thermodynamic entropy as an adjunction between entropy-ordered phase spaces, it remains, by design, a thin categorification process. To be clear, a thin category (posetal category) is a category in which, given a pair of objects $a, b$ and any two morphisms $f,g:a\longrightarrow b$, the morphisms are equalities $f=g$. Within this thin-categorical setup, morphisms are merely order relations, and the implementing maps are deterministic monotone functions.

However, a logical macrostate may admit many physical realizations, since a physical state may support multiple compatible logical descriptions, and interaction with an environment is not an afterthought but part of the operational semantics. Hence, the notion of implementation is intrinsically \textit{open} and \textit{many-to-many}. This leads us to our first conceptual extension. 

\vspace{0.2cm}
\hspace{-0.6cm}\underline{\textbf{Conceptual Extension I. - open entropy interfaces}}

\vspace{0.2cm}
We shall directly replace deterministic monotone maps by \textit{open couplings/interfaces} encoded as profunctorial relations and organize them into a bicategory in which:
\begin{itemize}
    \item objects are entropy systems (posets/thin categories)
    \item $1$ - morphisms are open interfaces encoding the feasibility of realizations of physical systems
    \item $2$ - morphisms are refinements relating open interfaces by inclusion
\end{itemize}

Within this setup, the Landauer connections from \cite{kycia2018landauer} embed as a representable special case, while the bicategorical language makes it natural to interpret the logical$\hspace{0.1cm}\longleftrightarrow\hspace{0.1cm}$ physical correspondence as a boundary $\longleftrightarrow$ bulk correspondence. Here, the boundary data (information-theoretic description) is coupled to bulk data (thermodynamic realization) by an adjunction of interfaces, whose induced boundary monad generalizes the closure operator in \cite{kycia2018landauer}.

From a technical point of view, concepts such as realizations or open interfaces will be precisely defined in the following paragraphs.

\vspace{0.3cm}
\hspace{-0.6cm}\underline{\textbf{Conceptual Extension II. - Landauer adjunctions and boundary stable reconstructions}}

\vspace{0.2cm}
The second conceptual extension consists of the holographic perspective in categorical form. In higher-categorical implementation of the holographic principle, the guiding theme is that the bulk structure is reconstructible from the boundary structure via a universal construction. In our setting, the corresponding universal object is not a geometric bulk spacetime but the bulk sector visible through a given coupling. Concretely, a bulk-boundary Landauer adjunction (on the boundary) induces a canonical monad/closure operator. This means that the holographic reconstruction problem transforms into the following question:

\vspace{0.2cm}
\textit{To what extent does the boundary, equipped with this induced monad, determine the bulk content accessible through the interface?}

\vspace{0.2cm}
This can be formalized via the Eilenberg-Moore type constructions for the induced monad which, in the poset case, reduces to an explicit fixed-point (closed state) reconstruction that parallels the discussion about the closure operator in \cite{kycia2018landauer}. Technically speaking, a pair of such interfaces \(\Phi\rightsquigarrow D\) and \(\Psi\rightsquigarrow B\) is required to form an adjunction in the locally posetal bicategory of open entropy systems. This recovers the usual Galois/Landauer connection in the representable case. A further result established below shows that exact adjunctions are, in fact, rigid, i.e. every exact adjunction of Boolean profunctors between posets is representable. Therefore, genuinely non-representable interfaces occur in the ambient open theory, but not as the pillars of an exact Landauer adjunction.

\vspace{0.3cm}
\hspace{-0.6cm}\underline{\textbf{Conceptual Extension III. - quantitative enrichment}}

\vspace{0.2cm}
Finally, we formulate a cost-enriched version in which feasibility is refined by dissipation cost, and composition selects an optimal intermediate bulk mediator. This indicates how Boolean feasibility theory can potentially be extended toward a quantitative theory of entropy production.

\vspace{0.2cm}
Let us now proceed to the following subsection, where we shall build the bicategorical picture of open entropy systems.

\subsection{\textbf{The bicategory of open entropy systems}}

Accordingly, the upcoming Theorems~\ref{thm.1} and~\ref{thm.2} below should be read as establishing the ambient semantics rather than as the principal novelty of the paper. Their role is to fix the variance conventions and to justify the physical interpretation of profunctorial composition as gluing thermodynamic implementations while hiding intermediate degrees of freedom. The new Landauer-specific content begins when adjunctions in this bicategory are interpreted as implementation/abstraction pairs and when the induced boundary monad is shown to select a universal stable sector. In essence, the categorical novelty of the present paper should not be understood as the invention of a new profunctor bicategory. Rather, the contribution is the systematic specialization of this established machinery to entropy-ordered state spaces, together with its interpretation as a language for open, many-to-many thermodynamic implementations of logical boundary data by physical bulk states.

First of all, let us recall useful facts from \cite{kycia2018landauer} in order to build the correct notion of an entropy system. In \cite{kycia2018landauer}, the author works with a state space $\Gamma$ (equilibrium of macrostates) optionally equipped with a scaling action by $\left(\mathbb{R}_{+},\cdot,1\right)$. Thereafter, the author uses entropy to transfer an accessibility/ordering relation on $\Gamma$ to $\mathbb{R}$. This leads to the definition of the entropy function.

\newpage
\newtheorem{definition}{Definition}
\begin{definition}\cite{kycia2018landauer}
\textit{Let $\Gamma$ be a state space equipped with:}
\begin{itemize}
    \item a composition operation $\left(X,Y\right)\in\Gamma_1\times\Gamma_2$ (compound system),
    \item a scaling action $\lambda X$ for $\lambda\in\mathbb{R}_+$
\end{itemize}
\textit{Then an entropy function is a map}
\begin{equation}\label{eq.3}
S:\Gamma\longrightarrow\mathbb{R},
\end{equation}
\textit{satisfying:}
\begin{itemize}
    \item\textbf{Monotonicity w.r.t. adiabatic accessibility:}
    \begin{equation}\label{eq.4}
    X\prec Y\iff S\left(X\right)\leq S\left(Y\right).
    \end{equation}
    \item\textbf{Additivity for compound systems:}
    \begin{equation}\label{eq.5}
    S\left(X,Y\right)=S\left(X\right)+S\left(Y\right).
    \end{equation}
    \item\textbf{Extensivity under scaling:}
    \begin{equation}\label{eq.6}
    S\left(\lambda X\right)=\lambda S\left(X\right),
    \end{equation}
    \textit{for $\lambda>0$.}
\end{itemize}
\end{definition}

This conception is then explicitly used in a reverse direction as a construction principle, i.e. assume that such $S$ exists on the relevant domain (Comparison hypothesis), and define the order via $S$. 

If one wants to generalize the previous consideration to the bicategorical setup, the underlying object must at least be a poset or a preorder, because the $1$-cells are order-compatible interfaces. In other words, we shall consider the following

\begin{definition}\cite{kycia2018landauer}
\textit{An entropy object (entropy system) is a pair $\left(\Gamma, S\right)$, where $S:\Gamma\longrightarrow\mathbb{R}$ is a function such that the relation}
\begin{equation}\label{eq.7}
X\preceq_S Y\iff S\left(X\right)\leq S\left(Y\right),
\end{equation}
is the intended entropy accessibility order.
\end{definition}

One can immediately notice that if the Comparison hypothesis holds (assumption for the main construction in \cite{kycia2018landauer}), then $\preceq_S$ is a total preorder on $\Gamma$. Without the Comparison hypothesis, one may restrict to a comparable domain or treat $\preceq_S$ as a preorder on the specific comparable subset. 

\begin{definition}[\textbf{Entropy system (objects)}]
\textit{An entropy-ordered system is a poset \(P=\left(P,\leq_P\right)\), interpreted as an accessibility order, optionally equipped with an entropy function}
\begin{equation}
S:P\longrightarrow\mathbb{R},
\end{equation}
\textit{that is monotone as}
\begin{equation}
p\leq_P p^{\prime}\implies S\left(p\right)\leq S\left(p^{\prime}\right).
\end{equation}
\end{definition}
When the Comparison hypothesis holds in the domain under consideration (see, \cite{lieb1999physics, lieb2004guide}), the order may be represented completely by entropy
\begin{equation}
\nonumber
p\leq_P p^{\prime}\iff S\left(p\right)\leq S\left(p^{\prime}\right),
\end{equation}
up to the usual identification of mutually accessible states. Nonetheless, in the general bicategorical framework below, we retain the underlying partial order and do not require every pair of states to be comparable. Throughout, all posets are assumed to be small or alternatively, contained in a fixed Grothendieck universe.  

In relation to previous considerations, we work with entropy systems only up to categorical equivalence by passing from  entropy preorders $\left(\Gamma, \preceq_S\right)$ to their poset reflection $\Gamma/{\sim_S}$, where $x\sim y\iff$ $x\preceq_S y$ and $y\preceq_S x$. Consequently, one can think of $P$ as a thin category. This is a classical result from category theory, see \cite{roman2017introduction}. This categorical treatment of entropy systems is close in spirit to Lawvere's formulation of state categories and semicontinuous entropy functions, where thermodynamic states and entropy-compatible processes are organized by categorical structure \cite{lawvere1984state}.

Now we proceed to the formalization of the concept of entropy interfaces (open couplings).

\begin{definition}[\textbf{Open couplings/interfaces ($1$-morphisms)}]
\textit{Let $P,Q$ be posets, then we define an entropy interface $\Phi:P\nrightarrow Q$ to be a Boolean profunctor}
\begin{equation}\label{eq.8}
\Phi:P^{op}\times Q\longrightarrow\mathbf{Bool},
\end{equation}
\textit{where $\mathbf{Bool}=\{\perp,\top\}$ represents a posetal category of Boolean-valued relations with \(\bot\leq\top\).}
\end{definition}

Additionally, we can say that because $P,Q$ are posets, giving $\Phi$ is equivalently providing a relation $\Phi\subseteq P\times Q$ that is monotone in both arguments:
\begin{itemize}
    \item\textbf{Contravariance in the source variable} $P$: For all $p,p^{\prime}\in P$, if $p\leq p^{\prime}$ and $\Phi\left(p^{\prime},q\right)=\top$, then $\Phi\left(p,q\right)=\top$
    \item\textbf{Covariance in the target variable} $Q$: For all $q,q^{\prime}\in Q$, if $q\leq q^{\prime}$ and $\Phi\left(p,q\right)=\top$, then $\Phi\left(p,q^{\prime}\right)=\top$ 
\end{itemize}

Equivalently, $\Phi$ is monotone as a map $P^{op}\times Q\longrightarrow\mathbf{Bool}$. Additionally, one can view $\mathbf{Bool}$ as a thin category with morphisms $\perp\longrightarrow\perp$, $\perp\longrightarrow\top$, $\top\longrightarrow\top$ but no morphism $\top\longrightarrow\perp$. Whenever we write an inequality in $\mathbf{Bool}$, one can read it as implication. From now on, we will denote the Boolean thin category as \(\mathbf{2}:=\mathbf{Bool}\).

Later, when we switch to the holographic bulk-to-boundary setup, meaning when \(P\) is specialized to a boundary information system and \(Q\) to a bulk thermodynamic system, then \(\Phi\left(p,q\right)\) will be read as the feasibility of realizing the boundary description \(p\) via the bulk state \(q\). Moreover, the two monotonicity conditions of $\Phi$ will have a direct thermodynamic meaning.  Considering the boundary monotonicity, we can say that if a bulk  state can realize a detailed boundary description, it can realize any coarser boundary description obtained by forgetting information about the given system. When it comes to the covariance in the bulk variable, it is an explicit modeling assumption, i.e. the chosen bulk order is required to be compatible with feasibility in the sense that moving upward in the bulk order preserves admissibility. The physical interpretation of this condition depends on the specific meaning assigned to the bulk order. Nevertheless, one point is clear, it should not be read as the unconditional claim that arbitrary thermal noise preserves implementability. 

The profunctor $\Phi$ (open couplings/interfaces) admits an alternative description, and to do this, we must define the following notions.

\begin{definition}[\textbf{Upward closure}]
\textit{Let $P$ be a poset, then a subset $U\subseteq P$ is upward closed if}
\begin{equation}\label{eq.9}
\forall x,y\in P \hspace{0.1cm}\left(x\in U\hspace{0.1cm}\land
\hspace{0.1cm} x\leq y\implies y\in U\right).
\end{equation}
\end{definition}

\begin{definition}
The set of all upward closed subsets of $P$ is defined as
\begin{equation}\label{eq.10}
\operatorname{Up}\left(P\right):=\{U\subseteq P\hspace{0.1cm}\vert\hspace{0.1cm}U\hspace{0.1cm }is\hspace{0.1cm}upward\hspace{0.1cm}closed\}.
\end{equation}
\end{definition}

Accordingly, we can partially order the set $\operatorname{Up}\left(P\right)$ by inclusion
\begin{equation}\label{eq.11}
U\leq_{\operatorname{Up}}V\iff U\subseteq V.
\end{equation}
Then $\left(\operatorname{Up}\left(P\right),\subseteq\right)$ is a poset. Similarly, one can define the set of all downward closed subsets of $P$ as $\operatorname{Down}\left(P\right):=\{L\subset P\hspace{0.1cm}\vert\hspace{0.1cm}L\hspace{0.1cm}\textit{is downward closed}\}$.

This leads us to the following order-theoretic presentation of the Boolean profunctor $\Phi$ (open couplings/interfaces).

The following result is the posetal specialization of the standard equivalence between Boolean-valued profunctors, order-compatible relations, and monotone maps into posets of upper or lower sets. We record it explicitly because, in the thermodynamic interpretation, the two variance directions have concrete meanings, i.e. the contravariance in the boundary variable expresses stability under coarse-graining of logical descriptions, while covariance in the bulk variable expresses stability under increasing entropy or decreasing thermodynamic control. 

\begin{theorem}\label{thm.1}
\textit{For posetal entropy systems $P,Q$, the following statements are equivalent:}
\begin{enumerate}
    \item An open coupling/interface from $P$ to $Q$ is a Boolean profunctor
    $\Phi: P^{op}\times Q\longrightarrow\mathbf{2}$
    \item A relation $\Phi\subseteq P\times Q$ (write $\Phi\left(p,q\right)$ for $\left(p,q\right)\in\Phi$) is downward closed in the variable $P$ and upward closed in the variable $Q$ 
    \item A monotone map $f_{\Phi}:P^{op}\longrightarrow\operatorname{Up}\left(Q\right)$ such that $f_{\Phi}\left(p\right):=\{q\in Q\hspace{0.1cm}\vert\hspace{0.1cm}\Phi\left(p,q\right)=\top\}$
    \item A monotone map $g_{\Phi}:Q\longrightarrow\operatorname{Down}\left(P\right)$ such that $g_{\Phi}\left(q\right):=\{p\in P\hspace{0.1cm}\vert\hspace{0.1cm}\Phi\left(p,q\right)=\top\}$
\end{enumerate}
\end{theorem}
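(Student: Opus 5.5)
We will prove the equivalences by the cycle (1)\(\Leftrightarrow\)(2), (2)\(\Leftrightarrow\)(3), (2)\(\Leftrightarrow\)(4). Each step is a pair of mutually inverse assignments, so the content is a bijection between the four kinds of data. Since all four presentations are determined by the same underlying subset of \(P\times Q\), we fix the correspondence \(\Phi\mapsto R_\Phi:=\{(p,q)\mid \Phi(p,q)=\top\}\). We then verify only that the monotonicity or closure condition in each item translates into the condition in the others.

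\textbf{Step 1, (1)\(\Leftrightarrow\)(2).} A map \(\Phi:P^{op}\times Q\to\mathbf{2}\) is the same as a subset \(R_\Phi\subseteq P\times Q\), because \(\mathbf{2}\) has two elements. Monotonicity of \(\Phi\) on the thin category \(P^{op}\times Q\) means: \((p',q)\leq(p,q')\) in \(P^{op}\times Q\) implies \(\Phi(p',q)\leq \Phi(p,q')\). Here \((p',q)\leq(p,q')\) means \(p\leq p'\) and \(q\leq q'\), and the inequality in \(\mathbf{2}\) is read as implication. Specialising first to \(q=q'\) and then to \(p=p'\) gives exactly the two closure conditions listed after the definition of open couplings: downward closure in \(P\) and upward closure in \(Q\). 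Conversely, given both closure conditions, we factor \((p',q)\leq(p,q')\) as \((p',q)\leq(p,q)\leq(p,q')\) and chain the two implications to recover full monotonicity. This factorisation is the only point needing care, since monotonicity on a product requires more than monotonicity in each variable separately.

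\textbf{Step 2, (2)\(\Leftrightarrow\)(3) and (2)\(\Leftrightarrow\)(4).} Given \(R\) as in (2), define \(f_\Phi(p)=\{q\mid (p,q)\in R\}\). Upward closure of \(R\) in \(Q\) says precisely that \(f_\Phi(p)\in\operatorname{Up}(Q)\). Downward closure in \(P\) says that \(p\leq p'\) implies \(f_\Phi(p')\subseteq f_\Phi(p)\), which is monotonicity of \(f_\Phi\) as a map \(P^{op}\to(\operatorname{Up}(Q),\subseteq)\). Conversely, a monotone \(f:P^{op}\to\operatorname{Up}(Q)\) gives \(R_f=\{(p,q)\mid q\in f(p)\}\). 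The two closure conditions of (2) are then read off in reverse. The assignments \(R\mapsto f\) and \(f\mapsto R_f\) are inverse by construction, since a relation is determined by its fibres.

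Item (4) is handled by the symmetric argument with the roles of the variables exchanged. For \(g_\Phi(q)=\{p\mid(p,q)\in R\}\), downward closure in \(P\) gives \(g_\Phi(q)\in\operatorname{Down}(P)\). Upward closure in \(Q\) gives that \(q\leq q'\) implies \(g_\Phi(q)\subseteq g_\Phi(q')\), i.e. \(g_\Phi\) is monotone \(Q\to(\operatorname{Down}(P),\subseteq)\). The inverse is \(g\mapsto\{(p,q)\mid p\in g(q)\}\).

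There is no real obstacle here. The main thing to get right is the variance bookkeeping, in two places. The first is the order on \(P^{op}\), so that downward closure in \(P\) corresponds to \emph{antitone} behaviour into \(\operatorname{Up}(Q)\), i.e. monotone from \(P^{op}\). The second is the product factorisation in Step 1. We will also remark that these bijections respect inclusion of relations, which anticipates the 2-morphisms. Concretely, \(\Phi\leq\Phi'\) pointwise iff \(R_\Phi\subseteq R_{\Phi'}\) iff \(f_\Phi\leq f_{\Phi'}\) pointwise iff \(g_\Phi\leq g_{\Phi'}\) pointwise, so the equivalence is in fact an isomorphism of posets of interfaces.
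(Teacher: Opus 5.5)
Your proof is correct, and it is more complete than what the paper provides. The paper gives no proof of Theorem~\ref{thm.1}. It only states, after the definition of open couplings, that monotonicity on $P^{op}\times Q$ amounts to the two separate variance conditions. It then observes that the fibres $\{p\mid\Phi(p,q)=\top\}$ and $\{q\mid\Phi(p,q)=\top\}$ are down-closed and up-closed respectively, and defers the rest to \cite{fong2019invitation}.

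Your argument supplies two steps that this sketch skips:
\begin{itemize}
\item the factorisation $(p',q)\leq(p,q)\leq(p,q')$, which shows that the separate conditions give joint monotonicity;
\item the check that $f_\Phi$ and $g_\Phi$ are monotone, not merely that their values lie in $\operatorname{Up}(Q)$ and $\operatorname{Down}(P)$. You get this by reading the other closure condition as an inclusion of fibres.
\end{itemize}

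Two small remarks.
\begin{itemize}
\item For maps out of a product of posets, separate monotonicity always implies joint monotonicity, and your factorisation is precisely the proof of that. So joint monotonicity ``requires more'' only as a definition, not in substance.
\item You take $\operatorname{Down}(P)$ to be ordered by inclusion. The paper leaves this implicit, but it is what makes item~4 correct as stated.
\end{itemize}

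Your closing observation is a worthwhile strengthening of the stated equivalence. All four correspondences preserve and reflect pointwise inclusion, so the equivalence is in fact an isomorphism of posets. This is also the form in which the paper later treats $2$-cells, namely as inclusions of relations. Finally, you never use antisymmetry, so your argument also confirms the paper's remark that the result holds verbatim for preorders.
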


\vspace{0.2cm}
Here, \(P^{op}\) encodes contravariance in the source variable. Consequently, for fixed \(q\), the fiber \(\{p\in P\hspace{0.1cm}\vert\hspace{0.1cm}\Phi\left(p,q\right)=\top\} \) is downward closed in \(P\), while for fixed \(p\), the fiber \(\{q\in Q\hspace{0.1cm}\vert\hspace{0.1cm}\Phi\left(p,q\right)=\top\}\) is upward closed in \(Q\).

From the categorical point of view, Theorem~\ref{thm.1} is a standard fact about \(\mathbf{2}\)-valued profunctors between posets, see \cite{fong2019invitation}. The equivalence is used here to move between three views of the same implementation interface. The profunctorial view is convenient for composition, the relational view expresses physical feasibility, and the fiber view assigns to each boundary description its upward-closed family of admissible bulk realizations. Contravariance in the boundary variable implements forgetting, i.e. a realization of a sharper description also realizes every coarser description. Covariance in the bulk variable expresses stability under entropy-compatible degradation. 

Moreover, everything in Theorem~\ref{thm.1} works identically for preorders. Within this setup, $\operatorname{Up}\left(P\right)$ is ordered by inclusion and remains a poset, even if $P$ is only a preorder.

Next, the use of $P^{op}$ encodes contravariance in the source variable. Thus, increasing the source variable reverses feasibility, while increasing the target variable preserves feasibility. More explicitly, if \(p\leq_Pp^{\prime}\), then \(\Phi\left(p^{\prime},q\right)\leq\Phi\left(p,q\right)\), and if \(q\leq_Qq^{\prime}\), then \(\Phi\left(p,q\right)\leq\Phi\left(p,q^{\prime}\right)\).

As in every bicategory, it is necessary to define the operation of composition. 

\begin{definition}[\textbf{Composition (gluing of open couplings/interfaces)}]\label{def.7}
\textit{Given two open couplings (morphisms) $\Phi: P\nrightarrow Q$ and $\Psi: Q\nrightarrow R$, the composition $\Psi\circ\Phi: P\nrightarrow R$ is defined as the following relation}
\begin{equation}\label{eq.26}
\left(\Psi\circ\Phi\right)\left(p,r\right)\iff\left(\exists q\in Q\right)\left(\Phi\left(p,q\right)=\top\land\Psi\left(q,r\right)=\top\right).
\end{equation}
\end{definition}

Categorically, one can usually define~(\ref{eq.26}) via the relation/profunctor formula
\begin{equation}\label{eq.27}
\left(\Psi\circ\Phi\right)\left(p,r\right):=\bigvee_{q\in Q}\left(\Phi\left(p,q\right)\land\Psi\left(q,r\right)\right),
\end{equation}
and since we are in $\operatorname{Bool}$, this boils down to~(\ref{eq.26}).

In other words, the composition connects the output of $\Phi$ to the input of $\Psi$ by existentially hiding the intermediate interface variable $q$.

From a physical perspective, we connect an implementation $\Phi$ (boundary-bulk feasibility) to another stage $\Psi$ (bulk - deeper bulk feasibility) by hiding the intermediate degrees of freedom. 

\begin{definition}[\textbf{Identity interfaces}]\label{def.8}
\textit{For a poset $P$, define the identity interface $\operatorname{Id}_P:P\nrightarrow P$ as the following hom-relation}
\begin{equation}\label{eq.28}
\operatorname{Id}_P\left(p,p^{\prime}\right)=
\begin{cases}
\top;\hspace{0.1cm}if\hspace{0.1cm}p\leq p^{\prime} \\
\perp;\hspace{0.1cm}otherwise.
\end{cases}
\end{equation}
\end{definition}

Equivalently, $\operatorname{Id}_P$ is the hom-relation in the thin category $P$ regarded as a monotone map $P^{op}\times P\longrightarrow\operatorname{Bool}$, given by the order of $P$. Now, we proceed to the definition of the $2$-morphism, or the so-called refinement. 

\begin{definition}[\textbf{Refinement ($2$-morphism)}]
\textit{Given two interfaces $\Phi,\Phi^{\prime}: P\nrightarrow Q$, we define the refinement ($2$-morphism)}
\begin{equation}\label{eq.29}
\alpha:\Phi\implies\Phi^{\prime},
\end{equation}
\textit{as a pointwise implication}
\begin{equation}\label{eq.30}
\Phi\subseteq\Phi^{\prime}\iff\left(\forall p\in P,\forall q\in Q \right)\left(\Phi\left(p,q\right)=\top\implies\Phi^{\prime}\left(p,q\right)=\top\right).
\end{equation}
\end{definition}

One can translate the inclusion $\Phi\subseteq\Phi^{\prime}$ as that $\Phi^{\prime}$ allows at least the couplings that are permitted by $\Phi$. In other words, $\Phi^{\prime}$ is a less constrained, more permissive interface.

The pointwise implication can also be viewed as an inclusion of relations $\Phi\left(p,q\right)\leq\Phi^{\prime}\left(p,q\right)$. In a categorical language, the refinement $\alpha$ can be considered a $2$-cell. 

Let us now define the composition of refinements ($2$-cells).
\begin{definition}
\textit{Given two refinements $\alpha:\Phi\implies\Phi^{\prime}$ and $\alpha^{\prime}:\Phi^{\prime}\implies\Phi''$, we can define two types of compositions:}
\begin{itemize}
    \item\textbf{Vertical composition:} $\Phi\implies\Phi''$ by the transitivity of implication
    \item\textbf{Horizontal composition:} if $\alpha:\Phi\implies\Phi^{\prime}$and $\beta:\Psi\implies\Psi^{\prime}$, then we define
    \begin{equation}\label{eq.31}
    \beta\star\alpha:\Psi\circ\Phi\implies\Psi^{\prime}\circ\Phi^{\prime},
    \end{equation}
    \textit{as the pointwise implication induced by the monotonicity of composition.}
\end{itemize}
\end{definition}
At this point, we have everything ready to summarize it into a bicategorical framework. 

Let $\mathbf{\operatorname{OEnt}}$ denote a bicategory of open entropy systems with:
\begin{itemize}
    \item objects as open entropy systems (posets)
    \item $1$-cells ($1$-morphisms) as open couplings/interfaces (Boolean profunctors) $P\nrightarrow Q$
    \item $2$-cells ($2$-morphisms) as refinements $\Phi\implies\Phi^{\prime}$
    \item composition defined as in Definition~\ref{def.7}.
    \item identities $\operatorname{Id}_P\left(p,p^{\prime}\right):=[p\leq p^{\prime}]$ as in Definition~\ref{def.8}
\end{itemize}

We now assemble these open couplings/interfaces into a bicategory. Categorically, this is the usual locally posetal bicategory of Boolean profunctors, restricted to those posets that arise as entropy-ordered state spaces. We nevertheless spell out the construction because its composition law has a direct physical meaning - composing interfaces corresponds to gluing open implementations and existentially hiding the intermediate thermodynamic degrees of freedom.

\begin{theorem}\label{thm.2}(\textbf{Bicategory of open entropy systems})
$\operatorname{OEnt}$ is a locally posetal bicategory. 
\end{theorem}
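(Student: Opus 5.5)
To prove Theorem~\ref{thm.2}, I will check the bicategory axioms directly. Because every hom-category $\operatorname{OEnt}(P,Q)$ is a poset, all coherence conditions hold automatically once the structure is in place. The argument therefore reduces to four steps.

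First, I show that each hom-category $\operatorname{OEnt}(P,Q)$ is a poset. Its objects are Boolean profunctors $P^{op}\times Q\to\mathbf{2}$, which by Theorem~\ref{thm.1} are the relations downward closed in $P$ and upward closed in $Q$. Its $2$-cells are inclusions $\Phi\subseteq\Phi'$. Inclusion is reflexive, transitive and antisymmetric, and this is exactly vertical composition. So $\operatorname{OEnt}(P,Q)$ is a thin category with at most one $2$-cell between any two $1$-cells, which gives local posetality.

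Second, I show that composition is well defined and functorial. For $\Phi:P\nrightarrow Q$ and $\Psi:Q\nrightarrow R$, I check that the relation $\Psi\circ\Phi$ of Definition~\ref{def.7} is again a profunctor.
\begin{itemize}
\item If $p'\le p$ and $q$ witnesses $(\Psi\circ\Phi)(p,r)$, then $\Phi(p',q)$ holds by contravariance of $\Phi$, so $q$ also witnesses $(\Psi\circ\Phi)(p',r)$.
\item If $r\le r'$, the same $q$ works, since $\Psi(q,r')$ holds by covariance of $\Psi$.
\end{itemize}
Composition is monotone in both arguments: if $\Phi\subseteq\Phi'$ and $\Psi\subseteq\Psi'$, a witness for $\Psi\circ\Phi$ is also a witness for $\Psi'\circ\Phi'$. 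This monotonicity is exactly the horizontal composition $\beta\star\alpha$. Since the hom-categories are thin, the interchange law holds trivially.

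Third, I verify the unitors and associator, and I expect this to be the main step. I will prove that these are actual equalities of relations, so they are trivially invertible $2$-cells.
\begin{itemize}
\item \emph{Right unit.} $(\Phi\circ\operatorname{Id}_P)(p,q)$ means there is $p'$ with $p\le p'$ and $\Phi(p',q)$. Contravariance of $\Phi$ gives $\Phi(p,q)$. Conversely, if $\Phi(p,q)$ holds, take $p'=p$ using reflexivity of $\le_P$.
\item \emph{Left unit.} $(\operatorname{Id}_Q\circ\Phi)(p,q)$ means there is $q'$ with $\Phi(p,q')$ and $q'\le q$. Covariance of $\Phi$ gives $\Phi(p,q)$, and the converse again takes $q'=q$.
\end{itemize}
This is the only place where the variance conditions of Theorem~\ref{thm.1} are genuinely needed; arbitrary relations would fail here. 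For associativity, both $(\Xi\circ\Psi)\circ\Phi$ and $\Xi\circ(\Psi\circ\Phi)$ unfold to
\begin{equation}
\nonumber
\exists q\,\exists r\;\bigl(\Phi(p,q)\land\Psi(q,r)\land\Xi(r,s)\bigr),
\end{equation}
by associativity and commutativity of existential quantification and conjunction. Equivalently, $\bigvee$ distributes over $\land$ in $\mathbf{2}$.

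Finally, the coherence conditions hold for free. The pentagon and triangle axioms are equations between $2$-cells in a poset, so they hold automatically. Naturality of the associator and unitors is likewise automatic, since any two parallel $2$-cells in a thin hom-category coincide. Together these steps show that $\operatorname{OEnt}$ is a locally posetal bicategory. It is in fact a $2$-category, because the associator and unitors are identities.
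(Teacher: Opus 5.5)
Your proof is correct and follows essentially the same route as the paper: well-definedness and monotonicity of composition, associativity and both unit laws as literal equalities of relations (with the unit laws using the variance conditions), the pointwise order on hom-sets, and thinness to dispose of interchange and coherence. The one check the paper makes that you leave implicit is that $\operatorname{Id}_P(p,p')=[p\le p']$ is itself a $1$-cell, i.e.\ down-closed in $p$ and up-closed in $p'$, which is immediate from transitivity of $\le_P$.
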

\begin{proof}[Proof of Theorem~\ref{thm.2}]

At the beginning of this proof, it is necessary to show that the identity profunctor and composites are well-defined. 

We must show that $\operatorname{Id}_P:P^{op}\times P\longrightarrow\mathbf{2}$ is monotone. Take $\left(p,p^{\prime}\right)\leq\left(q,q^{\prime}\right)$ in $P^{op}\times P$. By definition of product order, we get $p\leq_{P^{op}}q$, i.e. $q\leq_{P}p$ and $p^{\prime}\leq_{P}q^{\prime}$. Assume that $\operatorname{Id}_P\left(p,p^{\prime}\right)=\top$. Then $p\leq_P p^{\prime}$. Additionally, since $q\leq p\leq p^{\prime}\leq q^{\prime}$, we have $q\leq q^{\prime}$. Hence, $\operatorname{Id}_P\left(q,q^{\prime}\right)=\top$. Thus, $\operatorname{Id}_P$ is monotone and is well-defined as a $1$-cell. 

Now, let us proceed to the composite of $1$-cells. We must verify the contravariance in $P$ and the covariance in $R$ for $\Psi\circ\Phi$. 

Let \(p,p^{\prime}\in P\) with \(p\leq_P p^{\prime}\) and assume that \(\left(\Psi\circ \Phi\right)\left(p^{\prime},r\right)=\top\). Then there exists \(q\in Q\) such that \(\Phi\left(p^{\prime},q\right)=\top\) and \(\Psi\left(q,r\right)=\top\). Since \(\Phi\) is contravariant in \(P\), from
\(p\leq_P p^{\prime},\quad \Phi\left(p^{\prime},q\right)=\top\), we obtain \(\Phi\left(p,q\right)=\top\). Therefore, the same \(q\) witnesses \(\left(\Psi\circ\Phi\right)\left(p,r\right)=\top\), so \(p\leq_P p^{\prime}\implies \left(\Psi\circ\Phi\right)\left(p^{\prime},r\right)\leq\left(\Psi\circ\Phi\right)\left(p,r\right)\). Therefore, the contravariance in $P$ holds.

Regarding the covariance in $R$, let $r\leq r^{\prime}$ in $R$. Presume that $\left(\Psi\circ\Phi\right)\left(p,r\right)=\top$. Choose $q\in Q$ with $\Phi\left(p,q\right)=\top$ and $\Psi\left(q,r\right)=\top$. Since $\Psi$ is covariant in $R$ and $r\leq r^{\prime}$, then from $\Psi\left(q,r\right)=\top$, we obtain $\Psi\left(q,r^{\prime}\right)=\top$. 

Similarly, we can show that $\left(\Psi\circ\Phi\right)\left(p,r^{\prime}\right)=\top$. Hence, the covariance also holds in this case, proving that the $1$-cell $\Psi\circ\Phi$ is well-defined. 

Since our composition is well-defined, we can now show that $\operatorname{OEnt}$ is a strict $2$-category, hence a bicategory with identity associator/unitors. 

First of all, let us focus on the associativity of the composition. For $1$-cells $\Phi: P\nrightarrow Q$, $\Psi:Q\nrightarrow R$ and $\Sigma: R\nrightarrow S$, we must show that 
\begin{equation}\label{eq.32}
\Sigma\circ\left(\Psi\circ\Phi\right)=\left(\Sigma\circ\Psi\right)\circ\Phi,
\end{equation}
as a $1$-cell $P\nrightarrow S$.

Fix $p\in P$ and $s\in S$. We show the equivalence of truth values, so
\begin{equation}\label{eq.33}
\begin{aligned}
\left(\Sigma\circ\left(\Psi\circ\Phi\right)\right)\left(p,s\right)=\top\iff\left(\exists r\in R\right)\left(\left(\Psi\circ\Phi\right)\left(p,r\right)=\top\land\Sigma\left(r,s\right)=\top\right)\iff\\
\iff\left(\exists r\in R\right)\left(\exists q\in Q:\Phi\left(p,q\right)=\top\land \Psi\left(q,r\right)=\top\right)\land\Sigma\left(r,s\right)=\top.
\end{aligned}
\end{equation}
where the first equivalence stems from the definition of composition, and the second equivalence is given by the expansion of $\left(\Psi\circ\Phi\right)\left(p,r\right)$. 

Now we can move the existential quantifiers outward in order to expand~(\ref{eq.33}) as
\begin{equation}\label{eq.34}
\left(\exists q\in Q\right)\left(\exists r\in R\right)\left(\Phi\left(p,q\right)=\top\land\Psi\left(q,r\right)=\top\land\Sigma\left(r,s\right)=\top\right).
\end{equation}
Regrouping using the definition of $\left(\Sigma\circ\Psi\right)\left(q,s\right)$ yields
\begin{equation}\label{eq.35}
\begin{aligned}
\left(\exists q\in Q\right)\left(\Phi\left(p,q\right)=\top\land\left(\exists r\in R:\hspace{0.1cm}\Psi\left(q,r\right)=\top\land\Sigma\left(r,s\right)=\top\right)\right)\iff \\
\iff\exists q\in Q: \Phi\left(p,q\right)=\top\land\left(\Sigma\circ\Psi\right)\left(q,s\right)=\top.
\end{aligned}
\end{equation}
Hence by definition of composition, we obtain
\begin{equation}\label{eq.36}
\left(\left(\Sigma\circ\Psi\right)\circ\Phi\right)\left(p,s\right)=\top.
\end{equation}

One can immediately see that the two predicates coincide pointwise, implying that the $1$-cells are identical.

Now, let us focus on the left and right unit laws. Specifically, we have to show that for any $1$-cell $\Phi: P\nrightarrow Q$,
\begin{equation}\label{eq.37}
\operatorname{Id}_Q\circ\Phi=\Phi,\hspace{0.2cm}and \hspace{0.2cm}\Phi\circ\operatorname{Id}_P=\Phi.
\end{equation}

Fix $p\in P$ and $q^{\prime}\in Q$. Then
\begin{equation}\label{eq.38}
\begin{aligned}
\left(\operatorname{Id}_    Q\circ\Phi\right)\left(p,q\right)=\top\iff\exists q^{\prime}\in Q: \Phi\left(p,q^{\prime}\right)=\top\land q^{\prime}\leq_Q q.
\end{aligned}
\end{equation}
Now, suppose that \(\left(\operatorname{Id_Q\circ \Phi}\left(p,q\right)\right)=\top\). Then there exists \(q^{\prime}\in Q\) such that \(\Phi\left(p,q^{\prime}\right)=\top\) and \(q^{\prime}\leq_Q q\). Since \(\Phi\) is covariant in its \(Q\)-variable, it follows that \(\Phi\left(p,q\right)=\top\). Conversely, if \(\Phi\left(p,q\right)=\top\), choose \(q^{\prime}=q\). Since \(q\leq_Q q\), we have \(\operatorname{Id}_Q\left(q,q\right)=\top\), and therefore the existential condition defining \(\left(\operatorname{Id}_Q\circ\Phi\right)\left(p,q\right)\) is satisfied. Hence \(\left
(\operatorname{Id}_Q\circ\Phi\right)\left(p,q\right)=\Phi\left(p,q\right)\) for all \(\left(p,q\right)\). This proves the left unit law $\operatorname{Id}_Q\circ\Phi=\Phi$.

Proceeding with the right unit law, we should fix $p\in P$ and $q\in Q$. Then
\begin{equation}\label{eq.39}
\begin{aligned}
\left(\Phi\circ\operatorname{Id}_P\right)\left(p,q\right)=\top\iff\exists p^{\prime}\in P: \operatorname{Id}_P\left(p,p^{\prime}\right)=\top\land\Phi\left(p^{\prime},q\right)=\top\iff \\
\iff\exists p^{\prime}\in P:\left(p\leq p^{\prime}\right)\land\Phi\left(p^{\prime},q\right)=\top.
\end{aligned}
\end{equation}

If the implication $\implies$ holds, then pick such $p$. Since $p\leq p^{\prime}$ and $\Phi$ is contravariant in $P$, then we obtain $\Phi\left(p,q\right)=\top$.

Conversely, if the implication $\impliedby$ holds, then we have $\Phi\left(p,q\right)=\top$, so we can directly choose $p^{\prime}=p$. Then $p\leq p^{\prime}$, which gives us $\operatorname{Id}_P\left(p,p\right)=\top$. So the existential condition holds. Therefore, $\left(\Phi\circ\operatorname{Id}_P\right)\left(p,q\right)=\Phi\left(p,q\right)$. This proves the right unit law $\Phi\circ\operatorname{Id}_P=\Phi$.  

So far, we can conclude that $\operatorname{OEnt}$ is a bicategory, which is indeed a strict $2$-category at the level of $1$-cells. Hence, it remains to show that $\operatorname{OEnt}$ has a local posetal structure by proving that each hom is a poset.

At first, let us fix posets $P,Q$. Consider the class $\operatorname{OEnt}\left(P,Q\right)$ of $1$-cells $P\nrightarrow Q$ and define the order on $\operatorname{OEnt}\left(P,Q\right)$ by
\begin{equation}\label{eq.40}
\Phi\leq\Phi^{\prime}\iff\forall\left(p,q\right):\Phi\left(p,q\right)\leq\Phi^{\prime}\left(p,q\right),
\end{equation}
which holds pointwise in $\operatorname{Bool}$.

One can immediately see the reflexivity of~(\ref{eq.40}), because for each pair $\left(p,q\right)$, the inequality $\Phi\left(p,q\right)\leq\Phi\left(p,q\right)$ holds in $\operatorname{Bool}$.

Similarly, we can simply show the transitivity. For each pair $\left(p,q\right)$, we have
$\Phi\left(p,q\right)\leq\Phi^{\prime}\left(p,q\right)\leq\Phi''\left(p,q\right)$. Hence if $\Phi\leq\Phi^{\prime}$ and $\Phi^{\prime}\leq\Phi''$, then $\Phi\leq\Phi''$.

At last, it remains to prove the antisymmetry. Hence for each pair $\left(p,q\right)$, we have $\Phi\left(p,q\right)\leq\Phi^{\prime}\left(p,q\right)$ and $\Phi^{\prime}\left(p,q\right)\leq\Phi\left(p,q\right)$. Antisymmetry in $\operatorname{Bool}$ gives us $\Phi\left(p,q\right)=\Phi^{\prime}\left(p,q\right)$, which boils down to equality as predicates. 

Therefore, we have shown that $\operatorname{OEnt}\left(P,Q\right)$ is a poset, which automatically implies that between any two $1$-cells, there is at most one $2$-cell.

Furthermore, it is evident that the vertical composition in $\operatorname{OEnt}$ is exactly the transitivity. So given $2$-cells $\alpha:\Phi\implies\Phi^{\prime}$ and $\alpha^{\prime}:\Phi^{\prime}\implies\Phi''$, their vertical composite $\alpha^{\prime}\circ\alpha:\Phi\implies\Phi''$ exists and is given by pointwise implications $\Phi\leq\Phi''$, because $\Phi\leq\Phi^{\prime}\leq\Phi{''}$ and $2$-cells are just pointwise inequalities $\Phi\leq\Phi^{\prime}$.

Considering two pairs of profunctors $\Phi,\Phi^{\prime}:P\nrightarrow Q$ and $\Psi,\Psi^{\prime}:Q\nrightarrow R$, we must also show that the horizontal composition is monotone in each argument. In order to do so, we shall fix $p\in P$ and $r\in R$ and first assume that $\left(\Psi\circ\Phi\right)\left(p,r\right)=\top$. Then there exists $q\in Q$ with $\Phi\left(p,q\right)=\top$ and $\Psi\left(q,r\right)=\top$. From $\Phi\leq\Phi^{\prime}$, we have $\Phi\left(p,q\right)\implies\Phi^{\prime}\left(p,q\right)=\top$ and from $\Psi\leq\Psi^{\prime}$, we obtain $\Psi\left(q,r\right)=\top\implies\Psi^{\prime}\left(q,r\right)=\top$.

Similarly, we can prove that $\left(\Psi^{\prime}\circ\Phi^{\prime}\right)\left(p,r\right)=\top$. Thus $\left(\Psi\circ\Phi\right)\left(p,r\right)\leq\left(\Psi^{\prime}\circ\Phi^{\prime}\right)\left(p,r\right)$ in $\operatorname{Bool}$.  This proves that horizontal composition preserves implication.

Finally, it is necessary to show the validity of the interchange law. In a strict 2-category, the interchange law says
\begin{equation}\label{eq.41}
\left(\beta^{\prime}\circ\beta\right)\star\left(\alpha^{\prime}\circ\alpha\right)=\left(\beta^{\prime}\star\alpha^{\prime}\right)\circ\left(\beta\star\alpha\right),
\end{equation}
which holds whenever the composites make sense. 

Now, because each hom is a poset, there is at most one $2$-cell between any two fixed $1$-cells. Therefore, to show the equality of two $2$-cells, it is adequate to show that they have the same source and target, and that they both exist. 

Take $\alpha:\Phi\implies\Phi^{\prime}$, $\alpha^{\prime}:\Phi^{\prime}\implies\Phi''$ to be $2$-cells in $\operatorname{OEnt}\left(P,Q\right)$ and $\beta:\Psi\implies\Psi^{\prime}$, $\beta^{\prime}:\Psi^{\prime}\implies\Psi''$ to be $2$-cells in $\operatorname{OEnt}\left(Q,R\right)$. Then $\alpha^{\prime}\circ\alpha$ is the unique $2$-cell $\Phi\implies\Phi''$, i.e. $\Phi\leq\Phi''$ and $\beta^{\prime}\circ\beta$ is the unique $2$-cell $\Psi\implies\Psi''$, i.e. $\Psi\leq\Psi''$. 

Therefore, $\left(\beta^{\prime}\circ\beta\right)\star\left(\alpha^{\prime}\circ\alpha\right)$ is the unique $2$-cell
\begin{equation}\label{eq.42}
\Psi\circ\Phi\implies\Psi''\circ\Phi'',
\end{equation}
and it exists. 

On the other hand, $\beta\star\alpha$ is the unique $2$-cell $\Psi\circ\Phi\implies\Psi^{\prime}\circ\Phi^{\prime}$ and $\beta^{\prime}\star\alpha^{\prime}$ is the unique $2$-cell $\Psi^{\prime}\circ\Phi^{\prime}\implies\Psi''\circ\Phi''$. Hence, the vertical composite $\left(\beta^{\prime}\star\alpha^{\prime}\right)\circ\left(\beta\star\alpha\right)$ is also the unique $2$-cell
\begin{equation}\label{eq.43}
\Psi\circ\Phi\implies\Psi''\circ\Phi''.
\end{equation}
Therefore, both sides of~(\ref{eq.41}) are unique $2$-cells with identical source and target (they are equal).

This proves the locally posetal structure of $\operatorname{OEnt}$, which completes the proof of Theorem~\ref{thm.2}.
\end{proof}

Therefore, the bicategory \(\mathrm{OEnt}\) is precisely the Boolean profunctor bicategory on entropy posets, equipped with the thermodynamic semantics introduced above. The standard categorical structure becomes physically meaningful here because identity profunctors encode entropy accessibility, profunctor composition encodes bulk-mediated implementation through hidden intermediate states, and \(2\)-cells encode relaxation or refinement of feasible couplings. Also notice that the composition \(\left(\Psi\circ\Phi\right)\left(p,r\right)\) existentially hides the intermediate state \(q\). Physically, this is the compositional law for open implementations. Two implementation stages may be glued while the internal thermodynamic mediator is treated as unobserved. Associativity says that the effective feasibility relation of a multistage process is independent of how the stages are bracketed. Moreover, identity interfaces encode the native accessibility relation of the state space. 

\subsection{\textbf{Landauer's connection}}

Having established the ambient profunctorial bicategory, we now reinterpret Landauer's connection within it. The categorical notion of adjunction used below is the ordinary notion of adjunction in a locally posetal bicategory. What is specific to the present setting is its thermodynamic reading. The unit expresses the feasibility of implementing boundary information in the bulk and returning to the boundary, while the counit expresses the impossibility of obtaining a sharper bulk realization by abstraction followed by re-implementation. 

In \cite{kycia2018landauer}, the author starts from an entropy function $S:\Gamma\longrightarrow\mathbb{R}$ and defines an entropy order $x\preceq y\iff S\left(x\right)\leq S\left(y\right)$, thereby viewing $\left(\Gamma,\preceq\right)$ as a thin posetal category. Within that setting, a Landauer connection between two entropy systems corresponds to the Galois connection $F\dashv G$ between the corresponding posets, i.e. an adjunction between thin categories. Its core irreversibility content consists of the induced closure operator $GF$. 

In our case, the locally posetal bicategory $\operatorname{OEnt}$ retains precisely this order-theoretic background but replaces deterministic implementation maps by open interfaces, i.e. $1$-cells $P\nrightarrow Q$ as Boolean profunctors, encoding which bulk states can realize which boundary states, and $2$-cells are refinements given by pointwise implication. Therefore, within $\operatorname{OEnt}$, a Landauer adjunction should be simply an adjunction $\Phi\dashv\Psi$ between open interfaces, i.e. the pair of inequalities $\operatorname{Id}_P\leq\Psi\circ\Phi$ and $\Phi\circ\Psi\leq\operatorname{Id}_D$ in the hom-posets. This generalizes the Landauer connection from \cite{kycia2018landauer} in two decisive ways. First, it admits many-to-many, environment-mediated realizations, since a single boundary state may have a whole upward-closed family of feasible bulk realizations (and conversely), and second, it distinguishes the bulk-boundary coupling itself as a morphism, rather than collapsing it to a single monotone map. Furthermore, the original adjunction $F\dashv G$ from \cite{kycia2018landauer} is then recovered as the representable special case where the open interface $\Phi$ is generated by a monotone function F via $\Phi\left(p,q\right)\iff F\left(p\right)\leq q$. Generally speaking, the bicategory $\operatorname{OEnt}$ allows implementations that cannot be summarized by a single canonical bulk state per boundary state. Therefore, the induced boundary endo-interface $T:=\Psi\Phi$ is then the bicategorical analogue of the closure operator $GF$ from \cite{kycia2018landauer}, capturing the Landauer principle as a structural constraint rather than a property of deterministic maps alone. In summary, the bicategorical framework of \(\operatorname{OEnt}\) enlarges the ambient space of possible implementations, while the exact Landauer adjunctions form a distinguished representable subclass within it.  

Considering a general bicategory $\mathscr{B}$, let $f:A\longrightarrow B$ and $g:B\longrightarrow A$ be $1$-cells. An adjunction $f\dashv g$ consists of:
\begin{itemize}
    \item a unit $2$-cell
    \begin{equation}\label{eq.44}
    \eta:\operatorname{Id}_A\implies g\circ f,
    \end{equation}
    \item a counit $2$-cell
    \begin{equation}\label{eq.45}
    \varepsilon: f\circ g\implies\operatorname{Id}_B,
    \end{equation}
\end{itemize}
such that the triangle identities hold
\begin{equation}\label{eq.46}
\left(\varepsilon\circ f\right)\cdot\left(f\circ\eta\right)=\operatorname{Id}_f,\hspace{0.2cm} \left(g\circ\varepsilon\right)\cdot\left(\eta\circ g\right)=\operatorname{Id}_g.
\end{equation}
Here, the symbol $\cdot$ denotes the vertical composition of $2$-cells and $\circ$ denotes the horizontal composition.

\newpage
The following property of locally posetal bicategories was already used in the proof of Theorem~\ref{thm.2}. Let $\mathscr{B}$ be a locally posetal bicategory with $1$-cells $f:A\longrightarrow B$ and $g:B\longrightarrow A$. Then the following data are equivalent
\begin{enumerate}
    \item $f\dashv g$ in $\mathscr{B}$
    \item there exist 2-cells
    \begin{equation}\label{eq.47}
    \operatorname{Id}_A\leq g\circ f\hspace{0.1cm} and \hspace{0.1cm}f\circ g\leq\operatorname{Id}_B.
    \end{equation}
\end{enumerate}

In other words, in a locally posetal bicategory, to specify the adjunction $f\dashv g$, it is sufficient to specify the two inequalities in~(\ref{eq.47}).

This leads us to the following definition.
\begin{definition}[\textbf{Adjunction in $\operatorname{OEnt}$}]
\textit{Let P,Q be open entropy systems and $\Phi:P\nrightarrow Q$, $\Psi:\
Q\nrightarrow P$ be $1$-cells. An adjunction $\Phi\dashv\Psi$ in $\operatorname{OEnt}$ precisely means that the following two $2$-cells exist}
\begin{equation}\label{eq.48}
\operatorname{Id}_P\leq\Psi\circ\Phi, \hspace{0.2cm}\Phi\circ\Psi\leq\operatorname{Id}_Q.
\end{equation}
\end{definition}
Since the 2-cells are implications, the inequalities in~(\ref{eq.48}) can be considered to be pointwise conditions. Let us unpack them completely below. 

Let $\Phi:P\nrightarrow Q$ and $\Psi:\
Q\nrightarrow P$ be $1$-cells, then the unit inequality $\operatorname{Id}_P\leq\Psi\circ\Phi$ is equivalent to
\begin{equation}\label{eq.49}
\forall p,p^{\prime}\in P,\left(p\leq p^{\prime}\right)\implies\left(\Psi\circ\Phi\right)\left(p,p^{\prime}\right)=\top.
\end{equation}
Using Definition~\ref{def.7}. (composition), we can rewrite~(\ref{eq.49}) as
\begin{equation}\label{eq.50}
\forall p,p^{\prime}\in P,\left(p\leq p^{\prime}\right)\implies\left(\exists q\in Q\right)\left(\Phi\left(p,q\right)=\top\land\Psi\left(q,p^{\prime}\right)=\top\right).
\end{equation}

Next, the counit inequality $\Phi\circ\Psi\leq\operatorname{Id}_Q$ is equivalent to
\begin{equation}\label{eq.51}
\forall q,q^{\prime}\in Q,\quad\left(\Phi\circ\Psi\right)\left(q,q^{\prime}\right)=\top\implies\left(q\leq q^{\prime}\right).
\end{equation}
Unpacking the composition yields
\begin{equation}\label{eq.52}
\forall q,q^{\prime}\in Q,\left(\exists p\in P\right)\left(\Psi\left(q,p\right)=\top\land\Phi\left(p,q^{\prime}\right)=\top\right)\implies\left
(q\leq q^{\prime}\right).
\end{equation}

Hence, the adjunction $\Phi\dashv\Psi$ in $\operatorname{OEnt}$ exactly corresponds to the pair of conditions~(\ref{eq.50}) and~(\ref{eq.52}).

From now on, we impose the boundary/bulk interpretation, specializing \(P\) and \(Q\) to the Landauer situation. Let us proceed to the following definition.
\begin{definition}[\textbf{Landauer interfaces}]
\textit{Let $B:=\left(B,\preceq_B\right)$ be the boundary information system, ordered by logical refinement/information-entropy content, and let $D:=\left(D,\preceq_D\right)$ be the bulk thermodynamic system, ordered by entropy/accessibility. Then a Landauer interface is a $1$-cell}
\begin{equation}\label{eq.53}
\Phi:B\nrightarrow D,
\end{equation}
\textit{in $\operatorname{OEnt}$, i.e. a Boolean profunctor}
\begin{equation}\label{eq.54}
\Phi:B^{op}\times D\longrightarrow\operatorname{Bool}.
\end{equation} 
\end{definition}

Similarly, we can define the so-called Landauer abstraction interface from the bulk $D$ back to the boundary $B$ as a 1-cell
$\Psi:D\nrightarrow B$ with $\Psi:D^{op}\times B\longrightarrow\operatorname{Bool}$.

The way we define the Landauer interface actually provides a rule by which we can physically compare two  bulk/boundary states in terms of information/entropy accessibility. In particular, \(b\preceq_B b^{\prime}\) means that \(b^{\prime}\) is at least as informative as \(b\), whereas \(d\preceq_D d^{\prime}\) means that \(d^{\prime}\) is at least as entropic or thermodynamically coarser than \(d\).

The following definition is the point where the present framework departs from the deterministic order-theoretic Landauer connection. We do not require a function assigning to each boundary state a unique bulk realization, nor a function assigning to each bulk state a unique boundary abstraction. Instead, implementation and abstraction are represented by feasibility interfaces. The adjunction inequalities then express the compatibility of these two interfaces, i.e. every entropy-compatible boundary comparison is realizable through the bulk, while bulk abstraction followed by re-implementation cannot produce a sharper thermodynamic state.
Therefore, the Landauer (abstraction) interface induces the notion of the Landauer adjunction.

\begin{definition}[\textbf{Landauer adjunction}]
\textit{A Landauer adjunction is an adjunction}\label{def.13}
\begin{equation}\label{eq.55}
\Phi\dashv\Psi\hspace{0.2cm}in\hspace{0.2cm}\operatorname{OEnt},
\end{equation}
i.e. the pair of inequalities
\begin{equation}\label{eq.56}
\operatorname{Id}_B\leq\Psi\circ\Phi,\hspace{0.2cm}\Phi\circ\Psi\leq\operatorname{Id}_D.
\end{equation}
\end{definition}

Equivalently, one can describe the Landauer adjunction in terms of unit (boundary feasibility) and counit, i.e. via the relations~(\ref{eq.50}) and~(\ref{eq.52}). We get
\begin{itemize}
    \item \textbf{Unit (boundary feasibility):}
    \begin{equation}\label{eq.57}
    \forall b,b^{\prime}\in B, b\leq_B b^{\prime}\implies\exists d\in D: \Phi\left(b,d\right)=\top\land\Psi\left(d,b^{\prime}\right)=\top,   
    \end{equation}
    \item\textbf{Counit:}
    \begin{equation}\label{eq.58}
    \forall d,d^{\prime}\in D:\left(\exists b\in B:\Psi\left(d,b\right)=\top\land\Phi\left(b,d^{\prime}\right)=\top\right)\implies d\leq_Dd^{\prime}.
    \end{equation}
\end{itemize}

Before we proceed further, it is obligatory to establish the following terminology. 

\vspace{0.2cm}
A Landauer adjunction \(\Phi\dashv\Psi\), where \(\Phi:B\nrightarrow D\) and \(\Psi:D\nrightarrow B\) satisfy Definition~\ref{def.13} will also be called an \textbf{exact Landauer adjunction}. Here, the adjective \textbf{exact} means that \(\Phi\) and \(\Psi\) form an ordinary internal adjunction in the bicategory \(\operatorname{OEnt}\), in the standard bicategorical sense \cite[Definition~6.1.1]{johnson20212}. Thus, there exist a unit and a counit \(2\)-cell 
\begin{equation}
\eta:\operatorname{Id}_B\implies\Psi\circ\Phi,\quad\varepsilon:\Phi\circ\Psi\implies\operatorname{Id}_D.
\end{equation}
Since \(\operatorname{OEnt}\) is locally posetal via Theorem~\ref{thm.2}, these \(2\)-cells (whenever they exist) are unique. Moreover, the bicategorical triangle identities are automatic because each side of either triangle identity is a parallel \(2\)-cell between the same pair of \(1\)-cells, and a hom-poset contains at most one such \(2\)-cell. Consequently, in \(\operatorname{OEnt}\), the existence of the adjunction is equivalent simply to the pair of inequalities
\begin{equation}\label{eq.E}
\operatorname{Id}_B\leq\Psi\circ\Phi,\quad\Phi\circ\Psi\leq\operatorname{Id}_D.
\end{equation}

Since the \(1\)-cells of \(\operatorname{OEnt}\) are Boolean-valued profunctors, the condition~(\ref{eq.E}) is equivalent to the following two pointwise conditions
\begin{equation}\label{eq.E1}
\forall b,b^{\prime}\in B,\quad \left(b\leq_Bb^{\prime}\right)\implies\left(\exists d\in D,\hspace{0.1cm}\Phi\left(b,d\right)=\top\wedge  \Psi\left(d,b^{\prime}\right)=\top\right),
\end{equation}
and
\begin{equation}\label{eq.E2}
\forall d,d^{\prime}\in D,\quad \left(\exists b\in B,\Psi\left(d,b\right)=\top\wedge\Phi\left(b,d^{\prime}\right)=\top\right)\implies d\leq_Dd^{\prime}.
\end{equation}

As we mention above, the term \textbf{exact} is used here solely to distinguish this ordinary, unweakened bicategorical adjunction from the lax, defective, approximate, or cost-bounded variants considered later in the quantitative setting (see Section~\ref{sec.3}). In particular, exactness does not mean that the two inequalities in~(\ref{eq.E}) are equalities, i.e. it does not require the unit or counit to be invertible. Thus, it is not an adjoint-equivalence condition. Nor does the adjective \textbf{exact} refer here to a Beck-Chevalley or exact-square condition. Technically, it does not include representability of \(\Phi\) and \(\Psi\) as part of the definition. Representability, when it occurs further below, is instead a consequence of the representability-rigidity result for exact adjunctions in \(\operatorname{OEnt}\), see Remark~\ref{rem.1}. 

For readers seeking further details, we recommend \cite{stubbe2005categorical},  where the reduction of the ordinary internal adjunction data to order inequalities is treated as a standard fact in locally ordered distributor/quantaloid settings.

Now, we shall take a step back and re-analyze the Definition~\ref{def.13}. Technically, the relation~(\ref{eq.57}) describing the atomic feasibility statement \(\Phi\left(b,d\right)=\top\) means that the bulk state \(d\) is an admissible realization of the boundary description \(b\). In fact, taking into account the second part of the unit condition, it is a much stronger statement. It says that every boundary comparison \(b\leq_B b^{\prime}\) is mediated by some bulk state \(d\) through implementation followed by abstraction. On the other hand, \(\Psi\left(d,b^{\prime}\right)=\top\) means that \(b^{\prime}\) is an admissible boundary readout of \(d\). Hence, the relation~(\ref{eq.E1}) states that every comparison already present in the boundary order can be mediated by an implementation-readout round trip through some bulk states. Condition~(\ref{eq.E2}) states that any bulk comparison created by abstraction followed by re-implementation is already contained in the native bulk order. Therefore, the term \textbf{exact} refers to precise satisfaction of these bicategorical unit and counit conditions, with no additional defect or cost term. 

Since our considerations are situated in a locally posetal bicategory $\operatorname{OEnt}$, once the adjunction $\Phi\dashv\Psi$ holds, which will be proven shortly, we obtain canonical endo-$1$-cells. Therefore, let us first define the following objects.
\begin{definition}[\textbf
{Boundary Landauer monad/closure interface}]\label{def.14}
\textit{Let \(T:=\Psi\circ\Phi:B\nrightarrow B\). The unit of the Landauer adjunction induces the unique \(2\)-cell \(\eta_T:\operatorname{Id}_B\implies T\) and the counit induces, by whiskering, the multiplication \(\mu_T:T\circ T\implies T\). Equivalently, since \(\operatorname{OEnt}\) is locally posetal,}
\begin{equation}
\operatorname{Id}_B\leq T,\quad T\circ T\leq T.
\end{equation}
\textit{We call \(T\) together with \(\eta_T\) and \(\mu_T\), the boundary Landauer monad.}
\end{definition}

Explicitly, \(T\left(b,b^{\prime}\right)=\top\) means that \(b^{\prime}\) is related to \(b\) by one bulk-mediated implementation/readout round trip. In expanded form, we have
\begin{equation}
T\left(b,b^{\prime}\right)=\top\iff\exists d\in D, \Phi\left(b,d\right)=\top\hspace{0.1cm}and\hspace{0.1cm}\Psi\left(d,b^{\prime}\right)=\top.
\end{equation}
The unit inequality \(\operatorname{Id}_B\leq T\) should be understood relationally, i.e. every native boundary comparison is also admitted by the bulk-mediated round-trip relation. Since \(T\) is generally a profunctor rather than an endomap \(B\longrightarrow B\), this inequality does not by itself describe a pointwise increase or decrease of information. This means that \(T\) is consistently evaluated on two boundary states \(T\left(b,b^{\prime}\right)\in\mathbf{Bool}\). Only in a special, representable situation, we replace \(T\) with an ordinary monotone endomap \(t:B\longrightarrow B\). In such a case, one must state explicitly which profunctor associated with \(t\) is being used and verify the inequalities separately.  

Under the refinement convention on \(B\), the physically relevant loss of distinguishability is captured by the possibility that distinct boundary states become mutually related by \(T\). Therefore, the irreversibility is represented by the collapse of the distinguishable boundary description in the induced quotient of \(B\), rather than by a literal map sending every state to a coarser state.  

Now, we proceed to the definition of the Landauer comonad. 
\begin{definition}[\textbf{Bulk Landauer comonad/interior interface}]\label{def.15}
\textit{The bulk endo-1-cell}
\begin{equation}\label{eq.60}
U:=\Phi\circ\Psi:D\nrightarrow D,
\end{equation}
\textit{is called the bulk Landauer comonad/interior interface. Furthermore, the counit of the Landauer adjunction induces \(\varepsilon_U:U\implies\operatorname{Id}_D\), while the unit induces, by whiskering, \(\delta_U: U\implies U\circ U\).}
\end{definition}
Specifically, the counit inequality $U\leq\operatorname{Id}_D$ describes the fact that no bulk over-sharpening occurs. Technically, an over-sharpening would occur if, starting from a bulk state \(d\), abstracting through \(\Psi\) followed by re-implementation through \(\Phi\) produced a state \(d^{\prime}\) satisfying \(d^{\prime}<_D d\). Nevertheless, the counit condition~(\ref{eq.58}) excludes precisely this possibility, i.e. whenever \(U\left(d,d^{\prime}\right)=\top\), one must have \(d\leq_D d^{\prime}\). If the bulk order is induced by an entropy function \(S_D\), this implies
\begin{equation}\label{eq.61}
S_D\left(d\right)\leq S_D\left(d^{\prime}\right),
\end{equation}
if the ordering $\leq_D$ is induced by $S_D$. Therefore, the bulk round trip cannot produce a bulk state with lower entropy (more specific bulk state) in comparison to the initial one. 

Going back to Definitions~\ref{def.14} and~\ref{def.15}, it is worth mentioning that in $\operatorname{OEnt}$, the Landauer monad $T$ is idempotent in the bicategorical sense, so $T\circ T\leq T$. Also, dually, we have $U\leq U\circ U$.

The above-mentioned considerations lead us to the conclusion that the Landauer adjunction induces boundary closure and bulk interface. This fact can be summarized in the following theorem.
\begin{theorem}\label{thm.3}
\textit{If $\Phi\dashv\Psi$ in $\operatorname{OEnt}$, then}
\begin{enumerate}
    \item the composite $T:=\Psi\circ\Phi: B\nrightarrow B$ is an idempotent boundary Landauer monad/closure interface on \(B\) in the sense that
    \begin{equation}\label{eq.62}
    \operatorname{Id}_B\leq T,\hspace{0.2cm} T^2= T,
    \end{equation}
    \item the composite $U:=\Phi\circ\Psi:D\nrightarrow D$ is a dual bulk comonad/interior interface in the sense that
    \begin{equation}\label{eq.63}
    U\leq\operatorname{Id}_D,\hspace{0.2cm}U^2=U.
    \end{equation}
\end{enumerate}
In particular, \(T\) and \(U\) are endo-\(1\)-cells/endo-interfaces.
\end{theorem}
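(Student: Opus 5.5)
The proof uses only the two adjunction inequalities $\operatorname{Id}_B\leq\Psi\circ\Phi$ and $\Phi\circ\Psi\leq\operatorname{Id}_D$ from Definition~\ref{def.13}, together with the structure of $\operatorname{OEnt}$ established in Theorem~\ref{thm.2}. That theorem gives strict associativity, strict unit laws, and monotonicity of horizontal composition in each variable; it also makes every hom a poset, so equality of $1$-cells follows from two opposite inequalities. Each claimed equality $T^2=T$ and $U^2=U$ will therefore be obtained as a pair of inequalities produced by whiskering.

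For item 1, the inequality $\operatorname{Id}_B\leq T$ is the unit inequality verbatim.

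For $T\circ T\leq T$, I will write $T\circ T=\Psi\circ(\Phi\circ\Psi)\circ\Phi$ using associativity. I then whisker the counit $\Phi\circ\Psi\leq\operatorname{Id}_D$ on the left by $\Psi$ and on the right by $\Phi$. Monotonicity of composition gives $\Psi\circ(\Phi\circ\Psi)\circ\Phi\leq\Psi\circ\operatorname{Id}_D\circ\Phi=\Psi\circ\Phi=T$, where the unit laws of Theorem~\ref{thm.2} are used.

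For the reverse inequality $T\leq T\circ T$, I whisker $\operatorname{Id}_B\leq T$ by $T$: $T=T\circ\operatorname{Id}_B\leq T\circ T$. Antisymmetry in the hom-poset $\operatorname{OEnt}(B,B)$ then yields $T^2=T$.

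Item 2 is dual. $U\leq\operatorname{Id}_D$ is the counit inequality. Whiskering the unit gives
\[
U=\Phi\circ\operatorname{Id}_B\circ\Psi\leq\Phi\circ(\Psi\circ\Phi)\circ\Psi=U\circ U,
\]
and whiskering the counit gives $U\circ U\leq U\circ\operatorname{Id}_D=U$. Antisymmetry in $\operatorname{OEnt}(D,D)$ then gives $U^2=U$. That $T$ and $U$ are endo-$1$-cells is immediate from the well-definedness of composites shown in Theorem~\ref{thm.2}.

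There is no real obstacle. The one point needing care is justifying that whiskering preserves the order: this is the horizontal-composition monotonicity proved in Theorem~\ref{thm.2}, applied with an identity $2$-cell in one slot. If a more elementary check is preferred, I would also verify $T\circ T\leq T$ pointwise via \eqref{eq.E2}. Suppose $\Phi(b,d)$, $\Psi(d,b'')$, $\Phi(b'',d')$ and $\Psi(d',b')$ all hold. The middle pair gives $d\leq_D d'$ by \eqref{eq.E2}. Contravariance of $\Psi$ in its $D$-variable turns $\Psi(d',b')$ into $\Psi(d,b')$, so $d$ witnesses $T(b,b')$.
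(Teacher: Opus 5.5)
Your proof is correct and follows the paper's route: you reassociate and whisker the counit by $\Psi$ and $\Phi$ to get $T\circ T\leq T$, and you whisker the unit by $\Phi$ and $\Psi$ to get $U\leq U\circ U$. Your version is also more complete than the paper's, whose proof of Theorem~\ref{thm.3} establishes only these two inequalities before asserting $T^2=T$ and $U^2=U$; your reverse inequalities $T=T\circ\operatorname{Id}_B\leq T\circ T$ (which the paper supplies only later, as Lemma~\ref{lem.2}) and $U\circ U\leq U\circ\operatorname{Id}_D=U$ close that gap.
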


More precisely, Theorem~\ref{thm.3} is the open-interface version of the categorical Landauer principle. In the deterministic setting, the closure \(GF\) says that implementation followed by abstraction cannot recover more information than was initially available. In the present setting, the same irreversibility statement holds without assuming a single implementation map or a single abstraction map. The loss/coarsening of recoverable boundary information is now a property of the composite feasibility interface \(T=\Psi\circ\Phi\). 
Within the structure of the proof of Theorem~\ref{thm.3}, we will repeatedly use two elementary facts that hold in an arbitrary bicategory and become order-theoretic statements when considering the locally posetal bicategory.

\newpage
\newtheorem{prop}{Proposition}
\begin{prop}\cite{johnson20212}\label{prop.1}
\textit{Let $\mathscr{B}$ be a locally posetal bicategory. Fix objects $A,B,C$.}
\begin{enumerate}
    \item If $f,f^{\prime}:A\longrightarrow B$ and $f\leq f^{\prime}$, then for any $g:B\longrightarrow C$,
    \begin{equation}\label{eq.64}
    g\circ f\leq g\circ f^{\prime}.
    \end{equation}
    \item If $g,g^{\prime}:B\longrightarrow C$ and $g\leq g^{\prime}$, then for any $f:A\longrightarrow B$,
    \begin{equation}\label{eq.65}
    g\circ f\leq g^{\prime}\circ f.
    \end{equation}
\end{enumerate}
\end{prop}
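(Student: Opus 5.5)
The statement is Proposition~\ref{prop.1}: whiskering in a locally posetal bicategory is monotone. The plan is to obtain it from the existence of horizontal composition of $2$-cells, which is part of the structure of any bicategory. In the general setting I would use whiskering of a $2$-cell by a $1$-cell. In $\operatorname{OEnt}$ specifically I would also check it pointwise, reusing the monotonicity argument from the proof of Theorem~\ref{thm.2}.

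For part~1, suppose $f\leq f^{\prime}$. This means there is a (unique) $2$-cell $\alpha:f\implies f^{\prime}$ in $\mathscr{B}(A,B)$. For $g:B\longrightarrow C$, the identity $2$-cell $\operatorname{Id}_g:g\implies g$ always exists. Their horizontal composite $\operatorname{Id}_g\star\alpha:g\circ f\implies g\circ f^{\prime}$ is then a $2$-cell in $\mathscr{B}(A,C)$. Because the hom-categories are posets, the existence of such a $2$-cell is exactly the inequality $g\circ f\leq g\circ f^{\prime}$. Part~2 is symmetric: starting from $\beta:g\implies g^{\prime}$, take $\beta\star\operatorname{Id}_f:g\circ f\implies g^{\prime}\circ f$.

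As a concrete sanity check in $\operatorname{OEnt}$, I would argue directly from Definition~\ref{def.7}. If $(g\circ f)(a,c)=\top$, there is a witness $b$ with $f(a,b)=\top$ and $g(b,c)=\top$. From $f\leq f^{\prime}$ we get $f^{\prime}(a,b)=\top$. The same $b$ then witnesses $(g\circ f^{\prime})(a,c)=\top$. The argument for the other variable is identical.

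There is no real obstacle. The one point to state explicitly is that in a bicategory that is not strict, $g\circ f$ is only defined up to the associator and unitor data. Since whiskering is simply horizontal composition with an identity $2$-cell, this coherence data plays no role here, and the local posetality turns existence of the whiskered $2$-cell directly into the required inequality.
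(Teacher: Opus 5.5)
Your proposal is correct and follows the paper's own justification: the paper states Proposition~\ref{prop.1} without proof, citing \cite{johnson20212} and remarking that whiskering is monotone in any bicategory, and your pointwise witness argument for $\operatorname{OEnt}$ is the horizontal-composition step in the proof of Theorem~\ref{thm.2}. One small correction to your closing remark: the binary composite $g\circ f$ is a definite $1$-cell given by the composition functor, and only the bracketing of longer composites is fixed merely up to the associator; as you say, this does not affect the argument.
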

Basically, Proposition~\ref{prop.1} states that whiskering is monotone in any bicategory. The upcoming proposition simply follows from the definition \(f\cong g\implies f\leq g, g\leq f\implies f=g\), via the antisymmetry of the hom-posets.   

\begin{prop}\label{prop.2}
\textit{In a locally posetal bicategory, any isomorphism between parallel $1$-cells forces an equality.}
\end{prop}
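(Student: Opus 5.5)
The statement to prove is Proposition~\ref{prop.2}: in a locally posetal bicategory, any isomorphism between parallel \(1\)-cells forces an equality. The plan is to unwind what an isomorphism of \(1\)-cells means when each hom-category is a poset, and then apply antisymmetry. No structure of \(\operatorname{OEnt}\) beyond local posetality is needed, so I would argue in an arbitrary locally posetal bicategory \(\mathscr{B}\).

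First I fix parallel \(1\)-cells \(f,g:A\longrightarrow B\) and an invertible \(2\)-cell \(\alpha:f\implies g\). By definition of invertibility there is a \(2\)-cell \(\alpha^{-1}:g\implies f\) such that \(\alpha^{-1}\cdot\alpha=\operatorname{Id}_f\) and \(\alpha\cdot\alpha^{-1}=\operatorname{Id}_g\). In a locally posetal bicategory, the existence of a \(2\)-cell \(f\implies g\) means exactly \(f\leq g\) in the hom-poset \(\mathscr{B}(A,B)\). Hence \(\alpha\) gives \(f\leq g\) and \(\alpha^{-1}\) gives \(g\leq f\).

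Next I invoke antisymmetry of the hom-order. This is part of the definition of a locally posetal bicategory; for \(\operatorname{OEnt}\) it was verified explicitly in the proof of Theorem~\ref{thm.2}, where \(\Phi\leq\Phi^{\prime}\) and \(\Phi^{\prime}\leq\Phi\) force pointwise equality in \(\mathbf{Bool}\). It yields \(f=g\). It also follows that \(\alpha\) is the identity \(2\)-cell \(\operatorname{Id}_f\), since a hom-poset contains at most one \(2\)-cell \(f\implies f\).

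I expect no genuine obstacle. The only point needing care is that ``locally posetal'' is taken to mean hom-\emph{posets}, so that antisymmetry is available; with merely hom-preorders one would get only \(f\cong g\) and not equality. I would state that convention explicitly and point to Theorem~\ref{thm.2} for \(\operatorname{OEnt}\). This is the form in which the proposition is used in Theorem~\ref{thm.3}: there, \(T^2\leq T\) and \(T\leq T^2\), the latter from whiskering the unit via Proposition~\ref{prop.1}, combine to give \(T^2=T\), and dually \(U^2=U\).
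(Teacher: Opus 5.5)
Your argument is correct and is the same as the paper's: an invertible \(2\)-cell \(f\implies g\) gives both \(f\leq g\) and \(g\leq f\) in the hom-poset, and antisymmetry yields \(f=g\). One small aside, which does not affect your proof: the paper invokes this proposition in the proof of Theorem~\ref{thm.3} to treat associator and unitor isomorphisms as equalities when reassociating \(T\circ T=\Psi\circ(\Phi\circ\Psi)\circ\Phi\), whereas the equality \(T\circ T=T\) itself comes from the direct antisymmetry argument of Lemma~\ref{lem.2}.
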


In other words, we have invertible $2$-cells (associator and unitor)
\begin{equation}\label{eq.56}
a_{h,f,g}:\left(h\circ g\right)\circ f\cong h\circ\left(g\circ f\right), \hspace{0.2cm} \lambda_f:\operatorname{Id}\circ f\cong f,\hspace{0.2cm}\rho_f:f\circ\operatorname{Id}\cong f.
\end{equation}
Since $\leq$ is antisymmetric and an isomorphism gives both $\leq$ directions, we may treat these canonical isomorphisms as equalities of $1$-cells.

Now, let us proceed to the proof of Theorem~\ref{thm.3}. 

\begin{proof}[Proof of Theorem~\ref{thm.3}]
First of all, the inequality $\operatorname{Id}_B\leq\Psi\circ\Phi=T$ holds by the unit of the adjunction. 

Regarding the idempotency $T\circ T\leq T$, start with the definition
\begin{equation}\label{eq.67}
T\circ T=\left(\Psi\circ\Phi\right)\circ\left(\Psi\circ\Phi\right).
\end{equation}
We can reassociate it using the result of Proposition~\ref{prop.2} as
\begin{equation}\label{eq.68}
\left(\Psi\circ\Phi\right)\circ\left(\Psi\circ\Phi\right)=\Psi\circ\left(\Phi\circ\Psi\right
)\circ\Phi.
\end{equation}
Applying the counit inequality \(\Phi\circ\Psi\leq\operatorname{Id}_D\) of the Landauer adjunction, and whiskering by \(\Psi\) on the left and \(\Phi\) and the right via Proposition~\ref{prop.1}, we obtain
\begin{equation}\label{eq.69}
T\circ T=\Psi\circ\left(\Phi\circ\Psi\right)\circ\Phi\leq\Psi\circ\operatorname{Id}_D\circ\Phi\cong\Psi\circ\Phi.
\end{equation}

Simplifying with unit laws provides us
\begin{equation}\label{eq.71}
\Psi\circ\operatorname{Id}_D\circ\Phi=\Psi\circ\Phi=T.
\end{equation}

Putting all of the above-mentioned steps together yields
\begin{equation}\label{eq.72}
T\circ T=\Psi\circ\left(\Phi\circ\Psi\right)\circ\Phi\leq\Psi\circ\operatorname{Id}_D\circ\Phi=T.
\end{equation}

This proves the claim 1. of Theorem~\ref{thm.3}. As for the second claim, the inequality $\Phi\circ\Psi=U\leq\operatorname{Id}_D$ again holds by the counit inequality.

Considering the idempotency $U\leq U\circ U$, we start from the adjunction unit inequality $\operatorname{Id}_B\leq\Psi\circ\Phi$. Whiskering it on the left by $\Phi$ and on the right by $\Psi$, we get
\begin{equation}\label{eq.73}
\Phi\circ\operatorname{Id}_B\circ\Psi\leq\Phi\circ\left(\Psi\circ\Phi\right)\circ\Psi.
\end{equation}
Simplifying the left-hand side of~(\ref{eq.73}) by unit laws gives us
\begin{equation}\label{eq.74}
\Phi\circ\operatorname{Id}_B\circ\Psi=\Phi\circ\Psi=U.
\end{equation}
Reassociation of the right-hand side of~(\ref{eq.73}) yields
\begin{equation}\label{eq.75}
\Phi\circ\left(\Psi\circ\Phi\right)\circ\Psi=\left(\Phi\circ\Psi\right)\circ\left(\Phi\circ\Psi\right)=U\circ U.
\end{equation}

Therefore, we obtain
\begin{equation}\label{eq.76}
U\leq U\circ U,
\end{equation}
which proves the second claim of Theorem~\ref{thm.3}.

In conclusion, from the adjunction $\Phi\dashv\Psi$, we have constructed endo-1-cells $T=\Psi\circ\Phi$ and $U=\Phi\circ\Psi$ satisfying the conditions~(\ref{eq.62}) and (\ref{eq.63}), completing the proof of Theorem~\ref{thm.3}.
\end{proof}

\hspace{0.1cm}
Interestingly, one can think about the physical interpretation of Theorem~\ref{thm.3}. The closure interface \(T\) corresponds to the boundary \(\rightarrow\) bulk \(\rightarrow\) boundary relation. The inequality \(\operatorname{Id}_B\leq T\) says that this round trip relation extends the original boundary comparison, while \(T\circ T\leq T\) says that iterating the round trip adds no new boundary comparisons. Thus, the irreversible loss represented by \(T\) is stable after one round trip. In representable cases, this reduces to the usual closure-operator picture, and \(T\) becomes inflationary in the entropy order. In other words, returning from the physical realization cannot increase information on the boundary. It can only preserve the information or lose it. This exactly matches the author's statement in \cite{kycia2018landauer}, that the closure operator maps a state to a higher or equal one with respect to the Landauer connection.

Dually, the interior operator $U$ is somewhat of a boundary-consistent approximation of the bulk, it can only move us down the interior.

\vspace{0.2cm}
The original paper \cite{kycia2018landauer} treats the Landauer connection via monotone functions $F$ and $G$, with $F\dashv G$ being the Galois connection.

In order to embed that into our locally posetal bicategory $\operatorname{OEnt}$, we shall define the so-called \textbf{representable interface} of a monotone map $F:P\longrightarrow Q$ 

\begin{definition}[\textbf{Representable interface}]\label{def.16}
\textit{Let $F:P\longrightarrow Q$, $G:Q\longrightarrow P$ be monotone maps of posets. Then we can define the representable interface as the 1-cell $F_*:P\nrightarrow Q$ such that}
\begin{equation}\label{eq.66}
F_*\left(p,q\right)=\top\iff F\left(p\right)\leq q, \quad\textit{companion of F}. 
\end{equation}
We shall call $F_*$ the companion. Similarly, we can define its so-called conjoint as the 1-cell $G^*:P\nrightarrow Q$ such that
\begin{equation}\label{eq.77}
G^*\left(p,q\right)=\top\iff p\leq G\left(q\right),\quad\textit{conjoint of G}.
\end{equation}
\end{definition}

At this point, everything boils down to the fact that the adjunction map is equivalent to the adjunction of representable interfaces. Since both $F_*$ and $G^*$ are $1$-cells $P\nrightarrow Q$, they are directly comparable inside the hom-poset $\operatorname{OEnt}\left(P,Q\right)$. This is summarized in the following theorem.

The next theorem records the standard representable case. In the language of profunctor bicategories, a monotone map determines its associated representable interface, and an adjunction of monotone maps is equivalently expressed by the coincidence of the corresponding companion and conjoint interfaces. We include the statement because it shows exactly how the deterministic Landauer connection studied previously is recovered as a special case of the present open-interface formalism.

\begin{theorem}\label{thm.4}
\textit{Let $P$, $Q$ be posets (open entropy systems) and let $F:P\longrightarrow Q$, $G:Q\longrightarrow P$ be monotone maps with their representable interfaces (companion and conjoint) as in~(\ref{eq.66}) and~(\ref{eq.77}). Then the following data are equivalent}
\begin{enumerate}
    \item $F\dashv G$ as an adjunction of posets (Galois connection), so
    \begin{equation}\label{eq.80}
     \forall p\in P,\forall q\in Q: F\left(p\right)\leq_Qq\iff p\leq_PG\left(q\right).  
    \end{equation}
    \item The two interfaces $F_*$, $G^*:P\nrightarrow Q$ are equal, i.e.
    \begin{equation}
    F_*=G^*\quad in\hspace{0.1cm}\operatorname{OEnt}\left(P,Q\right).
    \end{equation}
    Equivalently, since $\operatorname{OEnt}\left(P,Q\right)$ is a poset, we have
    \begin{equation}
    F_*\leq G^*,\quad G^*\leq F_*.
    \end{equation}
\end{enumerate}
\end{theorem}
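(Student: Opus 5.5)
The statement is essentially definitional once both interfaces are unpacked pointwise, so the plan is to reduce item 2 to a pointwise biconditional and match it with the Galois condition~(\ref{eq.80}). First, I would check that $F_*$ and $G^*$ really are $1$-cells $P\nrightarrow Q$, i.e.\ Boolean profunctors in the sense of Theorem~\ref{thm.1}. For $F_*$: if $p\leq p'$ and $F(p')\leq q$, then monotonicity of $F$ gives $F(p)\leq F(p')\leq q$, which is contravariance in $P$. If $F(p)\leq q\leq q'$, then $F(p)\leq q'$, which is covariance in $Q$. For $G^*$: if $p\leq p'\leq G(q)$, then $p\leq G(q)$. If $q\leq q'$, then $p\leq G(q)\leq G(q')$ by monotonicity of $G$. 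This guarantees that both lie in the hom-poset $\operatorname{OEnt}(P,Q)$, so comparing them there makes sense.

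Next, by the description of $2$-cells as pointwise implications~(\ref{eq.30}) and the antisymmetry of $\operatorname{OEnt}(P,Q)$ established in the proof of Theorem~\ref{thm.2}, the equality $F_*=G^*$ is equivalent to the pair $F_*\leq G^*$ and $G^*\leq F_*$. This in turn is equivalent to
\begin{equation}
\nonumber
\forall p\in P,\ \forall q\in Q:\quad F_*(p,q)=\top\iff G^*(p,q)=\top .
\end{equation}
Substituting the definitions~(\ref{eq.66}) and~(\ref{eq.77}), the two sides become $F(p)\leq_Q q$ and $p\leq_P G(q)$ respectively. The condition is therefore literally~(\ref{eq.80}), which proves $1\Leftrightarrow 2$. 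For completeness, I would also note the finer split: $F_*\leq G^*$ alone is exactly the implication $F(p)\leq q\Rightarrow p\leq G(q)$, which is equivalent to the unit $p\leq GF(p)$ (take $q=F(p)$). Likewise, $G^*\leq F_*$ is exactly the counit $FG(q)\leq q$ (take $p=G(q)$). The two inequalities therefore correspond to the two halves of the Galois connection.

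There is no genuine obstacle. The only point needing care is the variance bookkeeping: verifying that $G^*$ is covariant in $Q$ uses monotonicity of $G$, and that $F_*$ is contravariant in $P$ uses monotonicity of $F$. This check is exactly what makes the comparison inside $\operatorname{OEnt}(P,Q)$ legitimate.
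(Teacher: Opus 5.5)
Your proof is correct and follows the paper's argument. Like the paper, you first verify that $F_*$ and $G^*$ are Boolean profunctors using monotonicity of $F$ and $G$, then read equality in the hom-poset pointwise, which yields exactly the Galois condition; your extra remark splitting $F_*\leq G^*$ and $G^*\leq F_*$ into the unit $p\leq GF(p)$ and counit $FG(q)\leq q$ is also correct (the converse directions use monotonicity of $G$ and $F$, respectively), though the paper does not include it.
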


\vspace{0.1cm}
\begin{proof}[Proof of Theorem~\ref{thm.4}]

First of all, we shall prove that both $F_*$ and $G^*$ are well-defined. Define 
\begin{equation}
F_*\left(p,q\right)=\top\iff F\left(p\right)\leq q.
\end{equation}
We must show that \(F_*\) is monotone as a map \(P^{op}\times Q\longrightarrow\operatorname{Bool}\). Concretely, this means proving the contravariance in the \(P\)-variable and the covariance in the \(Q\)-variable. 

First, zooming on the contravariance in \(P\), let us take \(p,p^{\prime}\in
 P\) with \(p\leq p^{\prime}\). Assume that \(F_*\left(p^{\prime},q\right)=\top\). Then by definition of \(F_*\), this means
 \begin{equation}
 F\left(p^{\prime}\right)\leq q.
 \end{equation}
Since \(F\) is monotone and \(p\leq p^{\prime}\), we also obtain
\begin{equation}
F\left(p\right)\leq F\left(p^{\prime}\right).
\end{equation}
Combining these inequalities together gives us
\begin{equation}
F\left(p\right)\leq F\left(p^{\prime}\right)\leq q \implies F\left(p\right)\leq q.    
\end{equation}
Hence by definition, we again obtain
\begin{equation}
F_*\left(p,q\right)=\top.
\end{equation}
Thus
\begin{equation}
p\leq p^{\prime}\implies F_*\left(p^{\prime},q\right)\leq F_*\left(p,q\right),    
\end{equation}
which exactly corresponds to the contravariance in the first variable \(P\).

Now we can proceed to proving the covariance in \(Q\). Let us take \(q,q^{\prime}\in Q\) with \(q\leq q^{\prime}\). Presume that \(F_*\left(p,q\right)=\top\). Then \(F\left(p\right)\leq q\). By the transitivity of the order on \(Q\), we get
\begin{equation}
F\left(p\right)\leq q\leq q^{\prime}\implies F_*\left
(p,q^{\prime}\right)=\top.
\end{equation}
Hence,
\begin{equation}
q\leq q^{\prime}\implies F_*\left
(p,q\right)\leq F_*\left(p,q^{\prime}\right),
\end{equation}
and this corresponds to the covariance in the second variable \(Q\). Thus \(F_*\) is indeed a well-defined \(1\)-cell \(P\nrightarrow Q\).

Proceeding to the conjoint \(G^*\), let us define
\begin{equation}
G^*\left(p,q\right)=\top\iff p\leq G\left(q\right).
\end{equation}
Again, we must prove that \(G^*\) is monotone as a map \(P^{op}\times Q\longrightarrow\operatorname{Bool}\). 

Let us focus on the contravariance of \(G^*\) in the variable \(P\). Take \(p,p^{\prime}\in P\) with \(p\leq p^{\prime}\). Assume that \(G^*\left
(p^{\prime},q\right)=\top\). Then by its definition, we have
\begin{equation}
p^{\prime}\leq G\left(q\right).
\end{equation}
Since \(p\leq p^{\prime}\), the transitivity law yields
\begin{equation}
p\leq p^{\prime}\leq G\left
(q\right)\implies p\leq G\left(q\right).
\end{equation}
Therefore,
\begin{equation}
G^*\left(p,q\right)=\top,
\end{equation}
which corresponds to \(G^*\) being contravariant in \(P\).

Considering the covariance in \(Q\), let us take \(q,q^{\prime}\in Q\) with \(q\leq q^{\prime}\). Presume that \(G^*\left(p,q\right)=\top\). Then we directly have \(p\leq G\left(q\right)\). Since \(G\) is monotone and \(q\leq q^{\prime}\), we get
\begin{equation}
G\left(q\right)\leq G\left
(q^{\prime}\right).
\end{equation}
Combining both inequalities gives us
\begin{equation}
p\leq G\left(q\right)\leq G\left(q^{\prime}\right)\implies p\leq G\left(q^{\prime}\right).
\end{equation}
Therefore,
\begin{equation}
G^*\left(p,q^{\prime}\right)=\top.
\end{equation}
Hence \(G^*\) is covariant in \(Q\). In other words, \(G^*\) is also well-defined \(1\)-cell \(P\nrightarrow Q\).

Now, we shall focus on proving that the Galois connection of maps implies the adjunction of representable interfaces. Technically, we prove that the Galois connection \(F\dashv G\) is equivalent to the equality of the companion \(F_*\) and the conjoint \(G^*\).

\vspace{0.1cm}
\hspace{-0.6cm}\underline{(1)$\implies$ (2)}: Assume that \(F\dashv G\) as posets. By definition, this means
\begin{equation}
\forall p\in P,\hspace{0.1cm}\forall q\in Q,\quad F\left(p\right)\leq q\iff p\leq G\left(q\right).
\end{equation}
We want to prove that \(F_*=G^*\) as \(1\)-cells \(P\nrightarrow Q\).

Since \(1\)-cells are simply \(\operatorname{Bool}\)-valued predicates on pair \(p,q\), to prove equality, it is enough to prove pointwise equality of truth values for every pair \(p,q\). Hence, let \(p\in P\) and \(q\in Q\) be arbitrary. By definition of \(F_*\), we have
\begin{equation}
F_*\left(p,q\right)=\top\iff F\left(p\right)\leq q.
\end{equation}
Using the Galois condition, we get
\begin{equation}
p\leq G\left(q\right)\iff G^*\left(p,q\right)=\top.
\end{equation}
The combination of these equivalences provides us
\begin{equation}
F_*\left(p,q\right)=\top\iff G^*\left(p,q\right)=\top.
\end{equation}
Since this holds for every \(p\in P\) and \(q\in Q\), the two predicates agree pointwise. Hence we have the equality,
\begin{equation}
F_*=G^*,
\end{equation}
as interfaces \(P\nrightarrow Q\). 

Conversely, let us focus on the opposite implication.

\hspace{-0.6cm}\underline{(2)$\implies$ (1)}: Now presume that \(F_*=G^*\) as \(1\)-cells \(P\nrightarrow Q\). We must prove the adjunction \(F\dashv G\) as posets, i.e. that 
\begin{equation}
\forall p\in P,\hspace{0.1cm}\forall q\in Q,\quad F\left(p\right)\leq q\iff p\leq G\left(q\right).    
\end{equation}
So let us pick \(p\in P\) and \(q\in Q\) arbitrarily. Because \(F_*=G^*\), their values at \(\left(p,q\right)\) are equal, i.e. \(F_*\left(p,q\right)=G^*\left
(p,q\right)\). Now we shall unpack both sides. By definition of \(F_*\), we have
\begin{equation}
F_*\left(p,q\right)=\top\iff F\left(p\right)\leq q.
\end{equation}
By definition of \(G^*\) we get
\begin{equation}
G^*\left(p,q\right)=\top\iff p\leq G\left(q\right). 
\end{equation}
Since both truth values are identical, we obtain
\begin{equation}
F\left(p\right)\leq q\iff p\leq G\left(q\right).
\end{equation}
Additionally, because \(p\) and \(q\) were arbitrary, this inequality holds for all \(p\in P\) and \(q\in Q\). Therefore, proving the adjunction \(F\dashv G\).

This completes the proof of Theorem~\ref{thm.4}.
\end{proof}

\vspace{0.1cm}

This completes the recovery of the classical Galois/Landauer connection inside \(\mathrm{OEnt}\). Categorically, the argument is the usual companion-conjoint characterization of a poset adjunction. However, from the physical perspective, it is important because it identifies deterministic implementation maps as the representable interfaces among all open feasibility interfaces. The passage from monotone maps to arbitrary profunctors is therefore exactly the passage from single-valued implementations to many-to-many, environment-dependent thermodynamic realizations. In other words, the Galois connection is recovered through the statement that the two canonical ways of presenting the same interface coincide (equality of two parallel interfaces): 
\begin{itemize}
    \item The interface generated by the left adjoint \(F\),
    \begin{equation}
        F_*\left(p,q\right)\iff F\left(p\right)\leq q,
    \end{equation}
    \item The interface generated by the right adjoint \(G\),
    \begin{equation}
    G^*\left(p,q\right)\iff p\leq G\left(q\right).
    \end{equation}
\end{itemize}
This Galois law is precisely the assertion that these two predicates define the identical relation between \(P\) and \(Q\). Equivalently, we have the exact bicategorical translation of the classical Galois adjunction from \cite{kycia2018landauer}.

Notice that there is an interesting consequence of Theorem~\ref{thm.4} that deserves individual emphasis. Namely that every monotone map gives rise to a canonical profunctor adjunction between its companion and conjoint. This is just another feature of the profunctorial setting. Nonetheless, we include it to emphasize that representable interfaces are not ad hoc additions to the theory, but are built into the ordinary proarrow/profunctor formalism \cite{fong2019invitation}.

\newtheorem{lemma}{Lemma}

\newtheorem{rem}{Remark}
\begin{rem}\label{rem.1}(\textbf{Representability-rigidity result})
Let \(F:P\longrightarrow Q\) be a monotone map. Recall the companion and conjoint maps as:
\begin{equation}
F_*:P\nrightarrow Q,\quad F_*\left(p,q\right)=\top\iff F\left(p\right)\leq q,
\end{equation}
\begin{equation}
F^*:Q\nrightarrow P,\quad F^*\left(q,p\right)=\top\iff q\leq F\left(p\right).
\end{equation}
Then every monotone map \(F\) determines a canonical profunctor adjunction
\begin{equation}
F_*\dashv F^*\quad in\hspace{0.1cm}\operatorname{OEnt}.
\end{equation}
\end{rem}

\hspace{0.2cm}
\begin{proof}[Proof of Remark~\ref{rem.1}]
We prove the following inequalities
\begin{equation}
\operatorname{Id}_P\leq F^*\circ F_*,\quad F_*\circ F^*\leq\operatorname{Id}_Q.
\end{equation}
For the unit, fix \(p,p^{\prime}\in P\) and assume that \(p\leq p^{\prime}\). We must prove that
\begin{equation}
\left(F^*\circ F_*\right)\left(p,p^{\prime}\right)=\top. 
\end{equation}
Via composition, this means producing some \(q\in Q\) such that 
\begin{equation}
F_*\left(p,q\right)=\top\quad and\quad F^*\left(q,p^{\prime}\right)=\top.
\end{equation}
Let us choose \(q:=F\left(p\right)\), then we get \(F_*\left(p,F\left(p\right)\right)=\top\) because \(F\left(p\right)\leq F\left(p\right)\) and \(F^*\left(F\left(p\right),p^{\prime}\right)=\top\) because \(F\left(p\right)\leq F\left(p^{\prime}\right)\) is given by the monotonicity of \(F\) and the inequality \(p\leq p^{\prime}\). Hence, the unit holds.

Considering the counit, let us fix \(q,q^{\prime}\in Q\) and assume that
\begin{equation}
\left(F_*\circ F^*\right)\left(q,q^{\prime}\right)=\top.
\end{equation}
Then there exists \(p\in P\) such that
\begin{equation}
F^*\left(q,p\right)=\top\quad and\quad F_*\left(p,q^{\prime}\right)=\top. 
\end{equation}
Unpacking both sides gives us
\begin{equation}
q\leq F\left(p\right),\quad F\left(p\right)\leq q^{\prime}\implies q\leq q^{\prime}. 
\end{equation}
Therefore
\begin{equation}
\operatorname{Id}_Q\left(q,q^{\prime}\right)=\top,
\end{equation}
so \(F_*\circ F^*\leq\operatorname{Id}_Q\), proving the proposed adjunction.
\end{proof}

In summary, this remark isolates the categorical fact that every monotone map generates a profunctor adjunction. On the other hand, Theorem~\ref{thm.4} states the sharper fact that a specific poset adjunction \(F\dashv G\) is recovered as the equality of the companion and the conjoint maps \(F_*=G^*\). 

Additionally, we shall observe the following. Let \(\Phi:P\nrightarrow Q\) and \(\Psi:Q\nrightarrow P\) be Boolean profunctors satisfying \(\operatorname{Id}_P\leq\Psi\circ\Phi\) and \(\Phi\circ\Psi\leq\operatorname{Id}_Q\). For each \(p\in P\), the unit at \(\left(p,p\right)\) supplies some \(q_p\in Q\) such that
\begin{equation}
\Phi\left(p,q_p\right)=\top,\quad \Psi\left(q_p,p\right)=\top.
\end{equation}
Now, suppose that both \(q_1,q_2\) have this property. Since
\begin{equation}
\Psi\left(q_1,p\right)=\top,\quad \Phi\left(p,q_2\right)=\top,
\end{equation}
the counit implies \(q_1\leq q_2\). Interchanging \(q_1\) and \(q_2\) gives \(q_2\leq q_1\). Since \(Q\) is a poset, we get \(q_1=q_2\). Thus, every \(p\in P\) determines a unique canonical element \(F\left(p\right):=q_p\). Furthermore, 
\begin{equation}
\Phi\left(p,q\right)=\top\iff F\left(p\right)\leq q.
\end{equation}
Indeed, the reverse implication follows from the covariance of \(\Phi\), while for the forward implication, we use \(\Psi\left(F\left(p\right),p\right)=\top\), \(\Phi\left(p,q\right)=\top\) and the counit. Similarly,
\begin{equation}
\Psi\left(q,p\right)=\top\iff q\leq F\left(p\right),
\end{equation}
Finally, this implies that \(F\) is monotone. Thus
\begin{equation}
\Phi=F_*,\quad \Psi=F^*.
\end{equation}
Therefore, exact adjunctions in \(\operatorname{OEnt}\) exhibit a rigidity phenomenon. Although the ambient bicategory \(\operatorname{OEnt}\) contains genuinely non-representable, many-to-many feasibility interfaces, every exact adjunction \(\Phi\dashv\Psi\) is necessarily represented by the companion and conjoint of a unique monotone map \(F:P\longrightarrow Q\), namely \(\Phi=F_*\), \(\Psi=F^*\). Hence, openness occurs at the level of arbitrary interfaces, whereas imposing an exact adjunction selects a distinguished representable subclass.  

Consequently, genuinely non-representable implementation/abstraction pairs cannot satisfy the exact adjunction axioms in the bicategory \(\operatorname{OEnt}\). In order to accommodate such pairs while retaining genuinely many-to-many behavior, one would have to weaken the exact adjunction condition. For example, this could be done by considering a lax, defective, or cost-bounded notion of adjunction. We do not focus on developing such a weakening in the present Boolean framework since we will devote more attention to this topic in Section~\ref{sec.3}. 

\vspace{0.2cm}
The construction of representable interfaces in Definition~\ref{def.16} and Theorem~\ref{thm.4} can be formulated analogously via the bulk/boundary entropy systems. We just replace $P$ with $B$ as the boundary system of logical/informational macrostates and $Q$ with $D$ as the bulk system of thermodynamic/physical macrostates as posets. 

Subsequently, it is possible to generalize the result of Theorem~\ref{thm.4} further and present the Landauer adjunction as a universal property. This claim is based on a classical result from the field of posetal bicategories and it explains why the right Landauer interface can be regarded as a best abstraction, while the left Landauer interface can be regarded as the corresponding best implementation interface.
\begin{theorem}\cite{johnson20212}\label{thm.5}
\textit{Let $\mathscr{B}$ be a locally posetal bicategory. Fix objects $A,B$ and let $f: A\longrightarrow B$, $g: B\longrightarrow A$ be $1$-cells in $\mathscr{B}$. Then the following data are equivalent:}
\begin{enumerate}
    \item $f\dashv g$
    \item  For every object $X$ and every pair of $1$-cells $x:X\longrightarrow A$, $y:X\longrightarrow B$ in $\mathscr{B}$, we have
    \begin{equation}\label{eq.111}
    f\circ x\leq y\iff x\leq g\circ y.
    \end{equation}
\end{enumerate}
\textit{Moreover, the equivalence~(\ref{eq.111}) is natural and monotone in $x$ and $y$.}
\end{theorem}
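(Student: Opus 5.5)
To prove Theorem~\ref{thm.5}, I would show the two implications directly. Throughout I would use that in a locally posetal bicategory the associators and unitors may be treated as equalities (Proposition~\ref{prop.2}), and that whiskering is monotone (Proposition~\ref{prop.1}). Hence composites of $1$-cells can be freely reassociated, and inequalities can be composed on either side with any $1$-cell.

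For (1)$\implies$(2), I would assume $\operatorname{Id}_A\leq g\circ f$ and $f\circ g\leq\operatorname{Id}_B$, and fix $x:X\longrightarrow A$ and $y:X\longrightarrow B$. If $f\circ x\leq y$, then whiskering on the left by $g$ gives $g\circ f\circ x\leq g\circ y$. Whiskering the unit by $x$ on the right gives $x=\operatorname{Id}_A\circ x\leq g\circ f\circ x$. Transitivity in the hom-poset $\mathscr{B}(X,A)$ then yields $x\leq g\circ y$. Conversely, if $x\leq g\circ y$, then whiskering by $f$ gives $f\circ x\leq f\circ g\circ y$. Whiskering the counit by $y$ gives $f\circ g\circ y\leq\operatorname{Id}_B\circ y=y$, so $f\circ x\leq y$.

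For (2)$\implies$(1), I would use the Yoneda-style trick of specializing to well-chosen test cells. First take $X:=A$, $x:=\operatorname{Id}_A$ and $y:=f$. The left side $f\circ\operatorname{Id}_A\leq f$ holds by reflexivity after the unitor, so (2) gives $\operatorname{Id}_A\leq g\circ f$. Next take $X:=B$, $x:=g$ and $y:=\operatorname{Id}_B$. The right side $g\leq g\circ\operatorname{Id}_B$ holds trivially, so (2) gives $f\circ g\leq\operatorname{Id}_B$. By the equivalence recorded before Definition~13 (the two inequalities in~(\ref{eq.47}) suffice, since triangle identities are automatic in hom-posets), this gives $f\dashv g$.

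For the final clause, naturality needs only a short argument because all hom-categories are posets. Any two parallel composites of $2$-cells coincide, so naturality squares commute automatically. What remains to state is that, for fixed $X$, the maps $y\mapsto g\circ y$ and $x\mapsto f\circ x$ are monotone (Proposition~\ref{prop.1}) and are compatible with precomposition by any $z:X'\longrightarrow X$, since $(f\circ x)\circ z=f\circ(x\circ z)$ after reassociation. Thus (\ref{eq.111}) exhibits $f\circ(-)\dashv g\circ(-)$ as a Galois connection between $\mathscr{B}(X,A)$ and $\mathscr{B}(X,B)$, natural in $X$. I expect no real obstacle. The only point requiring care is the bookkeeping of unitors and associators in a general (non-strict) bicategory, which Proposition~\ref{prop.2} disposes of.
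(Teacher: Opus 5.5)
Your argument is correct and complete. The paper gives no proof of Theorem~\ref{thm.5} of its own; it quotes the result from \cite{johnson20212}. Your argument is the standard proof of that fact: whiskering the unit and counit for (1)$\implies$(2), and for (2)$\implies$(1) testing on $x=\operatorname{Id}_A$, $y=f$ and on $x=g$, $y=\operatorname{Id}_B$. The paper uses exactly this kind of test-cell specialization immediately afterwards ($X=B$, $x=\operatorname{Id}_B$, $y=\chi$) to obtain (\ref{eq.114}).

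There is one cosmetic slip in your references. The reduction of an adjunction to the two inequalities (\ref{eq.47}) is stated just before the definition of adjunction in $\operatorname{OEnt}$, not before Definition~\ref{def.13}. The remark that the triangle identities are automatic comes after Definition~\ref{def.13}.

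Your use of Proposition~\ref{prop.2} is not essential either. The unitor isomorphisms already supply the inequalities $f\circ\operatorname{Id}_A\leq f$ and $g\leq g\circ\operatorname{Id}_B$ that you need. So your proof would go through even if the homs were only preorders.
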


Now substituting $\mathscr{B}=\operatorname{OEnt}$ and interpreting $A$ as a boundary $(B)$ and $B$ as a bulk $(D)$ in Theorem~\ref{thm.5}, we can display the result of Theorem~\ref{thm.4} as the following equivalence
\begin{equation}\label{eq.112}
\operatorname{OEnt}\left(B,D\right)\left(\Phi,\chi\right)\cong\operatorname{OEnt}\left(B,B\right)\left(\operatorname{Id}_B,\Psi\circ\chi\right).
\end{equation}

Indeed, we just took \(X=B\), \(x=\operatorname{Id}_B:B\nrightarrow B\) and \(y=\chi:B\nrightarrow D\), so the adjunction universal property gives
\begin{equation}\label{eq.113}
\Phi\circ\operatorname{Id}_B\leq\chi\iff\operatorname{Id}_B\leq\Psi\circ\chi.
\end{equation}
Using the unit law $\Phi\circ\operatorname{Id}_B=\Phi$, we obtain
\begin{equation}\label{eq.114}
\Phi\leq\chi\iff\operatorname{Id}_B\leq\Psi\circ\chi,
\end{equation}
which exactly represents the universal property as desired. 

In fact, this is a precise categorical meaning of $\Psi$ being the best abstraction, as a right adjoint, and $\Phi$ being the best interface that makes the unit inequality hold. This universal property states that a Landauer adjunction does not merely provide two compatible interfaces. In particular, it makes one interface universal among those implementations whose abstraction satisfies the prescribed boundary constraint. This is the exact sense in which the right interface behaves as an optimal abstraction and the left interface behaves as its corresponding implementation mechanism.

\subsection{\textbf{Eilenberg-Moore object of an idempotent Landauer monad in \texorpdfstring{$\operatorname{OEnt}$}{OEnt}}}

In category theory, a monad $T:C\longrightarrow C$ packages a completion/closure process in the following way
\begin{itemize}
    \item $\eta_T:\operatorname{Id}\implies T$ states that every object embeds into its completion
    \item $\mu_T: T\circ T\implies T$ states that applying the completion twice yields no further change.
\end{itemize}

The Eilenberg-Moore category $C^T$ collects the objects that are already compatible with the completion, i.e. objects equipped with an action $T\left(X\right)\rightarrow X$ satisfying the coherence condition.

This is where the idempotence of the monad $T$ becomes important. When a monad is idempotent, then $\mu_T$ is invertible. Hence, the completion turns into a reflection. In fact, applying $T$ sends us into a full subcategory of $T$-complete objects, and those objects are precisely the fixed points of T. In that case, the Eilenberg-Moore category is equivalent to a reflective subcategory of $C$. This corresponds to the classical connection
\begin{equation}\label{eq.115}
idempotent\hspace{0.1cm}monads\longleftrightarrow reflective\hspace{0.1cm}subcategories\longleftrightarrow splittings\hspace{0.1cm} of\hspace{0.1cm}idempotents.
\end{equation}

Within a bicategory, a monad on an object $P$ is an endo-1-cell $T:P\longrightarrow P$ with 2-cells $\eta_T:\operatorname{Id}_P\implies T$ and $\mu_T:T\circ T\implies T$. The Eilenberg-Moore object $P^T$ is the bicategorical substitute for the Eilenberg-Moore category, so it is essentially an object equipped with a universal $T$-algebra that represents all of the $T$-algebras. 

As we have seen in all of our results above, in a locally posetal bicategory, the ideas of Eilenberg-Moore category and $T$-algebras become especially clear because every hom corresponds to a poset. Thus, all of the algebra axioms become inequalities, and idempotence becomes equality. Therefore, we shall write $f\leq g$ for the unique $2$-cell $f\implies g$, if it exists. 

Let us summarize these considerations in the following definitions.
\begin{definition}\cite{lack20092}
\textit{Let $\mathscr{B}$ be a locally posetal bicategory, and let \(T:P\longrightarrow P\) be a monad. Next, let us fix an object \(X\). A $T$-algebra with a domain $X$ consists of a \(1\)-cell $x:X\longrightarrow P$ together with an action (\(2\)-cell) $a_x:T\circ x\implies x$, i.e. $T\circ x\leq x$}
\end{definition}

Let us mention that in a locally posetal bicategory, once an action \(a:T\circ x\leq x\) exists, the two usual algebra diagrams become equal automatically because there is at most one \(2\)-cell between fixed parallel \(1\)-cells, and the associativity axiom is automatic. Additionally, whiskering the unit \(\eta_T\) by \(x\) gives us \(x=\operatorname{Id}_P\circ x\leq T\circ x\), so the unit condition is automatic as well.

Hence, we can directly identify the poset of $T$-algebras from $X$ with the set
\begin{equation}\label{eq.116}
\operatorname{Alg}_T\left(X\right):=\{x:X\longrightarrow P\hspace{0.1cm}\vert\hspace{0.1cm}T\circ x\leq x\},
\end{equation}
ordered by the hom-order of $\mathscr{B}\left(X,P\right)$.

Since $\operatorname{Id}_P\leq T$, every $x\in\operatorname{Alg}_T\left(X\right)$ satisfies
\begin{equation}\label{eq.117}
x=\operatorname{Id}_P\circ x\leq T\circ x\leq x,
\end{equation}
thus $x=T\circ x$ for every $x\in\operatorname{Alg}_T\left(X\right)$. This kind of fixed point phenomenon is the underlying reason why idempotent monads behave like reflections. 

Now, let us proceed to the definition of the Eilenberg-Moore object.
\begin{definition}\cite{lack20092}
An Eilenberg-Moore object for a monad $T:P\longrightarrow P$ corresponds to
\begin{enumerate}
    \item an object $P^T$,
    \item a $1$-cell $u:P^T\longrightarrow P$,
    \item a $2$-cell $\alpha: T\circ u\implies u$,
\end{enumerate}
such that for every object $X$, the postcomposition with $u$ induces an isomorphism of posets
\begin{equation}\label{eq.118}
\mathscr{B}\left(X,P^T\right)\cong\operatorname{Alg}_T\left(X\right),
\end{equation}
which is natural in $X$.
\end{definition}
Equivalently, one can say that the $1$-cell $u:P^T\longrightarrow P$, equipped with its \(T\)-algebra action \(\alpha: T\circ u\implies u\), represents $T$-algebras. At this point, we shall apply all of the above considerations to our setup of the bicategory $\operatorname{OEnt}$. From the point of view of our profunctorial notation, it is important to note that in ordinary category, a monad is an endofunctor. However, in a bicategory, a monad on an object \(B\) is an endo-\(1\)-cell \(T:B\nrightarrow B\) equipped with unit and multiplication \(2\)-cells. Since in \(OEnt\) the \(1\)-cells are profunctors/open interfaces, the Landauer monad \(T\) will be treated as an endo-interface \(B\nrightarrow B\).

We have the following lemma.

\begin{lemma}\label{lem.2}
If \(T:B\nrightarrow B\) satisfies \(\operatorname{Id}_B\leq T\) and \(T\circ T\leq T\), then \(T\circ T=T.\)
\end{lemma}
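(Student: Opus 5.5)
The claim reduces to antisymmetry in the hom-poset $\operatorname{OEnt}(B,B)$, which is a poset by Theorem~\ref{thm.2}. One inequality, $T\circ T\leq T$, is given as a hypothesis. So the whole task is to derive the reverse inequality $T\leq T\circ T$ from the unit inequality $\operatorname{Id}_B\leq T$. Antisymmetry then forces $T\circ T=T$.

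To obtain $T\leq T\circ T$, I will whisker the unit inequality $\operatorname{Id}_B\leq T$ by $T$ using Proposition~\ref{prop.1}. Taking $f=\operatorname{Id}_B$, $f^{\prime}=T$ and $g=T$ in part 1 gives
\begin{equation}
\nonumber
T\circ\operatorname{Id}_B\leq T\circ T.
\end{equation}
The right unit law from Theorem~\ref{thm.2}, $T\circ\operatorname{Id}_B=T$ (an honest equality, since $\operatorname{OEnt}$ is strict), turns the left-hand side into $T$. This yields $T\leq T\circ T$. Whiskering on the other side, via part 2 of Proposition~\ref{prop.1}, would serve equally well.

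As a sanity check I will also verify this directly at the level of relations. Suppose $T(b,b^{\prime})=\top$. Because $\operatorname{Id}_B(b,b)=\top$ and $\operatorname{Id}_B\leq T$, we have $T(b,b)=\top$. Then $b$ itself witnesses $(T\circ T)(b,b^{\prime})=\top$ in the composition formula of Definition~\ref{def.7}. This elementwise argument doubles as an independent proof.

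No step here is a genuine obstacle. The only point needing care is to rely on hom-antisymmetry from Theorem~\ref{thm.2}, or equivalently Proposition~\ref{prop.2}, rather than merely on the existence of 2-cells in both directions. In a general bicategory those would only give an isomorphism, whereas in $\operatorname{OEnt}$ they give equality. I will also note that the hypotheses do not need $T$ to come from an adjunction, so the lemma applies to any idempotent-candidate endo-interface. In particular it upgrades the inequality $T\circ T\leq T$ of Definition~\ref{def.14} to the equality $T^2=T$ asserted in Theorem~\ref{thm.3}.
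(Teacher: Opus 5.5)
Your proposal is correct and follows the paper's proof: whisker $\operatorname{Id}_B\leq T$ by $T$ to get $T=T\circ\operatorname{Id}_B\leq T\circ T$, then combine this with the hypothesis $T\circ T\leq T$ via antisymmetry in the hom-poset $\operatorname{OEnt}(B,B)$. Your elementwise check, with $b$ witnessing $(T\circ T)(b,b^{\prime})=\top$, is a valid independent confirmation of the same inequality.
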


\vspace{0.1cm}
\begin{proof}[Proof of Lemma~\ref{lem.2}]
Since \(\operatorname{Id}_B\leq T\), then by monotonicity of the composition, we obtain
\begin{equation}
T=T\circ\operatorname{Id}_B\leq T\circ T.
\end{equation}
Together with \(T\circ T\leq T\), the antisymmetry in the hom-poset \(\operatorname{OEnt}\left(B,B\right)\) yields
\begin{equation}
T\circ T=T.
\end{equation}
\end{proof}

These are precisely the conditions that actually state that $T\circ T=T$, thus T is an idempotent monad. Furthermore, define a binary relation $\preceq_T$ on the underlying set of $B$ by
\begin{equation}\label{eq.120}
b\preceq_T b^{\prime}\iff T\left(b,b^{\prime}\right)=\top.
\end{equation}

Next, let $\sim_T$ represent the induced equivalence relation
\begin{equation}\label{eq.121}
b\sim_T b^{\prime}\iff\left(b\preceq_T b^{\prime}\right)\land\left(b^{\prime}\preceq_Tb\right).
\end{equation}
Furthermore, \(\operatorname{OEnt}\) is used only up to order-theoretic structure, so any poset obtained by the present construction is admitted as its object.  

We now arrive at the main structural result of the paper. In ordinary category theory, idempotent monads correspond to reflective subcategories, and their Eilenberg-Moore algebras may equivalently be characterized as fixed objects of the monad \cite{DiLiberti2020}. However, in a general bicategory, the existence of the Eilenberg-Moore object is an additional representability property and should not be assumed automatically \cite{Street1972}. What is specific to the present work is the origin and interpretation of the idempotent monad, i.e. here \(T=\Psi\circ\Phi\) is not an arbitrary closure operator, but the boundary round trip obtained by implementing boundary data in a thermodynamic bulk and then abstracting it back to the boundary. The resulting Eilenberg–Moore object therefore has a concrete thermodynamic meaning - it is the universal sector of boundary information that is stable under the chosen bulk-mediated implementation/readout process. 

Before we proceed to the following theorem, we need to clarify certain technicalities. Since \(T\) is generally a relation-valued endo-interface, we do not define the stable sector by literal fixed points of the form \(T\left(b\right)=d\). Instead, we define the preorder generated by \(T\) via the relation~(\ref{eq.120}). The stable boundary sector is then obtained from this preorder by identifying mutually \(T\)-related boundary states as in~(\ref{eq.121}). Thus, the object \(B_T:=B/\sim_T\) will be a relational/profunctorial analogy of the fixed-point subposet of a closure operator. 

The abstract categorical mechanisms underlying the following result are classical: idempotent monads, their splittings, and Eilenberg-Moore objects. These belong to the standard formal theory of monads in bicategories \cite{Street1972}, while adjoint distributors and their relation to Cauchy completeness and idempotent splitting are part of the well established theory of enriched distributors \cite{stubbe2005categorical}. What is specific to the present framework is the explicit identification of the idempotent \(T:=\Psi\circ\Phi:B\nrightarrow B\) arising from an exact Landauer adjunction with a thermodynamically meaningful round-trip preorder on boundary descriptions
\begin{equation}
b\preceq_T b^{\prime}\iff T\left(b,b^{\prime}\right)=\top,
\end{equation}
and the subsequent construction of the quotient (poset reflection) \(B_T:=B/{\sim_T}\), where the relation \(\sim_T\) is defined as \(b\sim_T b^{\prime}\iff b\preceq_T b^{\prime}\land b^{\prime}\preceq_T b\).

In the theorem below, we prove that this explicitly constructed quotient is not merely an order-theoretic reduction of \(B\), but that it realizes the Eilenberg-Moore object of the Landauer round-trip monad \(T\). Thus, the categorical universal property acquires a concrete Landauer interpretation as \(B_T\) classifies precisely the boundary distinctions that remain observable after implementation through the bulk and subsequent readout. In this sense, the object \(B_T\) represents the bulk-visible boundary sector determined by the chosen Landauer coupling. Notice that no identification of \(B_T\) with the full microscopic bulk \(D\) is asserted. 

\newpage
Now we introduce our main structural result. 

\begin{theorem}\label{thm.6}(\textbf{Boundary-stable sector and bulk visible reconstruction})
\textit{Let \(T:=\Psi\circ\Phi:B\nrightarrow B\) be the boundary Landauer monad induced by an exact Landauer adjunction \(\Phi\dashv\Psi\), and let \(B_T:=B/{\sim_T}\) be the poset reflection of the round-trip preorder induced by \(T\). Equivalently, \(B_T\) is the set of equivalence classes with the order \(\preceq_T\) defined as}
\begin{equation}\label{eq.122}
[b]\leq_T[b^{\prime}]\iff b\preceq_T b^{\prime}.
\end{equation}
Then \(B_T\) gives an explicit splitting and Eilenberg-Moore realization of the Landauer round-trip monad. More precisely:
\begin{enumerate}
    \item $\left(B_T,\leq_T\right)$ is a well-defined poset, i.e. object of the bicategory $\operatorname{OEnt}$
    \item There exist $1$-cells
    \begin{equation}\label{eq.123}
    r:B\nrightarrow B_T,\hspace{0.2cm}i:B_T\nrightarrow B,
    \end{equation}
    elementwise
    \begin{equation}\label{eq.124}
    r\left(b,[c]\right)=\top\iff b\preceq_T c,\hspace{0.2cm} i\left([c],b\right)=\top\iff c\preceq_T b.
    \end{equation}
    \item These \(1\)-cells satisfy \begin{equation}\label{eq.125}
    i\circ r=T, \hspace{0.2cm} r\circ i=\operatorname{Id}_{B_T},
    \end{equation}
    hence \(r\dashv i\) in \(\operatorname{OEnt}\),
    \item Let \(\alpha:T\circ i\implies i\) be the canonical \(2\)-cell induced by the equality \(T\circ i=i\). Then the triple \(\left(B_T,i,\alpha\right)\) is an Eilenberg-Moore object for the monad \(T\). Hence, for every object $X\in\operatorname{OEnt}$, the postcomposition with $i$ induces an isomorphism of posets
    \begin{equation}\label{eq.126}
    \operatorname{OEnt}\left(X,B_T\right)\cong\operatorname{Alg}_T\left(X\right),
    \end{equation}
    natural in \(X\), where $\operatorname{Alg}_T\left(X\right)=\{x:X\nrightarrow B\hspace{0.1cm}\vert\hspace{0.1cm}T\circ x\leq x\}\subseteq\operatorname{OEnt}\left(X,B\right)$.
\end{enumerate}
\end{theorem}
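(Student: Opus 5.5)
The plan is to reduce everything to two elementary properties of the Boolean relation \(\preceq_T\): that it is a preorder on \(B\), and that it contains the native order \(\leq_B\). Once these are in place, items 2--4 become short computations with existential quantifiers, together with formal manipulations in the hom-posets using Proposition~\ref{prop.1}, Proposition~\ref{prop.2} and Lemma~\ref{lem.2}.

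\emph{Step 1 (the preorder and item 1).} By Theorem~\ref{thm.3} and Lemma~\ref{lem.2}, the interface \(T\) satisfies \(\operatorname{Id}_B\leq T\) and \(T\circ T=T\). I will read these pointwise.
\begin{itemize}
\item Evaluating \(\operatorname{Id}_B\leq T\) at \((b,b^{\prime})\) gives \(b\leq_B b^{\prime}\implies b\preceq_T b^{\prime}\). In particular \(\preceq_T\) is reflexive and contains \(\leq_B\).
\item Evaluating \(T\circ T\leq T\) gives transitivity: if \(b\preceq_T c\) and \(c\preceq_T b^{\prime}\), then \((T\circ T)(b,b^{\prime})=\top\), hence \(T(b,b^{\prime})=\top\).
\end{itemize}
So \(\preceq_T\) is a preorder and \(\sim_T\) is an equivalence relation. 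By transitivity, the truth value of \(b\preceq_T b^{\prime}\) depends only on the classes \([b],[b^{\prime}]\). Hence \(\leq_T\) on \(B_T\) is well defined, and it is antisymmetric by construction. This settles item 1.

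\emph{Step 2 (the \(1\)-cells \(r,i\) and item 2).} The same representative-independence shows that \(r\) and \(i\) are well-defined relations. For the variance conditions required by Theorem~\ref{thm.1}:
\begin{itemize}
\item For \(r\): if \(b^{\prime}\leq_B b\) and \(b\preceq_T c\), then \(b^{\prime}\preceq_T b\preceq_T c\), which gives contravariance in \(B\). Covariance in \(B_T\) is exactly transitivity of \(\preceq_T\).
\item For \(i\): the argument is symmetric, again using only that \(\leq_B\subseteq\preceq_T\) and transitivity.
\end{itemize}

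\emph{Step 3 (item 3).} Compute the two composites directly from Definition~\ref{def.7}:
\[
(i\circ r)(b,b^{\prime})\iff \exists [c]\,\bigl(b\preceq_T c\wedge c\preceq_T b^{\prime}\bigr)\iff b\preceq_T b^{\prime},
\]
\[
(r\circ i)([c],[c^{\prime}])\iff \exists b\,\bigl(c\preceq_T b\wedge b\preceq_T c^{\prime}\bigr)\iff c\preceq_T c^{\prime}.
\]
In each case, ``\(\Leftarrow\)'' uses the witness \(c=b\) (respectively \(b=c\)) together with reflexivity, and ``\(\Rightarrow\)'' uses transitivity. Thus \(i\circ r=T\) and \(r\circ i=\operatorname{Id}_{B_T}\). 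The adjunction \(r\dashv i\) follows from the two inequalities \(\operatorname{Id}_B\leq T=i\circ r\) and \(r\circ i=\operatorname{Id}_{B_T}\leq\operatorname{Id}_{B_T}\).

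\emph{Step 4 (item 4, the Eilenberg--Moore property).} First, \(T\circ i=i\circ r\circ i=i\), so \(\alpha\) exists. Next, by the fixed-point argument~(\ref{eq.117}), \(\operatorname{Alg}_T(X)=\{x\mid T\circ x=x\}\).

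Define two maps:
\begin{itemize}
\item \(\Lambda:\operatorname{OEnt}(X,B_T)\to\operatorname{Alg}_T(X)\), \(y\mapsto i\circ y\). This lands in algebras because \(T\circ i\circ y=i\circ y\).
\item \(\Gamma:\operatorname{Alg}_T(X)\to\operatorname{OEnt}(X,B_T)\), \(x\mapsto r\circ x\).
\end{itemize}
They are mutually inverse:
\[
r\circ i\circ y=y,\qquad i\circ r\circ x=T\circ x=x .
\]
Both maps are monotone by Proposition~\ref{prop.1}, so together they give an isomorphism of posets. Naturality in \(X\) means compatibility with precomposition by any \(z:X^{\prime}\nrightarrow X\). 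This holds because \((i\circ y)\circ z=i\circ(y\circ z)\) by strict associativity (Theorem~\ref{thm.2}), and because \(x\mapsto x\circ z\) preserves the condition \(T\circ x=x\).

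I expect the only genuine obstacle to be Step 2 together with the class-level bookkeeping: checking that \(r\) and \(i\) have the correct variance with respect to the \emph{native} order on \(B\), not merely with respect to \(\preceq_T\). This hinges entirely on the inclusion \(\leq_B\subseteq\preceq_T\) supplied by the unit. I will state that inclusion once, explicitly, and reuse it. Everything after Step 3 is formal.
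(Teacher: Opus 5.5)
Your proposal is correct and follows the paper's proof essentially step for step. It covers the preorder and its quotient, the variance checks for \(r\) and \(i\) via \(\leq_B\subseteq\preceq_T\), the computation of \(i\circ r\) and \(r\circ i\), and the mutually inverse monotone maps \(y\mapsto i\circ y\), \(x\mapsto r\circ x\), natural under precomposition. The only cosmetic difference is in Step 3: you argue directly from reflexivity and transitivity of \(\preceq_T\), while the paper identifies both composites with \(T\circ T\) and then invokes \(T\circ T=T\), which is the same fact unpacked.
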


\begin{proof}[Proof of Theorem~\ref{thm.6}]

First of all, we have to show that $\preceq_T$ is a preorder and it extends the original order.

Focusing on reflexivity, let us take any $b\in B$ and since $\operatorname{Id}_B\leq T$ with $\operatorname{Id}_B\left(b,b\right)=\top$ (because $b\leq b$), we obtain $T\left(b,b\right)=\top$. Hence, $b\preceq_T b$.

Considering the transitivity property, we shall presume that $b\preceq_T c$ and $c\preceq_T$ d, i.e. $T\left(b,c\right)=\top$ and $T\left(c,d\right)=\top$. Then, according to the definition of composition, we acquire
\begin{equation}\label{eq.127}
\left(T\circ T\right)\left(b,d\right)=\top.
\end{equation}

One can equivalently rewrite~(\ref{eq.127}) as
\begin{equation}\label{eq.128}
\left(T\circ T\right)\left(b,d\right)=\top\iff\exists x:T\left(b,x\right)\land T\left(x,d\right).
\end{equation}

Since $T\circ T\leq T$, we get
\begin{equation}\label{eq.129}
\left(T\circ T\right)\left(b,d\right)=\top\implies T\left(b,d\right)=\top.
\end{equation}
Thus $b\leq_T d$. This implies that $\preceq_T$ is a preorder. 

Consequently, if $b\leq b^{\prime}$, then $\operatorname{Id}_B\left(b,b^{\prime}\right)=\top$, and because $\operatorname{Id}_B\leq T$, we obtain $T\left(b,b^{\prime}\right)=\top$, i.e. $b\preceq_T b^{\prime}$. In other words, this extends the original preorder. 

In the next step, we will show that $\leq_T$ is well-defined on $B_T$ and makes $B_T$ a poset. Basically, we must show that if $b\sim_T b_1$ and $b^{\prime}\sim_T b_1^{\prime}$, then
\begin{equation}\label{eq.130}
b\preceq_T b^{\prime}\iff b_1\preceq_T b_1^{\prime}.
\end{equation}

Let us presume that $b\sim_T b_1$ and $b^{\prime}\sim_T b_1^{\prime}$. Then we get
\begin{equation}\label{eq.131}
b\preceq_T b_1,\hspace{0.2cm}b_1\preceq_T b, \hspace{0.5cm}b^{\prime}\preceq_T b_1^{\prime},\hspace{0.2cm} b_1^{\prime}\preceq_T b^{\prime}.
\end{equation}

Now, if $b\preceq_T b^{\prime}$, then by transitivity with $b_1\preceq_T b$ we get $b_1\preceq_T b^{\prime}$. Then, considering $b^{\prime}\preceq_T b_1^{\prime}$, we obtain $b_1\preceq_T b_1^{\prime}$.

Conversely, if we presume that $b_1\preceq_T b_1^{\prime}$, then with $b\preceq_T b_1$, we have $b\preceq_T b_1^{\prime}$. Subsequently, with $b_1^{\prime}\preceq_T b^{\prime}$, we acquire $b\preceq_T b^{\prime}$.

Thus, the issue of well-definedness depends only on equivalence classes, so $[b]\leq_T [b^{\prime}]$ is well-defined. 

Considering the poset axioms, the first that comes to hand is reflexivity. Well, we have $[b]\leq_T[b]$ because $b\preceq_T b$. When coming to transitivity, if $[b]\leq_T[c]$ and $[c]\leq_T[d]$, then $b\preceq_T c$ and $c\preceq_T d$. Thus $b\preceq_T d$, so $[b]\leq_T[d]$. Taking into account the antisymmetry, if $[b]\leq_T[b^{\prime}]$ and $[b^{\prime}]\leq_T[b]$, then $b\sim_T b^{\prime}$. Hence $[b]=[b^{\prime}]$ via definition of the quotient. Therefore, $B_T$ is a poset. 

In the following step, we shall define the functors $r$ and $i$ and then prove that they are $1$-cells in $\operatorname{OEnt}$.

Recall that a $1$-cell (profunctor) $P\nrightarrow Q$ corresponds to a functor $P^{op}\times Q\longrightarrow\operatorname{Bool}$, i.e. it is monotone contravariantly in $P$ and covariantly in $Q$. Let us now focus on the well-definedness of $r$ and $i$ with respect to representatives. 

We define
\begin{equation}\label{eq.132}
r\left(b,[c]\right)=\top\iff T\left(b,c\right)=\top,\hspace{0.2cm}i\left([c],b\right)=\top\iff T\left(c,b\right)=\top.
\end{equation}
Suppose that $c\sim_T c^{\prime}$. Then $c\preceq_T c^{\prime}$ and $c^{\prime}\preceq_T c$. Hence, for $r: T\left(b,c\right)=\top$ and $T\left(c,c^{\prime}\right)=\top,$ implies $T\left(b,c^{\prime}\right)=\top$, which comes from the transitivity of $\preceq_T$. Identically, we have $T\left(b,c^{\prime}\right)=\top\implies T\left(b,c\right)=\top$. Therefore, $T\left(b,c\right)=\top\iff T\left(b,c^{\prime}\right)=\top$. Thus, we can consider $r$ to be well-defined. 

For $i$, we have $T\left(c,b\right)=\top$ and $T\left(c^{\prime},c\right)=\top\implies T\left(c^{\prime},b\right)=\top$. Similarly, we can perform the same procedure in the other direction. Thus, $i$ is also well-defined.

As for whether $r$ and $i$ are profunctors, we must show that they are monotone in each argument. For \(r\), this means proving contravariance in \(B\) and covariance in \(B_T\), plus the other way around for \(i\). Therefore, focusing on the contravariance in \(B\), we shall assume that \(b\leq b^{\prime}\) in \(B\) and \(r\left(b^{\prime},[c]\right)=\top\). Then we have \(b^{\prime}\preceq_T c\). Since the original order extends to \(\preceq_T\), we get
\begin{equation}
b\leq b^{\prime}\implies b\preceq_T b^{\prime}. 
\end{equation}
By transitivity of \(\preceq_T\), we obtain
\begin{equation}
b\preceq_Tb^{\prime}\preceq_T c,
\end{equation}
thus \(b\preceq_T c\), so \(r\left(b,[c]\right)=\top\). 

Proceeding to the covariance in \(B_T\), we presume that \([c]\leq_T [d]\) and \(r\left(b,[c]\right)=\top\). Then \(c\preceq_T d\) and \(b\preceq_T c\). Via transitivity, we get
\begin{equation}
b\preceq_T d, 
\end{equation}
so \(r\left(b,[d]\right)=\top\). Thus \(r\) is a valid \(1\)-cell.

Now we shall perform a similar procedure with \(i\). In order to show that \(i:B_T^{op}\times B\longrightarrow\operatorname{Bool}\) is monotone, we must go through the contravariance in \(B_T\) and covariance in \(B\). Focusing on the first one, we assume that \([c]\leq_T [d]\) and \(i\left([d],b\right)=\top\). Then \(c\preceq_Td\) and \(d\preceq_T b\). Then by transitivity, we have
\begin{equation}
c\preceq_T b,
\end{equation}
so \(i\left([c],b\right)=\top.\)

Proceeding to the covariance in \(B\), we assume that \(b\leq b^{\prime}\) and \(i\left([c],b\right)=\top\). Then \(c\preceq_T b\). Since the original order extends to \(\preceq_T\), we have \(b\preceq_T b^{\prime}\). Therefore,
\begin{equation}
c\preceq_T b\preceq_T b^{\prime},
\end{equation}
hence \(i\left([c],b^{\prime}\right)=\top\). Thus \(i\) is also a valid \(1\)-cell.

Following the structure of Theorem~\ref{thm.6}, we must compute $i\circ r$ and $r\circ i$ and prove the required equalities. First of all, let us fix $b,b^{\prime}\in B$. Then by definition of composition in $\operatorname{OEnt}$, we get
\begin{equation}\label{eq.133}
\left(i\circ r\right)\left(b,b^{\prime}\right)=\top\iff\exists [c]\in B_T: r\left(b,[c]\right)=\top\land i\left([c],b^{\prime}\right)=\top.
\end{equation}
Unpacking $r$ and $i$ provides us the following equivalent expression
\begin{equation}\label{eq.134}
\exists c\in
 B: T\left(b,c\right)=\top\land T\left(c,b^{\prime}\right)=\top.
\end{equation}

However, this exactly corresponds to the definition of $\left(T\circ T\right)\left(b,b^{\prime}\right)=\top$. Hence
\begin{equation}\label{eq.135}
i\circ r=T\circ T,
\end{equation}
as a pointwise equality. 

Recalling the results of Theorem~\ref{thm.3}, the monad $T$ is idempotent, i.e. $T\circ T=T$. Thus
\begin{equation}\label{eq.136}
i\circ r=T. 
\end{equation}

Proceeding to the second equality, we shall fix $[b],[b^{\prime}]\in B_T$. Then by composition, we have
\begin{equation}\label{eq.137}
\left(r\circ i\right)\left([b],[b^{\prime}]\right)=\top\iff\exists c\in B: i\left([b],c\right)=\top\land r\left(c,[b^{\prime}]\right)=\top. 
\end{equation}
Unpacking $r$ and $i$ provides us the following equivalent expression
\begin{equation}\label{eq.138}
\exists c\in B: T\left(b,c\right)=\top\land T\left(c,b^{\prime}\right)=\top.
\end{equation}

Again, we can see that this corresponds to the definition of $\left(T\circ T\right)\left(b,b^{\prime}\right)=\top$. Then, from the idempotency of T, it is equivalent to state $T\left(b,b^{\prime}\right)=\top$, i.e. $b\preceq_T b^{\prime}$. So
\begin{equation}\label{eq.139}
[b]\leq_T[b^{\prime}].
\end{equation}
Applying the definition of identity in our bicategory $\operatorname{OEnt}$, we have
\begin{equation}\label{eq.140}
\operatorname{Id}_{B_T}\left([b],[b^{\prime}]\right)=\top\iff [b]\leq_T [b^{\prime}].
\end{equation}
This directly implies that $\left(r\circ i\right)\left([b],[b^{\prime}]\right)=\operatorname{Id}_{B_T}\left([b],[b^{\prime}]\right)$ for all $[b],[b^{\prime}]$. Hence, we obtain the desired equality
\begin{equation}\label{eq.141}
r\circ i=\operatorname{Id}_{B_T}.
\end{equation}

The next step can be considered quite trivial. We will prove the adjunction $r\dashv i$. In a locally posetal bicategory, the adjunction $r\dashv i$ is equivalent to the following inequalities
\begin{equation}\label{eq.142}
\operatorname{Id_B}\leq i\circ r,\hspace{0.2cm} r\circ i\leq\operatorname{Id}_{B_T}.
\end{equation}
Since $i\circ r=T$ and $\operatorname{Id}_B\leq T$ holds by assumption, we obtain $\operatorname{Id}_B\leq i\circ r$. Moreover, since $r\circ i=\operatorname{Id}_{B_T}$, we get $r\circ i\leq\operatorname{Id}_{B_T}$ (via reflexivity). Therefore, we obtain the adjunction $r\dashv i$.

Within the next crucial step, we have to show that $\left(B_T,i\right)$ is the anticipated Eilenberg-Moore object for the monad $T$. First of all, we shall give $i$ its canonical $T$-algebra structure. 

We need a $2$-cell $\alpha:T\circ i\implies i$, i.e. $T\circ i\leq i$. Well, since $T=i\circ r$, we have
\begin{equation}\label{eq.143}
T\circ i=\left(i\circ r\right)\circ i=i\circ\left(r\circ i\right)=i\circ\operatorname{Id}_{B_T}=i,
\end{equation}
where the second and third equalities follow from the associativity, and unit laws applied to $r$ and $i$. Therefore, $T\circ i=i$, so $T\circ i\leq i.$ This supplies the required $T$-algebra structure for $i$.

Next, we define the comparison maps between hom-posets and algebras. Hence, fix any object $X$ of $\operatorname{OEnt}$. Let us define
\begin{equation}\label{eq.144}
F_X:\operatorname{OEnt}\left(X,B_T\right)\longrightarrow\operatorname{Alg}_T\left(X\right),
\end{equation}\label{eq.145}
as the following composition
\begin{equation}
F_X\left(k\right):=i\circ k. 
\end{equation}
Here we must check that $i\circ k\in\operatorname{Alg}_T\left(X\right)$, i.e. that $T\circ\left(i\circ k\right)\leq i\circ k$. 
Well, it is possible to write
\begin{equation}\label{eq.146}
T\circ\left(i\circ k\right)=\left(i\circ r\right)\circ i\circ k=i\circ\left(r\circ i\right)\circ k=i\circ\operatorname{Id}_{B_T}\circ k=i\circ k.
\end{equation}
Thus $T\circ\left(i\circ k\right)=i\circ k\leq i\circ k$, so we have $F_X\left(k\right)$ as a $T$-algebra.

Additionally, $F_X$ is monotone, because if we presume that $k\leq k^{\prime}$, then $i\circ k\leq i\circ k^{\prime}$ via the monotonicity of composition.

Next, we shall define the map
\begin{equation}\label{eq.147}
G_X:\operatorname{Alg}_T\left(X\right)\longrightarrow\operatorname{OEnt}\left(X,B_T\right),
\end{equation}
as
\begin{equation}\label{eq.148}
G_X\left(x\right):=r\circ x.
\end{equation}

This is automatically considered monotone due to the monotonicity of the composition operation. 
At this point, it only remains to work efficiently with the compositions $F_X\circ G_X$ and $G_X\circ F_X$, in order to obtain the universality isomorphism between $\operatorname{OEnt}$ and $\operatorname{Alg}_T\left(X\right)$.

At first, let us take $x\in\operatorname{Alg}_T\left(X\right)$. Then $T\circ x\leq x$. Because $\operatorname{Id}_B\leq T$, then whiskering on the right-hand side of the above inequality by x yields
\begin{equation}\label{eq.149}
x=\operatorname{Id}_B \circ x\leq T\circ x.
\end{equation}
This gives us $x\leq T\circ x\leq x$. Therefore, by antisymmetry in the hom-poset $\operatorname{OEnt}\left(X,B\right)$, we obtain
\begin{equation}\label{eq.150}
x=T\circ x. 
\end{equation}

Now we can compute the required composition
\begin{equation}\label{eq.151}
\left(F_X\circ G_X\right)\left(x\right)=F_X\left(r\circ x\right)=i\circ\left(r\circ x\right)=\left
(i\circ r\right)\circ x=T\circ x=x,
\end{equation}
where the last equality comes from~(\ref{eq.150}). Therefore, $F_X\circ G_X=\operatorname{Id}.$

Focusing on the opposite composition, we take a profunctor $k:X\nrightarrow B_T$ and compute
\begin{equation}\label{eq.152}
\left(G_X\circ F_X\right)\left(k\right)=G_X\left(i\circ k\right)=r\circ\left(i\circ k\right)=\left(r\circ i\right)\circ k=\operatorname{Id}_{B_T}\circ k=k, 
\end{equation}
using the fact that $r\circ i=\operatorname{Id}_{B_T}$ and the unit law. Hence, we obtain the required composition $G_X\circ F_X=\operatorname{Id}$.  
Finally, we can summarize all of the above-mentioned considerations and conclude that since $F_X$ and $G_X$ are mutually inverse, they define an isomorphism of posets
\begin{equation}\label{eq.153}
\operatorname{OEnt}\left(X,B_T\right)\cong\operatorname{Alg}_T\left(X\right),
\end{equation}
gfor every object $X$ of $\operatorname{OEnt}$.
This shows us that we have constructed a family of isomorphisms for each \(X\), but not the full representing universal property. Hence, it remains to prove the naturality of the isomorphism in \(X\). 

Let \(h:Y\longrightarrow X\) be any \(1\)-cell. Then for \(k:X\longrightarrow B_T\), we have the composition
\begin{equation}
F_Y\left(k\circ h\right)=i\circ\left(k\circ h\right)=\left(i\circ k\right)\circ h=F_X\left(k\right)\circ h.
\end{equation}
Similarly, let \(x\in\operatorname{Alg}_T\left(X\right)\), so that \(T\circ x\leq x\). For any \(1\)-cell \(h:Y\longrightarrow X\), the monotonicity of whiskering yields
\begin{equation}
T\circ\left(x\circ h\right)=\left(T\circ x\right)\circ h\leq x\circ h.
\end{equation}
Hence, \(x\circ h\in\operatorname{Alg}_T\left(Y\right)\), and therefore
\begin{equation}
G_Y\left(x\circ h\right)=r\circ\left(x\circ h\right)=\left(r\circ x\right)\circ h=G_X\left(x\right)\circ h.
\end{equation}
Therefore, the family \(F_X\) is natural in \(X\), with the inverse family \(G_X\).

This exactly corresponds to the desired Eilenberg-Moore universal property in our locally posetal bicategory $\operatorname{OEnt}$, thereby proving that $\left(B_T,i,\alpha\right)$ is an Eilenberg-Moore object for the monad $T$.

This completes the proof of Theorem~\ref{thm.6}.
\end{proof}

\vspace{0.2cm}

It is important to mention that the object \(B_T\) is not obtained by selecting a subset of individually stable boundary states. Rather, it is the poset reflection of the preorder generated by the round-trip relation \(T\). Therefore, it precisely records the distinctions between boundary states that remain visible after saturation under the bulk-mediated interface. More explicitly, two boundary states \(b,b^{\prime}\in B\) determine the same element of \(B_T\) exactly when
\begin{equation}
T\left(b,b^{\prime}\right)=\top\land T\left(b^{\prime},b\right)=\top.
\end{equation}
Equivalently, 
\begin{equation}\label{eq.164}
[b]=[b^{\prime}]\iff b\sim_T b^{\prime}.
\end{equation}
Hence, we should interpret the object \(B_T\) as the universal quotient of boundary information by round-trip indistinguishability.

Moreover, Theorem~\ref{thm.6} does not claim that the originally given bulk object \(D\) is equivalent to \(B_T\). Rather, it uses the boundary data and the round-trip interface \(T=\Psi\circ\Phi\) to construct the part of the theory that is visible through the chosen bulk-boundary coupling. This implies that \(B_T\) can be considered the bulk-visible boundary sector, not necessarily the full microscopic or thermodynamic bulk.  

Now, by recalling the results of Theorem~\ref{thm.4} in relation to Theorem~\ref{thm.6}, one can expand their joint physical interpretation. Considering monotone maps \(F:B\longrightarrow D\), \(G:D\longrightarrow B\), then \(F\) can actually be understood as a deterministic implementation map, i.e. it assigns to each logical state \(b\) a canonical physical realization \(F\left(b\right)\). On the other hand, the map \(G\) corresponds to a deterministic abstraction map - it sends a physical macrostate \(d\) to the logical description \(G\left(d\right)\) that can be read off from it. 

Hence, Theorem~\ref{thm.4} shows that the logical-physical Galois connection \(F\left(b\right)\leq_D d\iff b\leq_B G\left(d\right)\) is precisely the demonstration of two-way testing of feasibility coincidence. Let us be more specific here. For a pair \(\left(b,d\right)\), the truth value \(F_*\left(b,d\right)=\top\) equivalently means that \(F\left(b\right)\leq_D d\). Under the entropy order, the bulk state \(d\) is at least as entropic, or at least as coarse, as the canonical implementation \(F\left(b\right)\). Thus, \(d\) is a bulk state that is thermodynamically sufficient to realize \(b\). It may be noisier, less sharply controlled, or more dissipative than the canonical realization, but nonetheless, it still lies above it in the allowed thermodynamic order. 

By contrast,
\begin{equation}
G^*\left(b,d\right)=\top\iff b\leq_B G\left(d\right).
\end{equation}
This basically says that the logical state \(b\) is no more informative than what the bulk state \(d\) can support when abstracted back to the boundary. Equivalently, the inequality \(b\leq_B G\left(d\right)\) means that the abstraction \(G\left(d\right)\) of the bulk state \(d\) contains at least the boundary information required by \(b\). Equivalently, \(b\) does not demand more logical refinement than can be recovered from \(d\) through the abstraction \(G\). 

Finally, we can summarize the results of Theorem~\ref{thm.4} as the following two-sided Landauer bridge composed of two equivalences:
\begin{tcolorbox}
\begin{center}
\textit{A bulk state \(d\) is feasible for realizing a boundary state \(b\) \(\iff\) \(d\) lies above the canonical implementation \(F\left(b\right)\) \(\iff\) \(b\) lies below the logical abstraction \(G\left(d\right)\).}
\end{center}    
\end{tcolorbox}

At this stage, it is feasible to clearly intertwine the physical interpretation of both Theorems~\ref{thm.4} and~\ref{thm.6}. As we have seen above, Theorem~\ref{thm.4} gives us the deterministic boundary-bulk dictionary, i.e. it describes how a boundary state and a bulk state match. By contrast, Theorem~\ref{thm.6} is about which boundary content survives bulk mediation and therefore becomes physically meaningful as a reconstructed visible sector of the boundary. At first glance, both theorems seem to differ in their nature. Let us then break down the physical meaning of Theorem~\ref{thm.6}.

The Eilenberg-Moore construction points out that if the Landauer adjunction induces the boundary monad \(T=\Psi\circ\Phi\), then Theorem~\ref{thm.6} constructs the Eilenberg-Moore object \(B_T\) and an adjunction \(r\dashv i\), such that \(i\circ r=T\). Specifically, the object \(B_T\) can be identified as the closed/stable sector of boundary data under the bulk-boundary round trip. Technically, one can say that this sector corresponds to the visible bulk, identified up to equivalence with the closed sector of boundary data (stable under the round trip). 

Physically, the composition \(T=\Psi\circ\Phi\) means:
\begin{enumerate}
    \item Start with a boundary state
    \item Send it through the bulk implementation
    \item Bring it back to the boundary
\end{enumerate}

The inequality \(\operatorname{Id}_B\leq T\) means that every comparison already present in the native boundary order is also present in the bulk-mediated round-trip relation. Since \(T\nrightarrow B\) is in general a profunctor rather than an ordinary endomap \(B\longrightarrow B\), this inequality should not by itself be interpreted as a pointwise sharpening or coarsening  of boundary states. Instead, the relevant stability information is contained in the preorder induced by \(T\), i.e. two boundary descriptions become indistinguishable under the round trip precisely when they are mutually \(T\)-related. Therefore, the quotient \(B_T\) retains exactly the distinctions that remain visible after saturation under bulk-mediated implementation and readout. In other words, \(B_T\) represents the boundary data that is stable under this process. We can distinguish such stability/instability in the following way:
\begin{itemize}
    \item In a general profunctorial setting, stability is not defined by selecting individual fixed boundary states. Instead, the object \(B_T\) identifies boundary descriptions that are mutually related by the round-trip preorder, i.e. states \(b,b^{\prime}\in B\) satisfying \(b\sim_T b^{\prime}\), 
    \item Only in the special case where the round-trip structure is represented by an ordinary idempotent endomap \(t:B\longrightarrow B\), does a literal fixed-point notion become available. In this case, a boundary state is fixed precisely when \(t\left(b\right)=b\).
\end{itemize}

This is similarly reflected through the maps \(i\) and \(r\) defined in~(\ref{eq.123})-(\ref{eq.125}):
\begin{itemize}
    \item \(i:B_T\nrightarrow B\) is the canonical interface from the stable quotient back to boundary descriptions,
    \item \(r:B\nrightarrow B_T\) is the quotient/reconstruction interface that sends a boundary description to the \(T\)-classes reachable from it. 
\end{itemize}
The factorization \(T=i\circ r\) shows that every bulk-mediated boundary comparison factors through the stable quotient \(B_T\). Thus, \(B_T\) precisely captures the distinctions retained by the idempotent round-trip relation. It does not merely tell us whether one \(b\) matches one \(d\), it tells us which boundary content is invariant under the whole bulk-mediated process. All of the above considerations about Theorem~\ref{thm.6} can be summarized in the following way:

\begin{tcolorbox}
\begin{center}
Theorem~\ref{thm.6} represents the global stability/reconstruction law \(\iff\) it identifies which boundary descriptions survive the entire boundary–bulk–boundary round trip. 
\end{center}
\end{tcolorbox}

Analogously, we can summarize the physical interpretation of Theorem~\ref{thm.4} as:
\begin{tcolorbox}
\begin{center}
Theorem~\ref{thm.4} is the local matching law \(\iff\) in the representable case it says when a boundary state \(b\) and a bulk state \(d\) are compatible.
\end{center}
\end{tcolorbox}
This basically tells us when one boundary description \(b\) and one bulk macrostate \(d\) are physically compatible.

Notice that the locality/globality of both theorems is itself the main reflection of their differences. Theorem~\ref{thm.4} is local in the sense that it is a statement about one pair \(b,d\) at a time. Basically, it answers when this bulk state realizes such a boundary state and when this physical macrostate supports such a logical description. On the contrary, Theorem~\ref{thm.6} is global, i.e. it is about the whole closure process induced by the interface pair. It answers which boundary descriptions are stable under the implementation round trip and what is the canonical stable sector that is determined by the given physics.  

We shall also add the following corollary, which stems directly from Theorem~\ref{thm.6}.

\newtheorem{cor}{Corollary}
\begin{cor}
Let \(\Phi\dashv\Psi\) be an exact Landauer adjunction. By the representability-rigidity result (Remark~\ref{rem.1}), there exists a unique monotone map \(F:B\longrightarrow D\) with \(\Phi=F_*\), \(\Psi=F^*\). Hence, \(T=F^*\circ F_*\) and
\begin{equation}\label{eq.172}
T\left(b,b^{\prime}\right)=\top\iff F\left(b\right)\leq_D F\left(b^{\prime}\right).
\end{equation}
Therefore,
\begin{equation}
b\sim_T b^{\prime}\iff F\left(b\right)=F\left(b^{\prime}\right),
\end{equation}
and the stable quotient is canonically isomorphic to the image poset
\begin{equation}
B_T\cong\operatorname{Im}\left
(F\right).
\end{equation}
\end{cor}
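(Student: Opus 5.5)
The plan is to take Remark~\ref{rem.1} as given: $\Phi=F_*$ and $\Psi=F^*$ for a unique monotone $F:B\longrightarrow D$, so that $F_*(b,d)=\top\iff F(b)\leq_D d$ and $F^*(d,b^{\prime})=\top\iff d\leq_D F(b^{\prime})$. The corollary then reduces to three short computations. The first is an explicit formula for $T=F^*\circ F_*$. The second reads off the equivalence $\sim_T$. The third identifies the poset reflection $B_T$ with the image of $F$, using the order inherited from $D$.

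\textbf{Step 1 (formula for $T$).} By Definition~\ref{def.7},
\[
T(b,b^{\prime})=\top\iff\exists d\in D:\ F(b)\leq_D d\ \wedge\ d\leq_D F(b^{\prime}).
\]
For the forward direction, transitivity in $D$ gives $F(b)\leq_D F(b^{\prime})$. For the converse, we take the witness $d:=F(b)$. This proves~(\ref{eq.172}). In particular, $b\preceq_T b^{\prime}$ holds exactly when $F(b)\leq_D F(b^{\prime})$.

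\textbf{Step 2 (the equivalence).} By definition, $b\sim_T b^{\prime}$ means $F(b)\leq_D F(b^{\prime})$ and $F(b^{\prime})\leq_D F(b)$. Because $D$ is a poset and not merely a preorder, antisymmetry turns this into $F(b)=F(b^{\prime})$. This is the one place where the standing assumption that objects of $\operatorname{OEnt}$ are posets is actually needed, and it should be stated explicitly.

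\textbf{Step 3 (isomorphism with the image).} Equip $\operatorname{Im}(F)\subseteq D$ with the order restricted from $\leq_D$, and define $\theta:B_T\longrightarrow\operatorname{Im}(F)$ by $\theta([b]):=F(b)$. Step 2 gives both well-definedness and injectivity, and surjectivity holds by construction. By~(\ref{eq.122}) and Step 1, $[b]\leq_T[b^{\prime}]\iff b\preceq_T b^{\prime}\iff F(b)\leq_D F(b^{\prime})$. So $\theta$ is an order embedding onto its image, and therefore a poset isomorphism. The word \emph{canonically} is justified because $\theta$ depends only on $F$, and $F$ is uniquely determined by the adjunction according to Remark~\ref{rem.1}.

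None of these steps is a real obstacle. The only delicate point is bookkeeping. The ambient composite $T=\Psi\circ\Phi$ must be evaluated with the variance conventions of Remark~\ref{rem.1}, where $F^*(q,p)=\top\iff q\leq F(p)$. This should not be confused with the conjoint of Definition~\ref{def.16}. I would also note, optionally, that under $\theta$ the $1$-cell $i$ of Theorem~\ref{thm.6} corresponds to the conjoint of the inclusion $\operatorname{Im}(F)\hookrightarrow D$ composed with $F^*$. This remark is not needed for the statement itself.
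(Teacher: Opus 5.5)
Your proposal is correct and follows the paper's route. You invoke Remark~\ref{rem.1} to get $\Phi=F_*$ and $\Psi=F^*$, evaluate $T(b,b^{\prime})$ using the witness $d=F(b)$, and apply antisymmetry in $D$. Your Step~3 also supplies the explicit isomorphism $B_T\cong\operatorname{Im}(F)$, which the paper only asserts.

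There is one slip, in your optional aside. Under $\theta$, the $1$-cell $i$ corresponds to $F^*\circ j_*$, where $j_*$ is the \emph{companion} of the inclusion $j:\operatorname{Im}(F)\hookrightarrow D$; equivalently, it is the conjoint of the corestriction $B\to\operatorname{Im}(F)$. The conjoint $j^*$ will not work, because $j^*:D\nrightarrow\operatorname{Im}(F)$ cannot be composed with $F^*$ to give a $1$-cell $\operatorname{Im}(F)\nrightarrow B$.
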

This turns the abstract quotient into a concrete, physically interpretable  object. 

In the exact, representable case, this provides a particularly precise interpretation of irreversibility in terms of the loss of distinguishability. Indeed, by~(\ref{eq.121}) and~(\ref{eq.172}) and using the definition of \(\sim_T\) , we have
\begin{equation}
b\sim_T b^{\prime}\iff F\left(b\right)\leq_D F\left(b^{\prime}\right)\land F\left(b^{\prime}\right)\leq_D F\left(b\right).  
\end{equation}
Since \(D\) is a poset, antisymmetry yields
\begin{equation}
b\sim_T b^{\prime}\implies F\left(b\right)=F\left(b^{\prime}\right).
\end{equation}
Hence, distinct boundary descriptions may become indistinguishable under the bulk-mediated implementation whenever they determine the same bulk realization. Therefore, the Landauer round trip need not be interpreted as a deterministic motion of a boundary state toward a coarser state. Rather, the implementation may fail to preserve distinctions originally present on the boundary, and the quotient \(B_T\) records exactly the distinctions that remain visible through the bulk-mediated coupling. 

Before we proceed to the final section of the presented paper, we shall shift our focus to actively implementing this bicategorical framework to finite examples. In particular, we will deal with two toy examples: first, a non-representable open entropy interface, and second, a finite Landauer adjunction combined with a closure endo-interface with a stable sector. 

\vspace{0.2cm}
\subsection{\textbf{Finite toy examples}}

\vspace{0.2cm}
In the following finite examples, we carefully distinguish between two notions. First, a closure interface is an endo-profunctor \(T:B\nrightarrow B\) satisfying \(\operatorname{Id}_B\leq T\) and \(T\circ T\leq T\). Second, an ordinary closure map is a monotone function \(t:B\longrightarrow B\) satisfying \(b\leq t\left(b\right)\) and \(t\left(t\left(b\right)\right)=t\left(b\right)\). These are not the same pieces of data. A closure map may induce profunctors in different ways, and one must specify the chosen representation. Therefore, in the examples below, whenever we use a map \(t\), we treat it as an ordinary order-theoretic closure map. On the contrary, whenever we use \(T\), we compute it directly as a relation \(T\left(b,b^{\prime}\right)\) by profunctor composition.

\vspace{0.3cm}
\hspace{-0.7cm}\textbf{\underline{Ex.1 - Non-representable open entropy interfaces}}

\vspace{0.1cm}
In the first example, we demonstrate why arbitrary profunctors/interfaces are practically useful within our bicategorical framework. It is not yet an adjunction, but its purpose is to show that a boundary state may have a feasible bulk realization set that cannot be generated by a single canonical monotone implementation map. 

Let the boundary entropy poset be the two-element chain
\begin{equation}
B=\{b_0<b_1\}.
\end{equation}

This can be interpreted in the following way:
\begin{itemize}
    \item \(b_0=\) coarser logical description
    \item \(b_1=\) sharper logical description
\end{itemize}
Next, let the bulk entropy poset be given by
\begin{equation}
D=\{0,a,c,1\},
\end{equation}
with order relations \(0<a<1\), \(0<c<1\) and \(a,c\) being incomparable. Diagrammatically, we can draw the following diagram

\begin{figure}[h]
\centering
\begin{tikzpicture}[x=1cm, y=1.2cm, baseline=(current bounding box.center)]
  
  \node (one)  at (0, 2)   {$1$};
  \node (a)    at (-0.4, 1) {$a$};
  \node (c)    at (0.4, 1)  {$c$};
  \node (zero) at (0, 0)   {$0$};

  \draw (zero) -- (a) -- (one);
  \draw (zero) -- (c) -- (one);
\end{tikzpicture}
\caption{The diamond-shaped bulk entropy poset}
\label{fig:bulk-diamond}
\end{figure}
where \(0\) represents a highly controlled low-entropy bulk state, \(a\) and \(c\) are two distinct thermodynamic implementation regimes, \(1\) is a coarse high-entropy bulk state, and \(a,c\) are incomparable because neither implementation regime is thermodynamically reducible to the other. 

Now, we shall define the interface. Let \(\Phi:B\nrightarrow D\) be given by the following fibers
\begin{equation}
\Phi\left(b_0,-\right)=\{a,c,1\}\quad \Phi\left
(b_1,-\right)=\{1\}.
\end{equation}
Equivalently, we can depict the fibers via the truth table
\[
\begin{array}{c|cccc}
\Phi & 0 & a & c & 1 \\
\hline
b_0  & \bot & \top & \top & \top \\
b_1  & \bot & \bot & \bot & \top
\end{array}
\]

Physically, the table tells us that the coarser boundary state \(b_0\) can be realized either by the bulk regime \(a\) and \(c\), or by the coarser regime \(1\). Considering the sharper boundary state \(b_1\), it is feasible only at the maximally coarse bulk state \(1\).

The fact that \(\Phi\) is formally an open interface can be straightforwardly checked through the bulk covariance and the boundary contravariance. For each fixed \(b\in B\), the evaluation \(\Phi\left(b,-\right)\) must be upward closed in \(D\). For \(b_0\), we have \(\Phi\left(b_0,-\right)=\{a,c,1\}\). If \(a\in\Phi\left(b_0,-\right)\), then \(a\leq 1\), and \(1\in\Phi\left(b_0,1\right)\). Also, if \(c\in\Phi\left(b_0,-\right)\), then \(c\leq 1\), and \(1\in\Phi\left(b_0,-\right)\). Therefore, the element \(1\) has no strictly larger element, so \(\Phi\left(b_0,-\right)\) is upward closed for \(b_0\). For \(b_1\), we have \(\Phi\left(b_1,-\right)=\{1\}\). This is upward closed because \(1\) is maximal. In conclusion, the bulk covariance holds.

Let us continue with the boundary contravariance. Since \(b_0\leq b_1\), we need 
\begin{equation}
\Phi\left(b_1,d\right)=\top\implies\Phi\left(b_0,d\right)=\top.
\end{equation}
The only \(d\) such that \(\Phi\left(b_1,d\right)=\top\) is \(d=1\). Since \(\Phi\left(b_0,1\right)=\top\), the condition automatically holds. This implies that \(\Phi\) is a valid Boolean profunctor/open entropy interface.

At this point, we want to show that the open interface \(\Phi\) is actually non-representable. 

Using the opposite perspective, a representable companion interface takes the form
\begin{equation}
F_*\left(b,d\right)=\top\iff F\left(b\right)\leq_D d,
\end{equation}
for some monotone map \(F:B\longrightarrow D\). Hence, for every \(b\in B\), the fiber 
\begin{equation}
F_*\left(b,-\right)=\{d\in D\hspace{0.1cm}\vert\hspace{0.1cm}F\left(b\right)\leq d\},
\end{equation}
must be a principal upset, i.e. \(F_*\left(b,-\right)=\hspace{0.1cm}\uparrow F\left(b\right)\). Thus, if \(\Phi\) were representable, then the fiber \(\Phi\left(b_0,-\right)=\{a,c,1\}\) would have to be principal. However, the principal upsets in \(D\) are
\begin{equation}
\uparrow 0=\{0,a,c,1\},\quad \uparrow a=\{a,1\}, \quad \uparrow c=\{c,1\}, \quad \uparrow 1=\{1\}.
\end{equation}
Obviously, none of these equals \(\{a,c,1\}\). Therefore, no element \(d_0\in D\) satisfies \(\uparrow d_0=\{a,c,1\}\). Hence, there is no monotone map \(F:B\longrightarrow D\) such that \(\Phi=F_*\). This directly implies that \(\Phi\) is non-representable. 

From a physical point of view, this can be considered the simplest finite example that demonstrates the practicality of profunctorial openness. The boundary state \(b_0\) has two incomparable minimal feasible bulk realizations \(a\) and \(c\). There is not a single canonical implementation \(F\left(b_0\right)\), because choosing \(a\) loses the feasible branch through \(c\), and choosing \(c\) loses the feasible branch through \(a\). Moreover, choosing \(0\) would include too much, because \(\uparrow 0=\{0,a,c,1\}\), while \(0\) is not feasible. 

Therefore, the interface cannot be compressed into a deterministic monotone implementation map. This is precisely the phenomenon we desire to capture when we say that logical boundary states and thermodynamic bulk states should be related not by a single map but by a feasibility interface describing all possible (feasible) realizations. Additionally, recalling Remark~\ref{rem.1}, this implies that \(\Phi\) cannot occur as the left adjoint of an exact Landauer adjunction in \(\operatorname{OEnt}\). In summary, the role of this example was to demonstrate that the ambient bicategory \(\operatorname{OEnt}\) contains genuinely open feasibility interfaces beyond the exact-adjoint subclass.   

\vspace{0.3cm}
\hspace{-0.7cm}\textbf{\underline{Ex.2 - Finite Landauer adjunction and its boundary-stable reconstruction}}

\vspace{0.1cm}
Here, we provide a complete finite example of the Landauer adjunction \(\Phi\dashv\Psi\) and compute the boundary closure \(T=\Psi\circ\Phi\) and its boundary-stable sector. We shall clarify that this example is representable. By the representability-rigidity result (Remark~\ref{rem.1}), every exact Landauer adjunction in \(\operatorname{OEnt}\) is representable. Accordingly, this example is constructed from a monotone implementation map \(F:B\longrightarrow D\) and its companion-conjoint adjunction. Nonetheless, it is very useful for us because it appropriately demonstrates the reconstruction theorem (Theorem~\ref{thm.6}) in the finite case. 

Let 
\begin{equation}
B=\{b_0<b_1<b_2\},
\end{equation}
be a three-element boundary chain, such that
\begin{itemize}
    \item \(b_0=\) the coarsest, least informative boundary description,
    \item \(b_1=\) an intermediate, partially refined boundary description,
    \item \(b_2=\) the sharpest, most informative boundary description.
\end{itemize}

Next, let 
\begin{equation}
D=\{d_0<d_1\},
\end{equation}
be a two-element bulk chain, such that
\begin{itemize}
    \item \(d_0=\) controlled low-entropy bulk regime,
    \item \(d_1=\) coarser high-entropy bulk regime.
\end{itemize}

Now, we can define the deterministic maps \(F\) and \(G\). Let
\begin{equation}
F:B\longrightarrow D,
\end{equation}
be given by
\begin{equation}
F\left(b_0\right)=d_0,\quad F\left(b_1\right)=d_1, \quad F\left(b_2\right)=d_1.
\end{equation}
Next, we define
\begin{equation}
G:D\longrightarrow B,
\end{equation}
as
\begin{equation}
G\left(d_0\right)=b_0,\quad G\left(d_1\right)=b_2.
\end{equation}

At first, we want to check the monotonicity. Since \(b_0<b_1<b_2\), we have \(F\left(b_0\right)=d_0\leq d_1=F\left(b_1\right)\) and \(F\left(b_1\right)=d_1\leq d_1=F\left(b_2\right)\). So \(F\) is monotone. Similarly for the map \(G\).

Let us continue with checking the Galois connection \(F\dashv G\). We claim that
\begin{equation}
F\left(b\right)\leq_D d\iff b\leq_B G\left(d\right).
\end{equation}
This is verifiable case by case. 

First, taking \(d=d_0\), we have \(G\left(d_0\right)=b_0\). Then 
\begin{equation}
b\leq G\left(d_0\right)\iff b\leq b_0\iff b=b_0.
\end{equation}
On the other hand,
\begin{equation}
F\left(b\right)\leq d_0,
\end{equation}
holds exactly when \(F\left(b\right)=d_0\), which occurs precisely for \(b=b_0\). Therefore,
\begin{equation}
F\left(b\right)\leq d_0\iff b\leq G\left(d_0\right).
\end{equation}

In the case when \(d=d_1\), we get \(G\left(d_1\right)=b_2\). Then for every \(b\in B\), we have \(b\leq b_2\). On the other side, since \(d_1\) is top in \(D\), for every \(b\in B\), we obtain
\begin{equation}
F\left(b\right)\leq d_1.
\end{equation}
Therefore, 
\begin{equation}
F\left(b\right)\leq d_1\iff b\leq G\left(d_1\right),
\end{equation}
for all \(b\in B\). We have proved \(F\dashv G\).

Now we focus on the associated interfaces. Let us define the companion interface \(\Phi=F_*:B\nrightarrow D\) by
\begin{equation}
\Phi\left(b,d\right)=\top\iff F\left(b\right)\leq d.
\end{equation}
The corresponding table is
\[
\begin{array}{c|cc}
\Phi & d_0 & d_1 \\
\hline
b_0  & \top & \top \\
b_1  & \bot & \top \\
b_2  & \bot & \top
\end{array}
\]

Next, let us define the conjoint/right interface \(\Psi\) of \(F\),
\begin{equation}
\Psi:=F^*:D\nrightarrow B,
\end{equation}
by 
\begin{equation}
\Psi\left(d,b\right)=\top\iff d\leq_D F\left(b\right).
\end{equation}
Using the companion-conjoint adjunction, we get
\begin{equation}
F_*\dashv F^*,
\end{equation}
so \(\Phi:=F_*\) and \(\Psi:=F^*\) form an adjunction in the bicategory \(\operatorname{OEnt}\).

Since \(F\dashv G\), Theorem~\ref{thm.4} additionally identifies the forward interface \(F_*\) with the conjoint \(G^*\) as
\begin{equation}
F_*=G^*:B\nrightarrow D.
\end{equation}
This equality concerns the forward interface and should not be confused with the right adjoint \(F^*:D\nrightarrow B\).

In conclusion, the right adjoint of \(\Phi=F_*\) in \(\operatorname{OEnt}\) is \(\Psi:D\nrightarrow B\) defined by \(\Psi\left(d,b\right)=\top\iff d\leq _DF\left(b\right)\), so \(F\left(b_0\right)=d_0\) and \(F\left(b_1\right)=F\left(b_2\right)=d_1\). We obtain the following table
\[
\begin{array}{c|ccc}
\Psi & b_0 & b_1 & b_2 \\
\hline
d_0  & \top & \top  & \top \\
d_1  & \bot & \top  & \top\\
\end{array}
\]
This corresponds to the Landauer adjunction in the bicategory \(\operatorname{OEnt}\), recovering the deterministic Galois connection as a representable special case, exactly as we have demonstrated in Theorem~\ref{thm.4}.

As we have stated earlier, a Landauer adjunction induces a boundary closure monad/endo-interface on \(B\), satisfying \(\operatorname{Id}_B\leq T\), \(T\circ T\leq T\). Let us now verify these relations directly. At first, we focus on the inflationarity condition. The identity relation corresponds to
\begin{equation}
\operatorname{Id}_B\left(b,b^{\prime}\right)=\top\iff b\leq b^{\prime}.
\end{equation}
Hence, its table is 
\[
\begin{array}{c|ccc}
\operatorname{Id}_B & b_0 & b_1 & b_2 \\
\hline
b_0  & \top & \top  & \top \\
b_1  & \bot & \top  & \top\\
b_2  & \bot & \bot  & \top\\
\end{array}
\]
This table simply records the order relation of the chain \(B\).

Now, we define the closure endo-interface \(T:B\nrightarrow B\) by the following table
\[
\begin{array}{c|ccc}
T:=\Psi\circ\Phi & b_0 & b_1 & b_2 \\
\hline
b_0  & \top & \top  & \top \\
b_1  & \bot & \top  & \top\\
b_2  & \bot & \top  & \top\\
\end{array}
\]
Equivalently, 
\begin{equation}\label{eq.178}
T\left(b_0,-\right)=\{b_0,b_1,b_2\},\quad T\left(b_1,-\right)=\{b_1,b_2\},\quad T\left(b_2,-\right)=\{b_1,b_2\}.
\end{equation}
One can see that from \(b_0\), the interface relates to three boundary states, so it is meaningless to simply write \(T\left(b_0\right)\) unless some representability assumption is added. 

Now we need to check whether \(T\) is a valid interface. That means checking if \(T:B^{op}\times B\longrightarrow\operatorname{Bool}\) is contravariant in the first variable and covariant in the second variable. At first, we focus on the covariance in the target variable. For each fixed argument \(b\), the set \(T\left(b,-\right)=\{b^{\prime}\in B\hspace{0.1cm}\vert\hspace{0.1cm}T\left(b,b^{\prime}\right)=\top \}\) must be upward closed in \(B\). Indeed, this is evident because the sets \(\{b_0,b_1,b_2\}\) and \(\{b_1,b_2\}\) are upward closed subsets of \(B\). Hence, each row \(T\left(b,-\right)\) is upward closed. 

Contravariance in the first variable says that for all \(b\leq b^{\prime}\), we have \(T\left(b^{\prime},c\right)=\top\implies T\left(b,c\right)=\top\). Equivalently, whenever \(b\leq b^{\prime}\), we must have \(T\left(b^{\prime},-\right)\subseteq T\left(b,-\right)\). Now we check this directly. Since \(b_0\leq b_1\), we need \(T\left(b_1,-\right)\subseteq T\left(b_0,-\right)\). Indeed, we have \(T\left(b_1,-\right)=\{b_1,b_2\}\) and \(T\left(b_0,-\right)=\{b_0, b_1, b_2\}\), so the inclusion is satisfied. Identically, one can check that the inclusions \(T\left(b_2,-\right)\subseteq T\left(b_1,-\right)\) and \(T\left(b_2,-\right)\subseteq T\left(b_0,-\right)\) are also satisfied. Thus \(T\) is a valid endo-interface.

For checking the unit inequality \(\operatorname{Id}_B\leq T\), we directly compare the tables for \(\operatorname{Id}_B\) and \(T\). We can see that every \(\top\)-entry of \(\operatorname{Id}_B\) is also a \(\top\)-entry of \(T\). Therefore, \(\operatorname{Id}_B\leq T\). Now we can elaborate more on the composition \(T\circ T:B\nrightarrow B\). Via the profunctor composition, we have 
\begin{equation}
\left(T\circ T\right)\left(b,b^{\prime}\right)=\top\iff\exists c\in B,\hspace{0.1cm} T\left(b,c\right)=\top\hspace{0.1cm}and\hspace{0.1cm}T\left(c,b^{\prime}\right)=\top.
\end{equation}
In other words,
\begin{equation}
\left(T\circ T\right)\left(b,b^{\prime}\right)=\bigvee_{c\in B}\left(T\left(b,c\right)\land T\left(c,b^{\prime}\right)\right).
\end{equation}
Let us compute each row individually. For the first row \(b=b_0\), we have \(T\left(b_0,-\right)=\{b_0,b_1,b_2\}\). So the intermediate state \(c\) may be any of \(b_0,b_1,b_2\). If we choose \(c=b_0\), then \(T\left(b_0,b^{\prime}\right)=\top\) for every \(b^{\prime}\in\{b_0,b_1,b_2\}\). Therefore, \(\left(T\circ T\right)\left(b_0,b^{\prime}\right)=\top\) for every \(b^{\prime}\in B\), which implies \(\left(T\circ T\right)\left(b_0,-\right)=\{b_0,b_1,b_2\}=T\left(b_0,-\right)\). In a similar fashion, one can proceed with the rows \(b=b_1\) and \(b=b_2\). Eventually, we would obtain \(\left(T\circ T\right)\left(b_1,-\right)=\{b_1,b_2\}=T\left(b_1,-\right)\) and \(\left(T\circ T\right)\left(b_2,-\right)=\{b_1,b_2\}=T\left(b_2,-\right)\). Putting all three rows together, we obtain the composite table
\[
\begin{array}{c|ccc}
T\circ T & b_0 & b_1 & b_2 \\
\hline
b_0  & \top & \top  & \top \\
b_1  & \bot & \top  & \top\\
b_2  & \bot & \top  & \top\\
\end{array}
\]
Hence \(T\circ T=T\). In particular, \(T\circ T\leq T\). Thus \(T\) is an idempotent closure interface. The fact that \(\preceq_T\) is a preorder is quite evident from the table above. We have
\begin{equation}
b_0\preceq_T b_0,\quad b_0\preceq_T b_1,\quad b_0\preceq_T b_2,\quad b_1\preceq_T b_1,\quad
b_1\preceq_T b_2,\quad
b_2\preceq_T b_2,
\end{equation}
and also \(b_2\preceq_T b_1\), \(b_2\preceq_T b_2\).
So the only new relation beyond the original chain order is \(b_2\preceq_T b_1\). This actually means that the round-trip interface identifies \(b_1\) and \(b_2\) up to \(T\)-equivalence. Because \(\operatorname{Id}_B\leq T\), the relation \(\preceq_T\) is reflexive and contains the original order of \(B\). Additionally, since \(T\circ T=T\), the relation \(\preceq_T\) is also transitive.

Now, we focus on the \(T\)-equivalence relation by computing the equivalence classes. First, \(b_0\) is equivalent only to itself, because
\begin{equation}
T\left(b_0,b_1\right)=\top,\quad T\left(b_1,b_0\right)=\bot,
\end{equation}
Also,
\begin{equation}
T\left(b_0,b_2\right)=\top,\quad T\left(b_2,b_0\right)=\bot.
\end{equation}
Therefore, \([b_0]=\{b_0\}\). Similarly, one can check that \(b_1\) and \(b_2\) are also equivalent. Hence, \([b_1]=[b_2]=\{b_1,b_2\}\). So the quotient set corresponds to 
\begin{equation}
B_T=B/_{\sim T}=\{[b_0],[b_1]\},
\end{equation}
where \([b_1]=[b_2]\). Notice that we have two equivalence classes \(A:=[b_0]\) and \(C:=[b_1]=[b_2]\). Given the order on \(B_T\), we have \(A\leq_T C\), because \(b_0\preceq_T b_1\). Since \(b_1\npreceq_T b_0\) and \(b_2\npreceq_T b_0\), there is no reverse inequality \(C\leq_T A\). This implies that \(B_T\) is the two-element chain \(A<C\), corresponding to the boundary-stable sector
\begin{equation}
B_T=\{A<C\}.
\end{equation}
We can interpret it in a way that the interface cannot distinguish \(b_1\) and \(b_2\) after the round trip. The elements \(b_1,b_2\) become one stable boundary class, so \(B_T\) is precisely the bulk-visible boundary sector.

At the final stage of this example, we focus on establishing the reconstruction interfaces \(r\) and \(i\). In Theorem~\ref{thm.6}, we construct two interfaces \(r:B\nrightarrow B_T\) and \(i:B_T\nrightarrow B\) such that \(r\left(b,[c]\right)=\top\iff b\preceq_T c\) and \(i\left([c],b\right)=\top\iff c\preceq_T b\). Let us recall that \(B_T=\{A<C\}\), where \(A=[b_0]\) and \(C=[b_1]=[b_2]\). Working with the first interface, we have \(r\left(b_0,A\right)=\top\), because \(T\left(b_0,b_0\right)=\top\). Also, \(r\left(b_0,C\right)=\top\), because \(T\left(b_0,b_1\right)=\top\). Next, \(r\left(b_1,A\right)=\bot\), because \(T\left(b_1,b_0\right)=\bot\). However, \(r\left(b_1,C\right)=\top\), because \(T\left(b_1,b_1\right)=\top\). We can complete this procedure in a similar way with the element \(b_2\). Eventually, we obtain the table for \(r\),
\[
\begin{array}{c|cc}
r & A & C  \\
\hline
b_0  & \top & \top \\
b_1  & \bot & \top \\
b_2  & \bot & \top \\
\end{array}
\]
This is a valid interface \(r:B\nrightarrow B_T\). Indeed, each row is upward closed in \(B_T\), and the rows are contravariant in the source variable \(B\). 

Next, we continue with the interface \(i\). For \(A=[b_0]\), we choose the representative \(b_0\). Then \(T\left(b_0,-\right)=\{b_0,b_1,b_2\}\). This tells us that 
\begin{equation}
i\left(A,b_0\right)=i\left(A,b_1\right)=i\left(A,b_2\right)=\top.
\end{equation}
Similarly, for \(C=[b_1]=[b_2]\), we choose the representative \(b_1\). Then \(T\left(b_1,-\right)=\{b_1,b_2\}\). Therefore,
\begin{equation}
i\left(C,b_0\right)=\bot,\quad i\left(C,b_1\right)=i\left(C,b_2\right)=\top.
\end{equation}
So the table for the interface \(i\) is
\[
\begin{array}{c|ccc}
i & b_0 & b_1 & b_2 \\
\hline
A  & \top & \top & \top \\
C  & \bot & \top & \top
\end{array}
\]
Again, this is a valid interface \(i:B_T\nrightarrow B\). Notice that the definition is independent of the choice of the representative. For example, choosing \(b_2\) instead of \(b_1\) for the class \(C\) provides the same row, because \(T\left(b_1,-\right)=T\left(b_2,-\right)=\{b_1,b_2\}\).

Now, we want to verify that \(i\circ r=T\). This can be done individually row by row. From the table of \(r\), we know that for \(b=b_0\),
\begin{equation}
r\left(b_0,A\right)=\top,\quad r\left(b_0,C\right)=\top.
\end{equation}
Hence, both classes \(A\) and \(C\) may mediate. The row of \(i\) at \(A\) is \(\{b_0,b_1,b_2\}\). Thus, already through \(A\), we get every target \(b^{\prime}\). This means that
\begin{equation}
\left(i\circ r\right)\left(b_0,-\right)=\{b_0,b_1,b_2\}=T\left(b_0,-\right).
\end{equation}
Equivalently, we can proceed row by row to ultimately obtain \(i\circ r=T.\)

For completeness, we should also verify the adjunction \(r\dashv i\). Equivalently, we have \(\operatorname{Id}_B\leq i\circ r\) and \(r\circ i\leq\operatorname{Id}_{B_T}\). Since \(i\circ r=T\), the first inequality is evident. Also, we have already checked that \(\operatorname{Id}_B\leq T\). Thus, we shall compute \(r\circ i:B_T\nrightarrow B_T\). The composite itself satisfies \(\left(r\circ i\right)\left(X,Y\right)=\top\) if and only if there exists \(b\in B\) such that \(i\left(X,b\right)=\top\) and \(r\left(b,Y\right)=\top\), with \(X,Y\in B_T\) being arbitrary. 

Now, we compute the rows. For \(X=A\), we get \(i\left(A,-\right)=\{b_0,b_1,b_2\}\). Since \(r\left(b_0,A\right)=\top\) and \(r\left(b_0,C\right)=\top\), we have
\begin{equation}
\left(r\circ i\right)\left(A,A\right)=\top,\quad \left(r\circ i\right)\left(A,C\right)=\top.
\end{equation}
This implies that
\begin{equation}
\left(r\circ i\right)\left(A,-\right)=\{A,C\}.
\end{equation}
Next, for \(X=C\), we obtain \(i\left(C,-\right)=\{b_1,b_2\}\). For both \(b_1\) and \(b_2\), the \(r\)-row is \(\{C\}\). Therefore,
\begin{equation}
\left(r\circ i\right)\left(C,-\right)=\{C\}.
\end{equation}
Hence, the table for the composition \(r\circ i\) is
\[
\begin{array}{c|cc}
r\circ i & A & C \\
\hline
A  & \top & \top \\
C  & \bot & \top
\end{array}
\]
One can immediately see that this is exactly the identity interface on the two-element chain \(A<C\). Therefore, \(r\circ i=\operatorname{Id}_{B_T}\). In particular, \(r\circ i\leq\operatorname{Id}_{B_T}\), so we have the adjunction \(r\dashv i\). Together with the composition \(i\circ r=T\), this realizes \(B_T\) as the boundary-stable sector selected by the closure interface \(T\).

This example precisely shows that the closure interface \(T\) does not send each state to a single state. Instead, it gives a round-trip relation. Such a relation identifies \(b_1\) and \(b_2\), because \(T\left(b_1,b_2\right)=\top\) and \(T\left(b_2,b_1\right)=\top\). Thus, \(b_1\) and \(b_2\) become indistinguishable from the perspective of the round-trip interface. On the other hand, \(b_0\) remains distinct. This is why the stable sector corresponds to the quotient \(B_T=\{[b_0],[b_1]=[b_2]\}\), so the stable sector is the two-element chain \([b_0]<[b_1]=[b_2]\).

\section{Possible extension - quantitative enrichment and open problems}\label{sec.3}

In this section, we outline a broader extension of the preceding results, paradigms, and consider how the bicategorical entropy-cost framework could be developed even further. The generalization focuses on expanding the already used categorical structure in a way that would more closely correspond to the true physical nature of thermodynamic systems and their entropy. In particular, we outline a quantitative extension in which the same compositional interface semantics are enriched by dissipation costs.

A natural first step should be to replace our locally posetal bicategory $\operatorname{OEnt}$ with a certain kind of "process" bicategory and also replace $\operatorname{Bool}$ with a quantale $\left([0,\infty],+,\leq\right)$, so the hom sets carry some kind of entropy production costs. This would lead us to a bicategory of costed open interfaces, where the Landauer adjunction yields a quantitative boundary monad.

In other words, our original open interface should carry a minimal entropy (dissipation) cost, and the Landauer adjunction should then become a quantitative adjunction whose unit inequality should recover precise numerical lower bounds, rather than just order statements. 

One can summarize the punchline of the possible extension as follows:
\begin{itemize}
    \item $\operatorname{OEnt}$ can be viewed as the $\operatorname{Bool}$-enriched profunctor bicategory on thin categories (local posets)
    \item Replacing $\operatorname{Bool}$ with a quantale $\mathscr{V}$ turns the feasibility relation into a graded feasibility relation, i.e. a relation with a cost function, and the relation composition transforms into an infimal convolution
    \item A Landauer adjunction $\Phi\dashv\Psi$ becomes an enriched bulk-boundary adjunction, whose unit/counit inequalities correspond to quantitative Landauer constraints, i.e. the boundary monad $T=\Psi\circ\Phi$ would become a costed closure operator
\end{itemize}

In the following paragraphs, we will attempt to rigorously formulate our considerations and outline the exact procedure for how we could implement the above-mentioned points.

\subsection{\textbf{Quantales as the enrichment base for the costed feasibility relation}}

Let us start with the definition of a quantale.
\begin{definition}\cite{rosenthal1990quantales}
A (unital) quantale is a tuple
\begin{equation}\label{eq.156}
\left(\mathscr{V},\leq,\otimes,I,\bigvee\right),
\end{equation}
such that
\begin{enumerate}
    \item $\left(\mathscr{V},\leq\right)$ is a complete lattice 
    \item $\left(\mathscr{V},\otimes,I\right)$ is a monoid
    \item $\otimes$ distributes over joins in each variable, so
    \begin{equation}\label{eq.157}
    a\otimes\left(\bigvee_i b_i\right)=\bigvee_i\left(a\otimes b_i\right),\hspace{0.2cm}\left(\bigvee_i a_i\right)\otimes b=\bigvee_i\left(a_i\otimes b\right).
    \end{equation}
\end{enumerate}
\end{definition}

In this setup, we are able to represent the elements of $\mathscr{V}$ as weights/costs, the operation $\otimes$ as a sequential composition of costs, and $\bigvee$ as an aggregation of alternative realizations, i.e. how to choose the best composition according to $\leq$.

As an example, we can directly mention $\operatorname{Bool}=\{\perp,\leq,\top\}$ with $\otimes=\land$, $I=\top$ and joins given by ordinary suprema in the two-element lattice. This precisely corresponds to our feasibility relational setup that we have used in $\operatorname{OEnt}$.

Another very useful example with regard to the above considerations is the Lawvere quantale \cite{lawvere1973metric}. In order to effectively conduct quantitative thermodynamics, the canonical choice is the Lawvere quantale
\begin{equation}\label{eq.158}
\mathscr{V}_{cost}:=\{[0,\infty], \leq_{\mathscr{V}}, \otimes, I, \bigvee \},
\end{equation}
where \(a\leq_{\mathscr{V}}b\iff a\geq b\) numerically, \(a\otimes b:=a+b\), \(I:=0\) and \(\bigvee S:=\inf S\) for every \(S\subset [0,\infty]\).

Hence, with respect to the quantale order, we have:
\begin{itemize}
    \item  $a\geq b$ numerically, meaning that $a$ is no cheaper than $b$
    \item $\bigvee$ can be understood as the best alternative, i.e. the ordinary $\operatorname{inf}$
    \item $\otimes$ is the sequential composition cost, i.e. the operation $+$, which adds sequential costs,
    \item unit $I=0$ is the zero-cost identity
\end{itemize}

This is precisely what we want, if we desire to encode the functor $\Phi\left(.,.\right)$ as a minimum entropy-production cost. Then, among all intermediate realizations, we shall take the infimum, meaning that the composition is supposed to choose the least costly intermediate realization.

At this point, we can pass from our original locally posetal bicategory $\operatorname{OEnt}$ to a quantale-enriched version. We can straightforwardly define the $\mathscr{V}$-relations as costed interfaces.

\begin{definition}
Let $X$, $Y$ be sets (or posets, based on the order-compatibility). A $\mathscr{V}$-relation
\begin{equation}\label{eq.159}
R:X\nrightarrow Y,
\end{equation}
is a function
\begin{equation}\label{eq.160}
R:X\times Y\longrightarrow\mathscr{V}.    
\end{equation}
\end{definition}

One can think of $R\left(x,y\right)$ as a cost that is spent on relating $x$ to $y$. In $\operatorname{Bool}$, the relation $R\left(x,y\right)\in\{\perp,\top \}$ can be understood as infeasible/feasible. On the other hand, in $\mathscr{V}_{cost}$, we can encode $R\left(x,y\right)\in [0,\infty]$ as the minimal cost needed to relate \(x\) to \(y\), with $\infty$ representing the impossible/infinite cost

When it comes to the composition operation, we can formulate the following definition.
\begin{definition}
Given $R:X\nrightarrow Y$ and $S:Y\nrightarrow Z$, we can define their composition 
\begin{equation}\label{eq.161}
S\circ R:X\nrightarrow Z,
\end{equation}
via the standard quantale matrix multiplication
\begin{equation}\label{eq.162}
\left(S\circ R\right)\left(x,z\right):=\bigvee_{y\in Y}\left(R\left(x,y\right)\otimes S\left(y,z\right)\right),
\end{equation}
and the identity \(\mathscr{V}\)-relation on \(X\) as
\begin{equation}\label{eq.165}
\operatorname{Id}_X\left(x,x^{\prime}\right)=
\begin{cases}
I;\hspace{0.1cm}x=x^{\prime} \\
\perp_{\mathscr{V}}\hspace{0.1cm}x\neq x^{\prime},
\end{cases}
\end{equation}
where $\perp_{\mathscr{V}}$ denotes the least element of $\mathscr{V}$. 
\end{definition}

Explicitly, for $\textbf{Bool}$, the composition becomes
\begin{equation}\label{eq.163}
\left(S\circ R\right)\left(x,z\right)=\top\iff\exists y\in Y: R\left(x,y\right)=\top\land S\left(y,z\right)=\top,
\end{equation}
which is a standard relational composition.

In the case of $\mathscr{V}_{cost}$, we have 
\begin{equation}\label{eq.217}
\left(S\circ R\right)\left(x,z\right)=\inf_{y\in Y}\left(R\left(x,y\right)+S\left(y,z\right)\right),
\end{equation}
which corresponds to infimal convolution, i.e. it provides the cheapest intermediate realization, where the infimum of the total cost is taken over all intermediate mediators.

Considering the identity in the ordered version \(\mathscr{V}-\operatorname{OEnt}\), where objects are posets and \(1\)-cells are ordered-compatible \(\mathscr{V}\)-interfaces \(P^\times Q\longrightarrow\mathscr{V}\), the identity must be the \(\mathscr{V}\)-valued hom relation of the poset
\begin{equation}
\operatorname{Id}_P\left(p,p^{\prime}\right)=
\begin{cases}
I,\quad p\leq_Pp^{\prime}, \\
\perp_{\mathscr{V}},\quad otherwise.
\end{cases}
\end{equation}
Thus, for $[0,\infty]$ with the ordering $\geq$, we have $\perp_{\mathscr{V}}=\infty$. 

From a bird's-eye perspective, it is quite obvious that the entire bicategorical machinery that led us to Theorem~\ref{thm.2} should be generalizable to the current $\mathscr{V}$-setup. Well, by fixing a quantale $\mathscr{V}$, it is possible to define a bicategory $\operatorname{\mathscr{V}-Rel}$ as
\begin{itemize}
    \item objects - sets $X$ (or even small categories)
    \item $1$-cells - $\mathscr{V}$-relations $R: X\times Y\longrightarrow\mathscr{V}$,
    \item $2$-cells - pointwise order $R\implies R^{\prime}\iff\forall x,y: R\left(x,y\right)\leq R^{\prime}\left(x,y\right)$,
    \item composition - as in~(\ref{eq.162}),
    \item identities - as in~(\ref{eq.165}).
\end{itemize}

This leads us to the following proposition
\begin{prop}\label{prop.3}
For every unital quantale \(\mathscr{V}\), the quantale-valued relation construction \(\mathscr{V}-\operatorname{Rel}\) is a locally posetal bicategory.    
\end{prop}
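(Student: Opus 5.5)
The plan is to follow the template of the proof of Theorem~\ref{thm.2}. Since $\mathscr{V}$-relations are functions $X\times Y\longrightarrow\mathscr{V}$ and $2$-cells are pointwise inequalities in the complete lattice $(\mathscr{V},\leq)$, I will first dispose of the local structure. Each hom $\mathscr{V}\text{-}\operatorname{Rel}(X,Y)$ is the power $\mathscr{V}^{X\times Y}$ with the pointwise order. Reflexivity, transitivity and antisymmetry are therefore inherited pointwise from $\mathscr{V}$, so each hom is a poset. Vertical composition of $2$-cells is transitivity, and, exactly as at the end of the proof of Theorem~\ref{thm.2}, the interchange law and all coherence conditions hold automatically because parallel $2$-cells are unique. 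Hence the real content reduces to showing that composition is associative and unital on the nose and monotone in each variable.

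For associativity I fix $x,w$ and expand
\begin{equation}
\nonumber
\left(U\circ\left(S\circ R\right)\right)\left(x,w\right)=\bigvee_{z}\Big(\bigvee_{y}\left(R\left(x,y\right)\otimes S\left(y,z\right)\right)\otimes U\left(z,w\right)\Big).
\end{equation}
I then use right distributivity of $\otimes$ over arbitrary joins (axiom 3 of a quantale) to pull the inner join out. This gives $\bigvee_{y,z}(R(x,y)\otimes S(y,z))\otimes U(z,w)$. Next I reassociate inside via associativity of the monoid $(\mathscr{V},\otimes,I)$, interchange the two joins (joins over $Y\times Z$ commute), and apply left distributivity to regroup as $((U\circ S)\circ R)(x,w)$. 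This is the quantale analogue of moving existential quantifiers in~(\ref{eq.33})--(\ref{eq.35}). Monotonicity of horizontal composition follows because $\otimes$ is monotone in each variable, which is itself a consequence of distributivity over binary joins, and $\bigvee$ is monotone.

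The unit laws are the step I expect to be the main obstacle, since they are where the specific form of~(\ref{eq.165}) matters. I will compute $(R\circ\operatorname{Id}_X)(x,y)=\bigvee_{x'}\operatorname{Id}_X(x,x')\otimes R(x',y)$ and split the join into the term $x'=x$, which contributes $I\otimes R(x,y)=R(x,y)$, and the terms $x'\neq x$, which contribute $\perp_{\mathscr{V}}\otimes R(x',y)$. The key point is that $\perp_{\mathscr{V}}$ is the empty join, so distributivity over the empty family gives $\perp_{\mathscr{V}}\otimes a=\perp_{\mathscr{V}}=a\otimes\perp_{\mathscr{V}}$. The extra terms therefore vanish, and the join equals $R(x,y)$; the left unit law is symmetric. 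I note explicitly that distributivity must be used for \emph{arbitrary} (including empty) joins here. In the Lawvere case this amounts to the convention $\infty+a=\infty$, and for the order-compatible variant with identity $[p\leq p']$ one would additionally need monotonicity of interfaces, as in Theorem~\ref{thm.2}. Having established strict associativity and unitality, I will conclude that $\mathscr{V}\text{-}\operatorname{Rel}$ is a strict $2$-category with posetal homs, and hence a locally posetal bicategory with identity associators and unitors (cf.\ Proposition~\ref{prop.2}).
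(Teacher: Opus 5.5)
Your argument is correct and complete. The paper does not actually prove Proposition~\ref{prop.3}. It only remarks that the result is a standard fact of quantaloid-enriched category theory, with a pointer to \cite{stubbe2005categorical}, where $\mathscr{V}\text{-}\operatorname{Rel}$ is the quantaloid of $\mathscr{V}$-matrices.

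Your route is the direct $\mathscr{V}$-valued transcription of the proof of Theorem~\ref{thm.2}. Its advantage is that it shows exactly which axioms are used:
\begin{itemize}
\item both one-sided distributivity laws, so no commutativity of $\mathscr{V}$ is needed, only the consistent ordering $R(x,y)\otimes S(y,z)$;
\item associativity and unitality of $\otimes$;
\item distributivity over the empty family, which is precisely what makes the discrete identity~(\ref{eq.165}) a strict unit.
\end{itemize}
You also obtain strictness for free, so what you build is a locally posetal $2$-category with identity associators and unitors.

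The citation route buys generality and extra structure. The matrix construction works over an arbitrary quantaloid, not just a one-object one. Each hom is a complete lattice, and composition preserves arbitrary joins in each variable, so right extensions and lifts exist. That is the structure the paper actually relies on when it turns to Cauchy modules and splitting idempotents in the open problems of Section~\ref{sec.3}.

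Your own distributivity computation upgrades monotonicity of composition to preservation of arbitrary joins with no extra work. So you could recover this quantaloid structure directly if you wanted it.
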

This is essentially a standard fact, as quantaloid-enriched category theory has long treated enriched distributors and the corresponding locally ordered structures \cite{stubbe2005categorical}. 

Considering the behavior of the entropy orders, we should be able to add them directly via extrapolation from $\operatorname{OEnt}$. Specifically, in $\operatorname{OEnt}$, the objects were posets and $1$-cells were required to be monotone. A similar idea can be applied to $\mathscr{V}$-relations.  This brings us to the following definition
\begin{definition}
Let $P$, $Q$ be posets. A $\mathscr{V}$-open interface
\begin{equation}\label{eq.166}
\Phi:P\nrightarrow Q,
\end{equation}
is a function 
\begin{equation}
\Phi:P^{op}\times Q\longrightarrow\mathscr{V}.
\end{equation}
Equivalently, \(\Phi\) is a map \(\Phi:P\times Q\longrightarrow\mathscr{V}\) satisfying
\begin{enumerate}
    \item Contravariance in \(P\):
    \begin{equation}
    p\leq_P p^{\prime}\implies \Phi\left(p^{\prime},q\right)\leq_{\mathscr{V}}\Phi\left(p,q\right),
    \end{equation}
    \item Covariance in \(Q\):
    \begin{equation}
    q\leq_Qq^{\prime}\implies \Phi\left(p,q\right)\leq_{\mathscr{V}}\Phi\left(p,q^{\prime}\right).
    \end{equation}
\end{enumerate}
\end{definition}

\vspace{0.1cm}
Again, we can encapsulate this into a bicategory $\operatorname{\mathscr{V}-OEnt}$ such that
\begin{itemize}
    \item objects are entropy posets, i.e. (open) entropy systems
    \item $1$-cells are $\mathscr{V}$-open interfaces, i.e. order-compatible $\mathscr{V}$-profunctors \(P\nrightarrow Q\),
    \item $2$-cells are pointwise inequalities
    \item composition is given as in~(\ref{eq.162}), i.e. inherited from \(\mathscr{V}-\)Rel
    \item identities are the order-based hom-interfaces encoded in $\mathscr{V}$ that are given by
    \begin{equation}
    \operatorname{Id}_P\left(p,p^{\prime}\right)=
    \begin{cases}
    I,\quad p\leq_P p^{\prime} \\[1ex]
    \perp_{\mathscr{V}},\quad otherwise
    \end{cases}
    \end{equation}
\end{itemize}
It is crucial to note that for \(\mathscr{V}_{cost}\), the order is reversed, i.e. \(\alpha\leq_{\mathscr{V}}\beta\) means \(\alpha\geq\beta\) numerically. Hence, the variance conditions should be read numerically as cost monotonicity statements.  

This actually leads us to a new conjecture.
\newtheorem{conj}{Conjecture}
\begin{conj}\label{conj.1}
\textit{Let \(\mathscr{V}\) be a unital quantale. The entropy posets and order-compatible \(\mathscr{V}\)-valued profunctors \(P^{op}\times Q\longrightarrow\mathscr{V}\), ordered pointwise, form a locally posetal bicategory \(\mathscr{V}-\operatorname{OEnt}\).}    
\end{conj}
Nonetheless, this should be very easy to prove by going through all the bicategorical axioms. Moreover, one can immediately notice that our choice of the quantale $\mathscr{V}$-setup is meaningful since, in Conjecture~\ref{conj.1} we can recover the original locally posetal bicategory $\operatorname{OEnt}$ by choosing $\mathscr{V}=\operatorname{Bool}$.

Just as it is possible to physically interpret our original setup in $\operatorname{OEnt}$, we can do the same in the case of the $\mathscr{V}$-setup, since that is precisely the original purpose of this generalization. In other words, this formalizes our earlier statement that the Boolean theory represents something like a feasibility shadow of the weighted quantale theory. 

In the Boolean case, the relation $\Phi\left(b,d\right)=\top$ literally means that $b$ can be realized by $d$. Let us now specialize to the cost quantale \(\mathscr{V}_{cost}=\{[0,\infty],\geq,+,0,\inf\}\). Let \(B\) be a boundary entropy poset and \(D\) a bulk entropy poset. Then a costed interface \(\Phi:B\nrightarrow D\) assigns to each pair \(b,d\) a value \(\Phi\left(b,d\right)\in [0,\infty]\) that could naturally be interpreted in the following way:
\begin{tcolorbox}
\begin{center}
\(\Phi\left(b,d\right)=\)minimal dissipation cost of realizing the boundary state \(b\) by the bulk state \(d\), subject to the chosen admissible operations.
\end{center}
\end{tcolorbox}

Analyzing the concrete values of $\Phi\left
(b,d\right)$, we have
\begin{itemize}
    \item $\Phi\left(b,d\right)=0$ denotes an ideally reversible realization,
    \item $\Phi\left(b,d\right)\in\left(0,\infty\right)$ means realizable but necessarily dissipative,
    \item $\Phi\left(b,d\right
    )=\infty$ - impossible (forbidden) to realize under admissible operations.
\end{itemize}

Furthermore, if \(\Psi:D\nrightarrow B\) is a costed abstraction interface, then the composite \(T_{cost}:=\Psi\circ\Phi:B\nrightarrow B\) is given by
\begin{equation}
T_{cost}\left(b,b^{\prime}\right)=\inf_{d\in D}\left(\Phi\left(b,d\right)+\Psi\left(d,b^{\prime}\right)\right).
\end{equation}
Therefore, \(T_{cost}\left(b,b^{\prime}\right)\) corresponds to the minimal bulk-mediated round trip cost of starting from \(b\), using the bulk as an intermediate realization, and returning to the boundary as \(b^{\prime}\). Likewise, 
\begin{equation}
U_{cost}:=\Phi\circ\Psi:D\nrightarrow D,
\end{equation}
would correspond to the minimal bulk to boundary to bulk round trip cost. 

At first sight, one might try to define a quantitative Landauer adjunction simply by the enriched analogy of the Boolean inequalities \(\operatorname{Id}_B\leq_{\mathscr{V}}\Psi\circ\Phi\), \(\Phi\circ\Psi\leq_{\mathscr{V}}\operatorname{Id}_D\). However, in the Lawvere quantale, this must be interpreted with significant care. Since \(a\leq_{\mathscr{V}}b\iff a\geq b\) numerically, the first inequality \(\operatorname{Id}_B\leq_{\mathscr{V}}\Psi\circ\Phi\) must be interpreted as
\begin{equation}
0\geq\left(\Psi\circ\Phi\right)\left(b,b^{\prime}\right),
\end{equation}
on an ordered pair \(b\leq_B b^{\prime}\). Since all costs are nonnegative, this naturally forces
\begin{equation}
\left(\Psi\circ\Phi\right)\left(b,b^{\prime}\right)=0.
\end{equation}
Therefore, a strict enriched adjunction in \(\mathscr{V}_{cost}\) expresses an idealized zero-defect compatibility condition.

Accordingly, the physical interpretation of such a quantitative extension should mainly separate two things:
\begin{enumerate}
    \item The structural feasibility layer, encoded by the Landauer adjunction in \(\operatorname{OEnt}\) (in the Boolean sense)
    \item The cost layer, encoded by costed interfaces and cost lower-bound profiles.
\end{enumerate}

Accordingly, the physical interpretation of the proposed extension could be embedded into something like a conservative quantitative Landauer structure. Thus, we formulate the so-called cost-decorated Landauer data as a formalization of our cost extension within quantales. 

\begin{definition}\label{def.23}(\textbf{Cost-decorated Landauer data})
A cost-decorated Landauer structure consists of:
\begin{enumerate}
    \item A Landauer (Boolean) adjunction
    \begin{equation}
    \Phi_0:B\nrightarrow D,\quad\Psi_0:D\nrightarrow B,
    \end{equation}
    such that \(\Phi_0\dashv\Psi_0\) in \(\operatorname{OEnt}\),
    \item Costed interfaces
    \begin{equation}
    \widehat{\Phi}:B\nrightarrow D,\quad \widehat{\Psi}:D\nrightarrow B,
    \end{equation}
    in the enriched \(\mathscr{V}_{cost}-\operatorname{OEnt}\),
    \item A support condition
    \begin{equation}
    \widehat{\Phi}\left(b,d\right)<\infty\iff\Phi_0\left(b,d\right)=\top,\quad\widehat{\Psi}\left(d,b\right)<\infty\iff\Psi_0\left(d,b\right)=\top.
    \end{equation}
\end{enumerate}
\end{definition}
This definition basically tells us that the Boolean interfaces record which realizations are admissible, while the costed interfaces record how expensive the admissible realizations are. 

Notice that the importance of Definition~\ref{def.23} is reflected in the fact that it provides a separation of feasibility from dissipation. Thus, it avoids trying to force both roles into the same strict enriched adjunction. 

Therefore, the costed round trip \(T_{cost}\) can be understood as the  quantitative analogy of the Boolean Landauer closure \(T=\Psi\circ\Phi\). In the Boolean case, \(T\left(b,b^{\prime}\right)\) records whether \(b^{\prime}\) is reachable from \(b\) through a feasible bulk round trip. In the costed case, \(T_{cost}\left(b,b^{\prime}\right)\) records the minimal dissipation cost of such a round trip. This naturally leads us to the costed adaptation of the Landauer monad.

\begin{definition}(\textbf{Costed boundary closure interface)}\label{def.24}
Given cost-decorated Landauer data, the costed boundary closure interface can be defined as
\begin{equation}
T_{cost}:=\widehat{\Psi}\circ\widehat{\Phi}:B\nrightarrow B.
\end{equation}
\end{definition}

This implies that 
\begin{equation}
T_{cost}\left(b,b^{\prime}\right)=\inf_{d\in D}\left(\widehat{\Phi}\left
(b,d\right)+\widehat{\Psi}\left(d,b^{\prime}\right)\right),
\end{equation}
can be perceived as the minimal dissipation cost of the boundary transition from \(b\) to \(b^{\prime}\) mediated by the bulk.

Specifically, this should be interpreted as a costed closure interface, not as a monad in the strict Boolean sense, unless we impose some additional hypotheses. From a physical point of view, the objects introduced in Definitions~\ref{def.23} and~\ref{def.24} enable us to keep track of how much entropy production is forced, instead of only tracking how the entropy increases or decreases.

This naturally induces an issue of a lower bound on the cost of the physical implementation. One may encode this via a suitable boundary interface.

\begin{definition}(\textbf{Landauer lower-bound profile})\label{def.25}
A Landauer lower-bound profile is a costed boundary interface \(\Lambda_B:B\nrightarrow B\) such that, for every allowed boundary degradation \(b\leq_Bb^{\prime}\), we have
\begin{equation}
\Lambda_B\left(b,b^{\prime}\right)\in [0,\infty].
\end{equation}
\end{definition}

The direct interpretation of \(\Lambda_B\left(b,b^{\prime}\right)\) is the lower bound for physically implementing the transition from \(b\) to \(b^{\prime}\). 

Evidently, this gives us a tool for reasonably working with the minimal dissipation cost \(T_{cost}\) because the lower-bound profile \(\Lambda_B\) provides its natural limitation
\begin{equation}\label{eq.196}
T_{cost}\leq_{\mathscr{V}}\Lambda_B.
\end{equation}
Such an inequality represents the desired quantitative requirement for the minimal dissipation cost. However, it is important to recall that \(\leq_{\mathscr{V}}\) is a reverse numerical ordering in the quantale sense. Hence, the relation~(\(\ref{eq.196}\)) is equivalent to
\begin{equation}
\Lambda_B\left(b,b^{\prime}\right)\leq T_{cost}\left(b,b^{\prime}\right),
\end{equation}
in a numerical sense. Thus, \(\Lambda_B\) can be considered to be a genuine lower bound for the minimal round trip cost. 

Recalling Definition~\ref{def.23} of the cost-decorated Landauer data, there is still a legitimate role for its strict enriched adjunction variant, because we can analogously define a strict \(\mathscr{V}_{cost}\)-Landauer adjunction as a pair of costed interfaces \(\widehat{\Phi}:B\nrightarrow D\)  and \(\widehat{\Psi}:D\nrightarrow B\) satisfying \(\operatorname{Id}_B\leq_{\mathscr{V}}\widehat{\Psi}\circ\widehat{\Phi}\) and \(\widehat{\Phi}\circ\widehat{\Psi}\leq_{\mathscr{V}}\operatorname{Id}_D\). Here, the unit \(\operatorname{Id}_B\leq_{\mathscr{V}}\widehat{\Psi}\widehat{\Phi}\) imply that for \(b\leq_B b^{\prime}\), we have \(0\geq_{\mathbb{R}}\left(\widehat{\Psi}\widehat{\Phi}\right)\left(b,b^{\prime}\right)=0\). This should be interpreted in a way that the strict unit forces zero round-trip cost on a native boundary comparisons. However, the counit \(\widehat{\Phi}\widehat{\Psi}\leq_{\mathscr{V}}\operatorname{Id}_D\) does something different. For \(d\leq_{\mathscr{V}}d^{\prime}\), the right-hand side has value \(0\), so the numerical inequality is simply \(\left(\widehat{\Phi}\widehat{\Psi}\right)\left(d,d^{\prime}\right)\geq 0\), which is automatic. Thus, it does not force zero cost. Additionally, whenever \(d\nleq_D d^{\prime}\), we get \(\operatorname{Id}_D\left(d,d^{\prime}\right)=\infty\), so the counit forces \(\left(\widehat{\Phi}\widehat{\Psi}\right)\left(d,d^{\prime}\right)= \infty\). In other words, the strict counit forbids finite-cost bulk round trips that violate the native bulk order, so it does not force every allowed bulk round trip to have zero cost. In conclusion, this is just an alternative way to perceive the enriched Landauer adjunction.

As part of the proposed extensions and future research aims, it would be appropriate to summarize and formalize our previous considerations into two structured directions, in addition to the already proposed Conjecture~\ref{conj.1}.

\vspace{0.2cm}
\hspace{-0.7cm}\textbf{\underline{Open problem 1 - Enriched Eilenberg-Moore theory}}

\vspace{0.1cm}
In the Boolean case, the boundary endo-interface \(T=\Psi\circ\Phi\) admits an Eilenberg-Moore object describing the boundary data stable under the bulk-mediated round trip. One may ask for an analogous structure in the quantale-valued setup. 

The aim would be to develop an Eilenberg-Moore theory for suitable idempotent or split costed endo-interfaces \(T:B\nrightarrow B\) in the \(\mathscr{V}-\operatorname{OEnt}\) setup and to identify conditions under which there exists an object representing \(\mathscr{V}\)-algebras
\begin{equation}
x:X\nrightarrow B\quad with\quad T\circ x\leq_{\mathscr{V}}x.
\end{equation}

One can make this aim more technical and compact in the following way. Let \(\mathscr{V}\) be a unital quantale and \(B\) be an object of \(\mathscr{V}-\operatorname{OEnt}\). A monad on \(B\) consists of an endo-\(1\)-cell \(T:B\nrightarrow B\) together with \(2\)-cells \(\operatorname{\eta}_T:\operatorname{Id}_B\implies T\),\(\mu_T:T\circ T\implies T\), satisfying the usual monad axioms. Since the hom-categories are posets, these structural cells may equivalently be written as \(\operatorname{Id}_B\leq_{\mathscr{V}} T, \quad T\circ T\leq_{\mathscr{V}}T\). We call \(T\) idempotent when \(\mu_T\) is invertible. In particular, for the idempotent case, we ask for conditions ensuring the equality \(T\circ T=T\). Then, for every object \(X\), we can define the poset of \(T\)-algebras out of \(X\) by
\begin{equation}
\operatorname{Alg_T}\left(X\right):=\{x:X\nrightarrow B\hspace{0.1cm}\vert\hspace{0.1cm}T\circ x\leq_{\mathscr{V}} x\},
\end{equation}
ordered by the hom-order in \(\mathscr{V}-\operatorname{OEnt}\). Open problem \(1\) basically asks the following question:
\begin{tcolorbox}
\begin{center}
Is it possible to determine sufficient conditions under which there exists an object \(B_T\), a \(1\)-cell \(i:B_T\nrightarrow B\) and a \(2\)-cell \(\alpha:T\circ i\implies i\) such that, for every object \(X\), postcomposition with \(i\) induces a natural isomorphism of posets
\begin{equation}
\nonumber
\mathscr{V}-\operatorname{OEnt}\left(X,B_T\right)\cong\operatorname{Alg_T}\left(X\right)\hspace{0.3cm}?
\end{equation}
\end{center}
\end{tcolorbox}
Equivalently speaking, can we determine when the idempotent \(\mathscr{V}\)-open monad \(T\) admits an Eilenberg-Moore object in the bicategory \(\mathscr{V}-\operatorname{OEnt}\)?

One can find several indications that the problem should admit a meaningful solution. For example, even in the \(\mathscr{V}\)-valued setup, the bicategory \(\mathscr{V}-\operatorname{OEnt}\) is expected to be locally posetal. Thus, the algebra condition \(T\circ x\leq_{\mathscr{V}} x\) is still an order-theoretic condition in a hom-poset. This is the reason why the bicategorical formulation of Eilenberg-Moore objects remains meaningful. 

However, there are also several reasons why this problem is quite nontrivial. Unlike the Boolean case, a \(\mathscr{V}\)-valued idempotent interface does not immediately determine an ordinary quotient order on the underlying set of boundary states. The main difficulty is therefore to identify the correct enriched replacement for the construction mechanism of the Boolean fixed point/stable sector. 

\vspace{0.3cm}
\hspace{-0.7cm}\textbf{\underline{Open problem 2 - Enriched Cauchy completion phenomenon}}\label{2}

\vspace{0.1cm}
The main aim of this open problem is to relate the Eilenberg-Moore object to the splitting \(\mathscr{V}\)-idempotents and hence to a \(\mathscr{V}\)-enriched Cauchy completion. In order to do so, one would need an enriched theory that simultaneously handles:
\begin{itemize}
    \item The quantale-valued matrix bicategory,
    \item The order-compatible variance \(P^{op}\times Q\longrightarrow\mathscr{V}\),
    \item The appropriate notion of idempotent splitting in such order/enriched setup.
\end{itemize}

In order to make these points more precise, one should consider a unital quantale \(\mathscr{V}\) and assume that \(B\) is some object of \(\mathscr{V}-\operatorname{OEnt}\). Next, suppose that \(T:B\nrightarrow B\) is an idempotent \(\mathscr{V}\)-open endo-interface, in the sense that \(T\circ T=T\) in the hom-poset \(\mathscr{V}-\operatorname{OEnt}\left
(B,B\right)\). Then we would aim to develop a \(\mathscr{V}\)-enriched categorical framework in which:
\begin{enumerate}
    \item The object \(B\) is represented by a \(\mathscr{V}\)-category \(\mathscr{B}\) compatible with the entropy order on \(B\),
    \item The endo-interface \(T:B\nrightarrow B\) is regarded as a \(\mathscr{V}\)-module \(T:\mathscr{B}\nrightarrow\mathscr{B}\),
    \item Characterize when \(T\) is Cauchy, i.e. when it admits a right adjoint as a \(\mathscr{V}\)-module,
    \item The splitting of \(T\) in the bicategory of \(\mathscr{V}\)-modules is represented by an object or full \(\mathscr{V}\)-subcategory of the Cauchy completion \(\widehat{B}_C\) of \(\mathscr{B}\),
    \item The corresponding splitting object represents precisely the \(T\)-stable \(\mathscr{V}\)-modules
    \begin{equation}
    \nonumber
    x:X\nrightarrow B\quad \textit{such that}\quad T\circ x\leq x.
    \end{equation}
\end{enumerate}

In particular, determine the conditions under which the Eilenberg-Moore object for an idempotent costed Landauer monad \(T\) exists and is equivalent to the splitting of \(T\) inside the enriched Cauchy completion of the boundary \(\mathscr{V}\)-category. Actually, there is a broader, interesting reason for considering such a splitting. Since the costed interfaces are \(\mathscr{V}\)-profunctorial rather than functorial, the relevant enriched notion is not merely the Cauchy completion of a \(\mathscr{V}\)-category in isolation, but the splitting of idempotent \(\mathscr{V}\)-modules in the bicategory of \(\mathscr{V}\)-modules.

In summary, the open problem \(2\) is really asking:
\begin{tcolorbox}\label{open.problem 2}
\begin{center}
Can the enriched analogue of \(B_T\) be realized as the object representing the split of the idempotent costed Landauer module \(T\) inside \(\widehat{B}_C\)?   
\end{center}
\end{tcolorbox}

\hspace{-0.7cm}\textbf{\underline{Open problem 3 - Thermodynamic splitting of costed Landauer modules}}\label{3}

\vspace{0.2cm}
The Boolean theory developed above associates a Landauer adjunction \(\Phi_0:B\nrightarrow D\), \(\Psi_0:D\nrightarrow B\) with an idempotent boundary interface \(T_0:=\Psi_0\circ\Phi_0:B\nrightarrow B\). This interface determines the preorder
\begin{equation}
b\preceq_{T_0}b^{\prime}\iff T_0\left(b,b^{\prime}\right)=\top,
\end{equation}
and consequently, the stable quotient
\begin{equation}
B_{T_0}= B/_{\sim T_0}.
\end{equation}

In Theorem~\ref{thm.6}, we prove that this quotient is not merely an order-theoretic construction. It represents all boundary interfaces that are stable under the Landauer round trip
\begin{equation}
\operatorname{OEnt}\left
(X,B_{T_0}\right)\cong\operatorname{Alg}_{T_0}\left(X\right),
\end{equation}
naturally in \(X\). Therefore, in the Boolean case, the round trip stability is represented by a universal object. 

The cost-enriched theory leads us to the analogous, but substantially more difficult question. Let \(\left(\mathscr{V},\leq_{\mathscr{V}},\otimes,I\right)\) be the unital quantale, and let \(B\) be a \(\mathscr{V}\)-enriched boundary system. Next, a costed boundary endo-interface is a \(\mathscr{V}\)-module \(T:B\nrightarrow B\). For the cost quantale \(\mathscr{V}_{cost}=\left([0,\infty],\geq,+,0\right)\) , the value \(T\left(b,b^{\prime}\right)\) is interpreted as the minimal dissipation cost of producing the boundary description \(b^{\prime}\) from \(b\) through an admissible bulk-mediated round trip. If \(T=\Psi\circ\Phi\), then composition is given by infimal convolution
\begin{equation}
T\left(b,b^{\prime}\right)=\inf_{d\in D}\left(\Phi\left
(b,d\right)+\Psi\left(d,b^{\prime}\right)\right).
\end{equation}

The Boolean interface records whether such a round trip exists, and the costed interface records the least dissipation among all feasible bulk mediators. 

Considering the Boolean case, an idempotent interface determines an ordinary preorder and hence a quotient set. This mechanism does not directly generalize to the costed case, since any two boundary states need not be simply equivalent or inequivalent. They may instead be related by different finite transition costs. Consequently, the enriched stable object should not generally be expected to consist of ordinary equivalence classes of boundary states. Its elements may instead be understood as stable \(\mathscr{V}\)-valued profiles, enriched presheaves, or generalized boundary sectors that retain both reachability and dissipation information. This directly motivates us to propose the following open problem.

Let \(\mathscr{V}\) be a commutative unital quantale, let \(B\) be a small ordered-compatible \(\mathscr{V}\)-enriched boundary system, and let \(T:B\nrightarrow B\) be a \(\mathscr{V}\)-module satisfying the following conditions:
\begin{equation}
\operatorname{Id}_B\leq_{\mathscr{V}}T,\quad T\circ T=T,
\end{equation}
where \(T\) is Cauchy, meaning that it admits a right adjoint as a \(\mathscr{V}\)-module. 

\begin{tcolorbox}
\textbf{\underline{Splitting conjecture:}} Does there exist an order-compatible \(\mathscr{V}\)-enriched object \(B_T\) and \(\mathscr{V}\)-modules \(r:B\nrightarrow B_T\), \(i:B_T\nrightarrow B\) such that
\begin{equation}
r\dashv i,\quad i\circ r=T\quad\textit{and}\quad r\circ i=\operatorname{Id}_{B_T}\hspace{0.1cm}?
\end{equation}
\end{tcolorbox}

Moreover, we should ask whether for every order-compatible \(\mathscr{V}\)-enriched system \(X\), postcomposition induces a natural isomorphism of hom-posets
\begin{equation}
\mathscr{V}-\operatorname{OEnt}\left(X,B_T\right)\cong\operatorname{Alg}_T\left(X\right),
\end{equation}
where \(\operatorname{Alg}_T\left(X\right):=\{x:X\nrightarrow B\hspace{0.1cm}\vert\hspace{0.1cm}T\circ x\leq_{\mathscr{V}}x\}\).

Assuming that this open problem is true, we could equivalently claim that the splitting object \(B_T\) represents all \(T\)-stable costed boundary interfaces. 

Furthermore, \(B_T\) may be chosen within the enriched Cauchy completion of \(B\), and its underlying Boolean support should recover the stable sector determined by the Boolean feasibility interface associated with \(T\).

Let us now analyze the meaning of the assumptions above, since each one has a distinct mathematical and physical role. First, the unit inequality \(\operatorname{Id}_B\leq_{\mathscr{V}}T\) represents the enriched analogy of extensivity. In particular, it says that the native boundary comparison is contained in the comparison generated by the round trip. In the Boolean case, this means
\begin{equation}
b\leq_B b^{\prime}\implies T\left(b,b^{\prime}\right)=\top.
\end{equation}
However, in the strict cost quantale sense, this condition must be interpreted carefully. Since 
\begin{equation}
\alpha\leq_{\mathscr{V}_{cost}}\beta\implies \alpha\geq\beta,
\end{equation}
numerically, a strict unit may require the relevant identity transitions to have zero additional cost. Thus, the strict enriched-monad condition describes a normalized or zero-defect regime. 

From a physical point of view, this is quite restrictive. A realistic thermodynamic round trip may possess an unavoidable positive baseline dissipation, even when no additional information is lost. Therefore, the strict quantale conjecture should be understood as the first normalized form of the problem. Further generalization should replace the strict identity baseline with a prescribed dissipation profile or by a lax/defective unit. 

Proceeding to the idempotence, we presume the equality \(T\circ T=T\) has a direct cost interpretation. For \(\mathscr{V}_{cost}\), it becomes
\begin{equation}\label{eq.235}
T\left(b,b^{\prime}\right)=\inf_{c\in B}\left(T\left(b,c\right)+T\left(c,b^{\prime}\right)\right).
\end{equation}
The right-hand side is the least cost obtained by decomposing the process into two round trips through an intermediate boundary description \(c\). In particular, the effective cost \(T\left(b,b^{\prime}\right)\) is already saturated under decomposition through arbitrary intermediate boundary states. Allowing one further compositional stage does not reduce the infimal effective cost. This is precisely what the idempotence of \(T\) states, i.e. when allowing a second bulk-mediated round trip, it does not produce a cheaper effective transition than the cost already recorded by \(T\). 

In summary, this corresponds to the quantitative analogy of the Boolean statement that a second closure step introduces no new feasible comparisons. Hence, \(T\) already satisfies its own optimality equation. In thermodynamic language, the cost interface has already been saturated under all admissible round-trip concatenations. 

At last, we shall focus on the Cauchy condition. The idempotence alone does not ensure that a \(\mathscr{V}\)-module is represented by an actual enriched object. The Cauchy condition supplies precisely the representability needed for such a splitting. More concretely, a \(\mathscr{V}\)-module is Cauchy when it has a right adjoint in the bicategory of \(\mathscr{V}\)-modules. This is the enriched analogy of an absolute or representable weight. This is precisely what we want because the Cauchy completion is designed to adjoint objects representing such modules. Therefore, in the open problem/conjecture presented above, we do not merely ask whether an abstract idempotent exists. We ask whether the thermodynamic idempotent is sufficiently representable to determine a genuine enriched stable sector.

Summing up all the above considerations, we expect the following mechanism:
\begin{tcolorbox}
\begin{center}
Cauchy idempotent module \(\implies\) splitting in the enriched Cauchy completion \(\implies\) universal thermodynamic stable object
\end{center}
\end{tcolorbox}
However, the subtle point is that the splitting should remain compatible with the entropy/information orders and with the physical interpretation of the costs. This actually implies that a splitting in the unrestricted bicategory of \(\mathscr{V}\)-modules is not by itself enough, so it may lie in the thermodynamically admissible sub-bicategory. De facto, it is possible to formulate a central open problem as a unification of the open problems 2 and 3:
\begin{tcolorbox}
\underline{\textbf{Thermodynamically admissible enriched splitting:}} Let \(T:B\nrightarrow B\) be an idempotent Cauchy \(\mathscr{V}\)-module associated with a costed Landauer structure. Determine conditions under which its enriched Cauchy splitting can be realized inside the order-compatible thermodynamic bicategory \(\mathscr{V}-\operatorname{OEnt}\), while preserving the intended entropy/cost semantics and recovering the Boolean stable quotient under passage to Boolean support.
\end{tcolorbox}

\vspace{0.3cm}
\hspace{-0.7cm}\textbf{Why is this interesting?}

\vspace{0.1cm}
In conclusion, let us now summarize the exact reasons why one should consider the third open problem interesting. First of all, it represents the desired quantitative analogy of Theorem~\ref{thm.6}. The Boolean construction remembers only whether the transition \(b\longrightarrow b^{\prime}\) is feasible. However, the enriched construction should remember how expensive it is, taking into account the entropy/thermodynamic cost. The desired object \(B_T\) would therefore refine the Boolean stable quotient from the round-trip indistinguishability to the round-trip stability together with optimal dissipation. 

The Boolean object \(B_{T_0}\) identifies states that become mutually reachable. On the other hand, the enriched object should distinguish states or profiles when the transition costs between them differ, even if both directions are feasible. In this sense, it would consequently encode strictly more physical information than the ordinary quotient.   

Next, the solution to this open problem would provide a universal meaning to thermodynamic stability. Without a universal object, one can define a stable interface by the inequality \(T\circ x\leq_{\mathscr{V}}x\), but this remains only a condition imposed separately on every probe \(x\). Thereupon, the Eilenberg-Moore property would say that all such probes are represented by one object \(\mathscr{V}-\operatorname{Oent}\left(X,B_T\right)\cong\operatorname{Alg}_T\left(X\right)\). Thus, every \(T\)-stable observation of the boundary would factor uniquely through \(B_T\). Physically, this means that \(B_T\) is not merely one convenient reduction of the boundary,
\begin{tcolorbox}
\begin{center}
\(B_T\) is the universal carrier of all boundary information whose cost profile is already stable under bulk implementation and readout.
\end{center}
\end{tcolorbox}

Returning back to the formula~(\ref{eq.235}), it is an optimization problem for \(T\left(b,b^{\prime}\right)\). By itself, it determines the cheapest bulk mediator for one transition. In the case of the splitting conjecture, we ask whether the entire family of such optimization problems can be organized into a universal compositional object. In other words, the third open problem transforms the optimization into a compositional structure. If the answer to the splitting conjecture is positive, then minimizing the dissipation is not merely an external numerical calculation performed after the categorical construction. It unavoidably becomes part of the universal categorical semantics. From a physical perspective, this is particularly relevant for multistage thermodynamic processes \cite{Aurell2011,Seifert2012}. Technically, the infimal convolution composes costs, while idempotence expresses the stability of the resulting optimum under further composition.  

Another interesting fact about the consequences of the splitting conjecture is that the object \(B_T\) would depend only on \(T\), not on the particular factorization \(T=\Psi\circ\Phi\). It would therefore represent an invariant of the observable bulk-boundary coupling rather than of a chosen microscopic implementation. This mirrors the Boolean conclusion of Theorem~\ref{thm.6}, i.e. the stable sector is not the complete bulk, but the part of the coupling visible at the boundary. The enrichment construction would refine that conclusion by including minimal dissipation data. 

Lastly, the splitting conjecture would provide a mathematically precise bridge to the holographic reconstruction. In a holographic interpretation, several boundary descriptions may encode the same bulk logical information, but their reconstruction costs need not coincide. A Boolean quotient identifies equivalent reconstructions and completely forgets this energetic distinction. However, the enriched stable object could simultaneously record the following:
\begin{enumerate}
    \item which boundary descriptions reconstruct the same sector,
    \item the minimal thermodynamic cost of moving between their reconstructions.
\end{enumerate}
This would eventually yield a substantially richer form of bulk-visible reconstruction, and it suggests that the relevant holographic object need not be only a quotient of boundary states. It may be an enriched space of stable reconstruction profiles.

\section{Conclusion and outlook}\label{sec.4}

We develop a compositional formulation of the order-theoretic Landauer connection using the locally posetal bicategory \(\operatorname{OEnt}\). Its objects are entropy-ordered state spaces, its \(1\)-cells are Boolean-valued profunctors encoding open, many-to-many implementation interfaces, and its \(2\)-cells correspond to pointwise refinements. Composition is given by profunctorial composition and therefore models the gluing of implementation stages while existentially hiding intermediate thermodynamic degrees of freedom. 

Within this bicategorical framework, we formulate the Landauer connection as an exact adjunction
\begin{equation}
\Phi\dashv\Psi,\quad \Phi:B\nrightarrow D,\quad
\Psi: D\nrightarrow B,
\end{equation}
with 
\begin{equation}
\operatorname{Id}_B\leq\Psi\circ\Phi,\quad \Phi\circ\Psi\leq\operatorname{Id}_D.
\end{equation}
The induced composites \(T:=\Psi\circ\Phi\) and \(U:=\Phi\circ\Psi\) define, respectively, the boundary Landauer monad and the dual bulk comonad. In particular, 
\begin{equation}
\operatorname{Id}_B\leq T,\quad T\circ T=T,
\end{equation}
so the boundary round trip defines an idempotent enlargement of the native boundary accessibility relation. 

Our second result concerns representability. Although \(\operatorname{OEnt}\) contains genuinely non-representable open interfaces, every exact Landauer adjunction lies in the representable sector. Specifically, there exists a unique monotone map \(F:B\longrightarrow D\) such that \(\Phi=F_*\) and \(\Psi=F^*\). Hence, the passage from monotone maps to arbitrary profunctors enlarges the ambient space of admissible implementations, whereas the exact adjunction condition selects a distinguished deterministic subclass recovering the classical order-theoretic Landauer connection.

From here, we obtain our main reconstruction theorem (see Theorem~\ref{thm.6}). Using the boundary monad \(T=\Psi\circ\Phi\) and defining 
\begin{equation}
b\preceq_T b^{\prime}\iff T\left(b,b^{\prime}\right)=\top
\end{equation}
and
\begin{equation}
b\sim_T b^{\prime}\iff b\preceq_T b^{\prime}\land b^{\prime}\preceq_T b,
\end{equation}
we construct the quotient poset \(B_T:=B/{\sim_{T}}\). Then we prove that this quotient explicitly splits the idempotent
\begin{equation}
i\circ r=T,\quad r\circ i=\operatorname{Id}_{B_T},
\end{equation}
for canonical interfaces
\begin{equation}
r:B\nrightarrow B_T,\quad i:B_T\nrightarrow B.
\end{equation}
Moreover, we prove that
\begin{equation}
\operatorname{OEnt}\left(X,B_T\right)\cong\operatorname{Alg}_T\left(X\right)
\end{equation}
naturally in \(X\). Therefore, \(B_T\) is simultaneously the quotient by round-trip indistinguishability, the splitting object of \(T\), and the Eilenberg-Moore object of the Landauer monad. Therefore, \(B_T\) represents the boundary information that remains stable under bulk-mediated implementation and readout. 

However, it is necessary to be careful here, since this reconstruction should not be interpreted as a recovery of the complete microscopic bulk \(D\). Rather, \(B_T\) is the bulk-visible boundary sector determined by the chosen Landauer coupling. When considering the representable case, the interpretation becomes especially clear because
\begin{equation}
b\sim_T b^{\prime}\iff F\left(b\right)=F\left(b^{\prime}\right).
\end{equation}
Thus \(B_T\cong\operatorname{Im}\left(F\right)\). Subsequently, the reconstructed sector is obtained by identifying exactly those boundary distinctions that are invisible to the implementation map. 

Next, through finite examples, we illustrate both aspects of the present theory. One exhibits a genuinely non-representable open interface with incomparable minimal feasible bulk realizations, while the other explicitly computes an exact Landauer adjunction, its boundary monad, the induced equivalence classes, and the reconstruction interfaces \(r\), \(i\).

Considering our future aims, the principal direction is the quantitative enrichment. Replacing \(\mathbf{Bool}\) with a suitable quantale, in particular, the Lawvere cost quantale \(\mathscr{V}_{cost}=\left([0,\infty],\geq,+,0\right)\), allows an interface to encode minimal dissipation rather than mere feasibility. The composition operation then becomes an infimal convolution. For example,
\begin{equation}
T_{cost}\left(b,b^{\prime}\right)=\inf_{d\in D}\bigl(\Phi_{cost}\left(b,d\right)+\Psi_{cost}\left(d,b^{\prime}\right)\bigr).
\end{equation}

It is quite evident that a strict enriched adjunction is generally too restrictive physically, because the reversed Lawvere order may force zero-cost round-trips. A more realistic theory should therefore separate feasibility from cost or replace exact adjunctions by lax, defective or cost-bounded variants.

The second open direction that we shall explore is the enriched analogue of the reconstruction theorem. One would like to determine when an idempotent costed endo-interface admits an Eilenberg-Moore object or a Cauchy splitting inside an order-compatible, thermodynamically admissible \(\mathscr{V}\)-enriched bicategory. Unlike the Boolean case, the corresponding stable object need not be an ordinary quotient of states. Instead, it may retain nontrivial profiles with transition costs. Ideally, its Boolean support should completely recover the stable quotient \(B_T\).

Summarizing our present work, we establish the order-theoretic layer of a possible broader compositional Landauer theory. Structurally, the Boolean profunctors encode feasibility, exact adjunctions determine the representable Landauer sector, and the induced idempotent round trip selects a universal bulk-visible boundary object. Future extension of the structure to quantitative dissipation and thermodynamically admissible enriched reconstruction shall provide a natural direction for further research development.

\newpage
\textbf{Acknowledgments}

\vspace{0.1cm}

I would like to mention several sources of support for my research. The first of them is the grant project Cartan supergeometries and Higher Cartan geometries No. 24-10887S, provided by the Czech Science Foundation. Additionally, I acknowledge local support from the project for Specific Research in Mathematics MUNI/A/1457/2023 for doctoral students, provided by the Department of Mathematics and Statistics at Masaryk University.

Finally, I would also like to express my gratitude to Radosław Antoni Kycia, PhD, who introduced me to the basics of this topic and provided useful recommendations for this manuscript, including potential applications of the main results.

\bibliographystyle{plainnat}
\bibliography{cas-refs}

\end{document}